\renewcommand{\paragraph}[1]{\vspace{.03in}\noindent\textbf{{#1}~~~}}
\renewcommand{\qedsymbol}{{\footnotesize{$\blacksquare$}}}
\lstdefinestyle{customc}{
  belowcaptionskip=1\baselineskip,
  breaklines=true,
  xleftmargin=\parindent,
  language=C,
  showstringspaces=false,
  basicstyle=\ttfamily,
  keywordstyle=\bfseries\ttfamily\color{keywordcolor},
  commentstyle=\itshape\color{black},
  identifierstyle=\ttfamily\color{black},
  stringstyle=\itshape\color{NavyBlue},
  keywords={ map, flatMap, reduce, then, in, if, else, reduceByKey},
moreattributes={let, where}, 
attributestyle = \bfseries\ttfamily\color{attributecolor}
}
\tt\color{gray}
\setlist[description]{
   labelindent=.3cm,
   style=unboxed,
   leftmargin=.3cm,
   topsep=2pt,
   itemsep=1ex
}
\newcommand{\aws}[1]{{\color{brown}{\noindent\textsf{\textbf{Aws}---#1}}}}
 \newcommand{\jh}[1]{{\color{blue}{\noindent\textsf{\textbf{Justin}---#1}}}}
 \newcommand{\note}[1]{{\color{magenta}{\noindent\textsf{\textbf{Note}---#1}}}}
 \definecolor{mypink3}{cmyk}{0, 0.7808, 0.4429, 0.1412}
\newcommand{\maybe}[1]{{\color{mypink3} \textsf{\textbf{Remove?}---#1}}}
 \renewcommand{\aws}[1]{}
 \renewcommand{\jh}[1]{}
 \renewcommand{\note}[1]{}
 \renewcommand{\maybe}[1]{}
\definecolor{lightgray}{gray}{0.9}
\definecolor{midgray}{gray}{0.65}
\definecolor{darkgray}{gray}{0.4}
\newcommand{\abr}[1]{\textsc{\MakeLowercase{#1}}}
\newcommand{\abrs}[1]{\abr{#1}{\footnotesize{s}}\xspace}
\renewcommand{\vec}[1]{\boldsymbol{#1}}
\renewcommand{\leq}{\leqslant}
\renewcommand{\geq}{\geqslant}
\newcommand{\rone}{(\emph{i})~}
\newcommand{\rtwo}{(\emph{ii})~}
\newcommand{\rthree}{(\emph{iii})~}
\newcommand{\rfour}{(\emph{iv})~}
\definecolor{keywordcolor}{gray}{0.0}
\definecolor{attributecolor}{gray}{0.0}
\definecolor{wildcolor}{gray}{0.8}
\newcommand\earr\hookrightarrow
\newcommand{\wild}{{\scriptsize \color{black} \CIRCLE}}
\let\oldwild\wild
\renewcommand{\wild}{%
  \mathchoice{\raisebox{.4pt}{$\displaystyle\oldwild$}}
             {\raisebox{.5pt}{$\oldwild$}}
             {\raisebox{0.5pt}{$\scriptstyle\oldwild$}}
             {\raisebox{0.2pt}{$\scriptscriptstyle\oldwild$}}}
\newcommand{\dpriv}{\abr{DP}\xspace}
\newcommand{\edp}{$\epsilon$-\dpriv}
\newcommand{\aset}{B}
\newcommand{\aelem}{b}
\newcommand{\dist}{\mathit{dist}}
\newcommand{\sdist}{\mathit{dist}_\downarrow}
\newcommand{\supp}{\mathit{supp}}
\newcommand{\aspace}[1]{\leftrightsquigarrow^{#1}}
\newcommand{\pr}[1]{\mathrm{Pr}[#1]}
\newcommand{\cpred}{\Pi}
\newcommand{\prog}{P}
\newcommand{\vars}{V}
\newcommand{\states}{S}
\newcommand{\varsi}{\vars^i}
\newcommand{\varsl}{\vars^l}
\newcommand{\locs}{L}
\newcommand{\lentry}{{\ell_\emph{en}}}
\newcommand{\lexit}{{\ell_\emph{ret}}}
\newcommand{\edges}{E}
\newcommand{\retv}{{v_r}}
\newcommand{\retva}{{v_{r_1}}}
\newcommand{\retvb}{{v_{r_2}}}
\newcommand{\stmt}{\mathit{s\!t}}
\newcommand{\stmts}{\emph{Stmts}}
\newcommand{\expr}{\mathit{exp}}
\newcommand{\bexpr}{\mathit{bexp}}
\newcommand{\dexpr}{\mathit{dexp}}
\newcommand{\iexpr}{{\mathit{iexp}}}
\newcommand{\lap}{\mathsf{Lap}}
\newcommand{\olap}{\mathsf{Exp}}
\newcommand{\assume}{\mathsf{assume}}
\newcommand{\sem}[1]{\llbracket #1\rrbracket}
\newcommand{\adj}{\Delta}
\newcommand{\dom}{D}
\newcommand{\trace}{\sigma}
\newcommand{\traces}{\Sigma}
\newcommand{\cpost}{\mathit{post}}
\newcommand{\eps}\epsilon
\newcommand{\domain}[1]{\mathit{dom}(#1)}
\newcommand{\cinit}{\textsc{init}}
\newcommand{\cassign}{\textsc{assign}}
\newcommand{\casmconst}{\textsc{assume}}
\newcommand{\casmsync}{\textsc{assume-s}}
\newcommand{\ccouple}{\textsc{couple}}
\newcommand{\cstrategy}{\textsc{strat}}
\newcommand{\cdp}{\textsc{dpriv}}
\newcommand{\strategy}{\mathit{strat}}
\newcommand{\clss}{\mathcal{C}}
\newcommand{\cls}{C}
\newcommand{\inv}{\emph{inv}}
\newcommand{\ret}{\iota}
\newcommand{\distance}{\theta}
\newcommand{\cvars}{\vec{v}}
\newcommand{\costvar}{\omega}
\newcommand{\rels}{R}
\newcommand{\rel}{r}
\newcommand{\interp}{\rho}
\newcommand{\hmc}{\abr{HMC}\xspace}
\newcommand{\enc}[1]{#1}
\newcommand{\head}{H}
\newcommand{\strp}{\tau}
\newcommand{\fs}{F}
\newcommand{\f}{f}
\algrenewcommand\algorithmicfunction{\textsf{\textbf{fun}}}
\newcommand{\dunit}{\mathit{dunit}}
\newcommand{\dbind}{\mathit{dbind}}
\newcommand{\dzero}{\mathit{dzero}}
\newcommand{\pre}{\emph{pre}}
\newcommand{\true}{\mathit{true}}
\newcommand{\trans}{\textsc{enc}}
\renewcommand{\epsilon}{\varepsilon}
\crefname{section}{\S}{\S}
\Crefname{section}{\S}{\S}
\crefname{figure}{Fig.}{Fig.}
\crefname{table}{Table}{Table}
\crefname{definition}{Def.}{Def.}
\crefname{theorem}{Thm.}{Thm.}
\crefname{lemma}{Lem.}{Lem.}
\crefname{example}{Ex.}{Ex.}
\begin{document}

\title{Synthesizing Coupling Proofs of Differential Privacy}         
\iftoggle{long}{}{\titlenote{The extended version is available at \url{https://arxiv.org/abs/1709.05361}.}}


\author{Aws Albarghouthi}
\affiliation{%
  \institution{University of Wisconsin--Madison}            
  \department{Computer Sciences Department}
  \streetaddress{1210 West Dayston St.}
  \city{Madison}
  \state{WI}
  \postcode{53706}
  \country{USA}}
\email{aws@cs.wisc.edu}          

\author{Justin Hsu}
\orcid{0000-0002-8953-7060}             
\affiliation{%
  \institution{University College London}           
  \city{London}
  \country{UK}}
\email{email@justinh.su}         


\begin{abstract}
\emph{Differential privacy} has emerged as
a promising probabilistic formulation of privacy, generating
intense interest within academia and industry.
We present a push-button, automated
 technique for verifying $\eps$-differential privacy
of sophisticated randomized algorithms.
We make several conceptual, algorithmic,
and practical contributions:
\rone Inspired by the recent advances on \emph{approximate couplings} and
\emph{randomness alignment},
we present a new proof technique called \emph{coupling strategies},
which casts differential privacy proofs as a winning strategy
in a game where we have finite privacy resources to expend.
\rtwo To discover a \emph{winning strategy},
we present a constraint-based formulation of the problem
as a set of \emph{Horn modulo couplings} (\hmc) constraints,
a novel combination of
first-order Horn clauses and probabilistic constraints.
\rthree We present a technique for solving \hmc constraints by
transforming probabilistic constraints into logical constraints
with uninterpreted functions.
\rfour Finally, we implement our technique in the FairSquare verifier and provide
the first automated privacy proofs for a number of challenging algorithms
from the differential privacy literature, including
Report Noisy Max, the Exponential Mechanism,
and the Sparse Vector Mechanism.
\end{abstract}

\begin{CCSXML}
<ccs2012>
<concept>
<concept_id>10002978.10002986.10002990</concept_id>
<concept_desc>Security and privacy~Logic and verification</concept_desc>
<concept_significance>500</concept_significance>
</concept>
<concept>
<concept_id>10003752.10003790.10003806</concept_id>
<concept_desc>Theory of computation~Programming logic</concept_desc>
<concept_significance>500</concept_significance>
</concept>
</ccs2012>
\end{CCSXML}

\ccsdesc[500]{Security and privacy~Logic and verification}
\ccsdesc[500]{Theory of computation~Programming logic}

\keywords{Differential Privacy, Synthesis}  

\maketitle



\section{Introduction} \label{sec:intro}

As more and more personal information is aggregated into massive databases, a
central question is how to safely use this data while protecting privacy.
\emph{Differential privacy}~\citep{DMNS06} has recently emerged as
one of the most promising formalizations of privacy. Informally, a randomized
program is differentially private if on any two input databases differing in a
single person's private data, the program's output distributions
are \emph{almost} the same; intuitively, a private program shouldn't
depend (or reveal) too much on any single individual's record. Differential privacy
models privacy quantitatively, by bounding how much the output distribution can
change.

Besides generating intense interest in fields like machine learning, theoretical
computer science, and security, differential privacy has proven to be a
surprisingly fruitful
target for formal verification. By now, several verification techniques can
\emph{automatically} prove privacy given a lightly annotated program; examples
include linear type systems~\citep{ReedPierce10,GHHNP13} and various flavors of
dependent types~\citep{BGGHRS15,zhang2016autopriv}.
For verification, the key feature of differential privacy is \emph{composition}:
private computations can be combined into more complex algorithms while
automatically satisfying a similar differential privacy guarantee.
Composition properties make it easy to design new private algorithms, and also
make privacy feasible to verify.

While composition is a powerful tool for proving privacy, it often
gives an overly conservative estimate of the privacy level for more sophisticated algorithms.
Verification---in particular, automated verification---has been far more
challenging for these examples. However, these algorithms are highly important
to verify since their proofs are quite subtle. One example, the \emph{Sparse
  Vector} mechanism~\citep{dwork2009complexity}, has been proposed in slightly different
forms at least six separate times throughout the literature. While there were
proofs of privacy for each variant, researchers later discovered that only two
of the proofs were correct \citep{lyu2016understanding}. Another example, the
\emph{Report Noisy Max} mechanism~\citep{dwork2014algorithmic}, has also suffered
from flawed privacy proofs in the past.

\subsection{State-of-the-art in Differential Privacy Verification}
Recently, researchers have developed techniques to verify private algorithms
beyond composition: \emph{approximate couplings}~\citep{BGGHS16} and
\emph{randomness alignment}~\citep{zhang2016autopriv}.

\begin{description}
  \item[Approximate Couplings.]
  Approximate couplings are a generalization of \emph{couplings}~\citep{lindvall2002lectures} from probability
  theory. Informally, a coupling models two distributions with a single
  correlated distribution, while an approximate coupling approximately models the
  two given distributions. Given two output distributions of a
  program, the existence of a coupling with certain properties can imply target
  properties about the two output distributions, and about the program itself.

  Approximate couplings are a rich
  abstraction for reasoning about differential privacy, supporting clean,
  compositional proofs for many examples beyond the reach of the
  standard composition theorems for differential privacy.
  \citet{BGGHS16} first explored approximate couplings for proving differential
  privacy, building approximate couplings in a relational program logic
  apRHL where the rule for the random sampling command selects an
  approximate coupling.
  While this approach is quite expressive, there are two major drawbacks:
  \rone Constructing proofs requires significant manual effort---existing
  proofs are formalized in an interactive proof assistant---and selecting the
  correct couplings requires considerable ingenuity.
  \rtwo Like Hoare logic, apRHL does not immediately lend itself to a systematic
  algorithmic technique for finding invariants,
  making it a challenging target for automation.
  \item[Randomness Alignment.]
  In an independent line of work, \citet{zhang2016autopriv} used a technique called
  \emph{randomness alignment} to prove differential privacy. Informally, a
  randomness alignment for two distributions is an injection pairing the samples
  in the first distribution with samples in the second distribution, such that
  paired samples have approximately the same probability.
  \citet{zhang2016autopriv} implemented their approach in a semi-automated
  system called LightDP, combining a dependent type system,  a custom
  type-inference algorithm to search for the desired randomness alignments, and a
  product program
  construction. Notably, LightDP can analyze standard examples as well as more
  advanced examples like the Sparse Vector mechanism.  While LightDP is an
  impressive achievement, it, too has a few shortcomings:
  \rone  While randomness alignments can be inferred
  by LightDP's sophisticated type inference algorithm, the full analysis depends
  on a product program construction with manually-provided loop invariants.
  \rtwo Certain examples provable by approximate couplings seem to lie beyond
  the reach of LightDP.
  \rthree Finally, the analysis in LightDP is technically complex,
  involving three kinds of program analysis.
\end{description}

While the approaches seem broadly similar, their precise relation remains unclear.
Proofs using randomness alignment are often simpler---critical for
automation---while approximate couplings seem to be a more expressive proof
technique.

\subsection{Our Approach: Automated Synthesis of Coupling Proofs}
We aim to blend the best features of both approaches---expressivity and
automation---enabling fully automatic construction of coupling proofs. To do so, we
\rone present a novel formulation of proofs of differential privacy via couplings,
\rtwo present a constraint-based technique for discovering coupling proofs, and
\rthree discuss how to solve these constraints using techniques
from automated program synthesis and verification.
We describe the key components and novelties of our approach below, and
illustrate the overall process in \cref{fig:overview}.

\begin{figure}[t]
\includegraphics[scale=1.15]{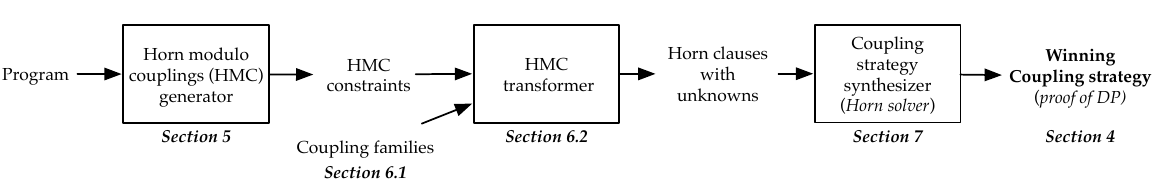}
\vspace{-.1in}
\caption{Overview of our coupling strategy synthesis methodology}
\label{fig:overview}
\end{figure}

\paragraph{Proofs via Coupling Strategies.}
The most challenging part of a coupling proof is selecting an appropriate
coupling for each sampling instruction. Our first insight is that we can
formalize the proof technique as discovering a \emph{coupling strategy}, which
chooses a coupling for each part of the program while ensuring that the
couplings can be composed together. A \emph{winning} coupling strategy proves
that the program is differentially private.

To show soundness, we propose a novel, fine-grained version of approximate
couplings---\emph{variable approximate couplings}---where the privacy level can
vary depending on the pair of states. We show that a winning coupling strategy
encodes a variable approximate coupling of the two output distributions, establishing
differential privacy for the program. To encode more complex proofs, we
give new constructions of variable approximate couplings inspired by the
randomness alignments of \citet{zhang2016autopriv}.

\paragraph{Horn Modulo Couplings.}
To automatically synthesize coupling strategies, we describe
winning coupling strategies as a set of constraints. Constraint-based techniques
are a well-studied tool for
synthesizing loop invariants, rely-guarantee proofs, ranking
functions, etc. For instance, \emph{constrained Horn clauses} are routinely
used as a first-order-logic representation of proof rules~\citep{bjorner2015horn,grebenshchikov2012}.
The constraint-based style of verification cleanly decouples the
theoretical task of devising the proof rules from the algorithmic task of
solving the constraints.

However, existing constraint systems are largely geared towards deterministic
programs---it is not clear how to encode randomized algorithms and the
differential privacy property.
We present a novel system of constraints
called \emph{Horn modulo couplings} (\hmc), extending Horn clauses with
probabilistic \emph{coupling constraints} that use first-order relations to
encode approximate couplings. \hmc constraints can describe
winning coupling strategies.

\paragraph{Solving Horn Modulo Couplings Constraints.}
Solving \hmc constraints is quite challenging, as they combine
logical and probabilistic constraints. To simplify the task, we
transform probabilistic coupling constraints into logical constraints with
unknown expressions. The key idea is that we
can restrict coupling strategies to use known approximate
couplings from the privacy literature, augmented with a few new constructions we
propose in this work.

Our transformation yields a constraint of the form $\exists f \ldotp \forall x
\ldotp \phi$, read as: there exists a strategy $f$ such that for all inputs $x$,
the program is differentially private. We employ established techniques from
synthesis and program verification to solve these simplified constraints.
Specifically, we use \emph{counterexample-guided inductive synthesis}
(\abr{CEGIS})~\citep{DBLP:conf/asplos/Solar-LezamaTBSS06} to discover a strategy
$f$, and \emph{predicate abstraction}~\citep{graf1997construction} to prove that
$f$ is a winning strategy.

\paragraph{Implementation and Evaluation.}
We have implemented our technique in the FairSquare probabilistic verification
infrastructure~\citep{albarghouthi17} and used it to automatically prove
$\epsilon$-differential privacy of a number of algorithms from the literature
that have so far eluded automated verification. For example, we give the first fully
automated proofs of Report Noisy Max~\citep{dwork2014algorithmic}, the discrete
exponential mechanism~\citep{MT07}, and
the Sparse Vector mechanism~\citep{lyu2016understanding}.

\subsection{Outline and Contributions}

After illustrating our technique on a motivating example
(\cref{sec:example}), we present our main  contributions.
\begin{itemize}
  \item We introduce \emph{coupling strategies} as a way of representing
    coupling proofs of $\eps$-differential privacy (\cref{sec:strategies}). To
    prove soundness, we develop a novel generalization of approximate couplings
    to support more precise reasoning about the privacy level (\cref{sec:prelims}).

  \item We introduce \emph{Horn modulo couplings}, which enrich first-order
    Horn clauses with probabilistic coupling constraints.  We show how to reduce
    the problem of proving differential privacy to solving Horn modulo
    couplings constraints (\cref{sec:hcc}).

  \item We show how to automatically solve Horn modulo couplings constraints by
    transforming probabilistic coupling constraints into logical constraints
    with unknown expressions (\cref{sec:couplings}).

  \item We use automated program synthesis and verification techniques to
    implement our system, and demonstrate its ability to efficiently
    and automatically prove privacy of a number of differentially private
    algorithms from the literature (\cref{sec:evaluation}).
\end{itemize}
Our approach marries ideas from probabilistic relational logics
and type systems with constraint-based verification and synthesis; we compare
with other verification techniques (\cref{sec:related}).


\section{Motivation and Illustration}\label{sec:example}

\begin{wrapfigure}[17]{c}{.32\linewidth}
  \vspace{-.18in}
\begin{subfigure}[c]{0.30\textwidth}
  \centering
  \footnotesize
\hspace{.2in}\includegraphics[scale=1.3]{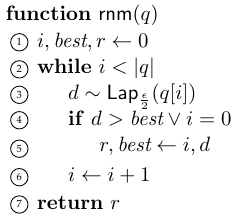}
\end{subfigure}

\vspace{.15in}

\begin{subfigure}[c]{0.30\textwidth}
  \footnotesize
\hspace{.2in}\includegraphics[scale=.15]{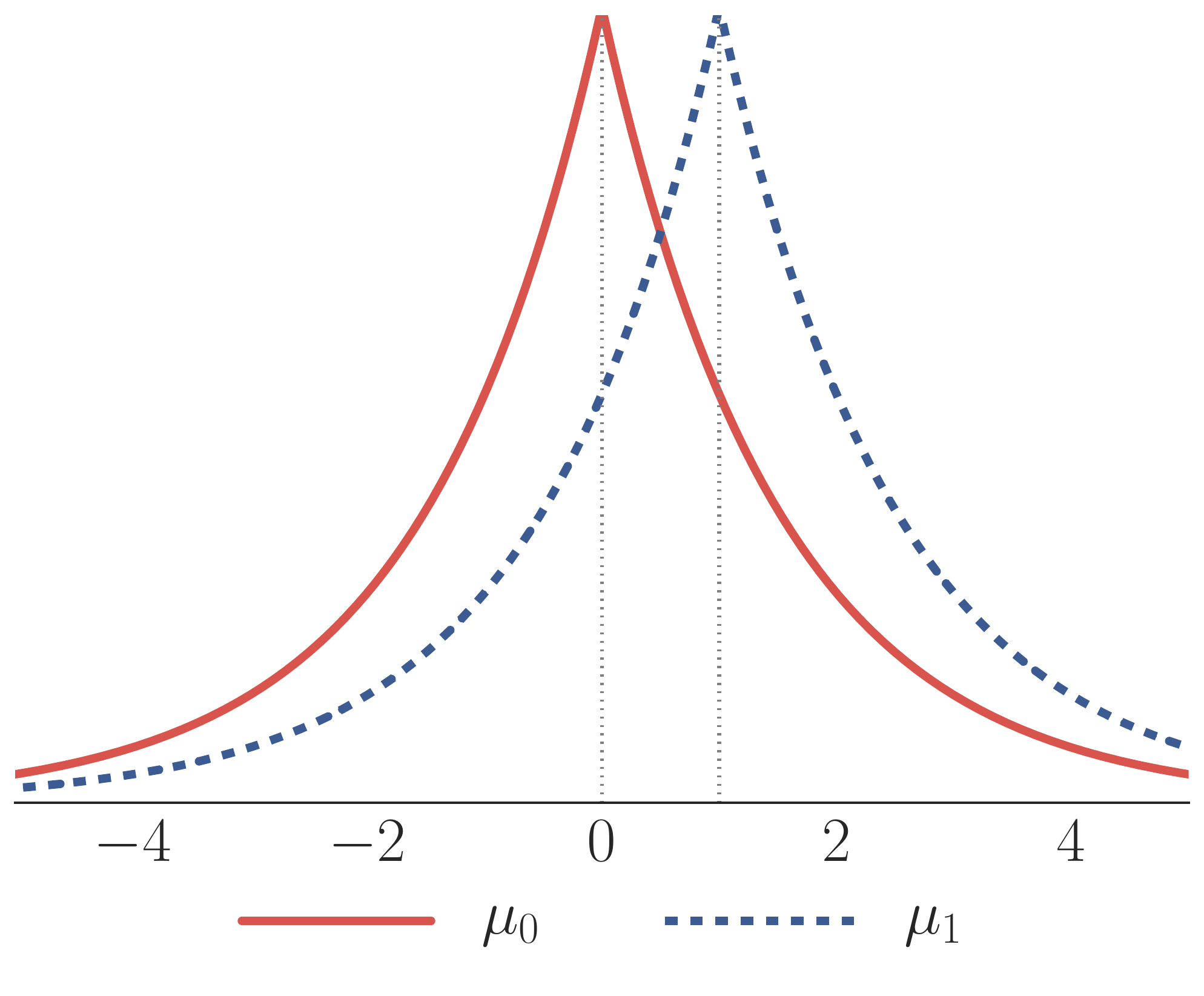}
\end{subfigure}
\vspace{-.1\in}
\caption{(top) Report Noisy Max.
  (bot) Laplace distrs.}\label{fig:ex}
\end{wrapfigure}

In this section, we demonstrate our verification technique on a
concrete example.
%
Our running example is an algorithm from the differential privacy literature
called \emph{Report Noisy Max}~\citep{dwork2014algorithmic}, which finds the
query with the highest value in a list of counting queries.  For instance, given
a database of medical histories, each query in the list could count the
occurrences of a certain medical condition.  Then, Report Noisy Max would reveal
the condition with approximately the most occurrences, adding random noise to
protect the privacy of patients.

We implement this algorithm in
\cref{fig:ex}~(top) as the program \textsf{rnm}. Given an array $q$ of numeric
queries of size $|q|$, \textsf{rnm} evaluates each query, adds noise to the
answer, and reports the index of the query with the largest noisy answer.
To
achieve $\epsilon$-differential privacy (\edp, for short), \emph{Laplacian
  noise} is added to the result of each query in the first statement of the
while loop. (\cref{fig:ex} (bottom) shows two Laplace distributions.)
While the code is simple, proving \edp is anything but---the only existing
formal proof uses the probabilistic relational program logic apRHL and was
carried out in an interactive proof assistant \citep{BGGHS16}.

\subsection{Proof by approximate coupling, intuitively} \label{ssec:coupling}
Differential privacy is a \emph{relational} property of programs (also known as
a \emph{hyperproperty}) that compares the output distributions on two different
inputs, modeling two hypothetical versions of the private data that should be considered
indistinguishable. In the case of \textsf{rnm}, the private data is simply the
list of query answers $q[i]$; we will assume that two input states $(s_1, s_2)$
are similar (also called \emph{adjacent}) when each query answer differs by at
most $1$ in the two inputs; formally,  $\forall j \in [0,n) \ldotp |q_1[j] - q_2[j]|
\leq 1$, where $q_1,q_2$ are copies of $q$ in the adjacent states,
and $|q_1|=|q_2|=n$.

Adding to the verification challenge,
differential privacy is also a \emph{quantitative} property, often parameterized
by a number $\epsilon$.  To prove $\epsilon$-differential privacy, we must show
that the probability of any final output is approximately the same in \emph{two
  executions} of the program from any two adjacent inputs. The
degree of similarity---and the strength of the privacy guarantee---is governed
by $\epsilon$: smaller values of $\epsilon$ guarantee more similar
probabilities, and yields stronger privacy guarantees.  Formally, \textsf{rnm} is
\edp if
\begin{displayquote}
  \emph{for every pair of adjacent inputs $q_1$
    and $q_2$, every possible output $j$, and  every $\eps > 0$, }
  \[
    \pr{\textsf{rnm}(q_1) = j} \leq \exp(\eps) \cdot \pr{\textsf{rnm}(q_2) = j}
  \]
\end{displayquote}
While it could theoretically be possible to analyze the two executions
separately and then verify the inequality, this approach is highly complex.
Instead, let us imagine that
we step through the two executions side-by-side while tracking the two states.
Initially, we have the input states $(s_1, s_2)$. A deterministic instruction
simply updates both states, say to $(s_1', s_2')$.

Random sampling instructions $x \sim \mu$ are more challenging to handle.  Since
we are considering two sampling instructions, in principle we need to consider
all possible pairs of samples in the two programs. However, since we eventually
want to show that the probability of some output in the first run is close to
the probability of that same output in the second run, we can imagine
\emph{pairing} each result from the first sampling instruction with a
corresponding result from the second sampling instruction, with approximately
the same probabilities. This pairing yields a set of paired
states; intuitively, one for each possible sample. By flowing this set
forward through the program---applying deterministic instructions, selecting how
to match samples for sampling instructions, and so on---we end up pairing every
possible output state in the first execution with a corresponding output state
in the second execution with approximately the same probability.

Proving differential privacy, then, boils down to cleverly finding a pairing for
each sampling instruction in order to ensure that all paired output states are
related in some particular way. For instance, if whenever the first output state
has return value $5$ its matching state also has return value $5$, then the
probability of returning $5$ is roughly the same in both distributions.

\paragraph{Approximate Couplings.}
This idea for proving differential privacy can be formalized as a \emph{proof by
  approximate coupling}, inspired by the \emph{proof by coupling} technique from
probability theory (see, e.g., \citet{lindvall2002lectures}). An approximate
coupling for two probability distributions $\mu_1, \mu_2$ is a relation $\Lambda \subseteq
\states \times \states \times \mathds{R}$ attaching a non-negative real number
$c$ to every pair of linked (or \emph{coupled}) states $(s_1, s_2)$.
The parameter $c$ bounds how far apart the probability of $s_1$ in
$\mu_1$ is from the probability of $s_2$ in $\mu_2$: when $c = 0$ the two
probabilities must be equal, while a larger $c$ allows greater differences in
probabilities.
This parameter may grow as our
analysis progresses through the program, selecting approximate couplings for the
sampling instructions as we go. If we have two states $(s_1, s_2)$ that have
probabilities that are bounded by $c$ and we then pair the results from the next
sampling instruction $(v_1, v_2)$ with probabilities bounded by $c'$, the
resulting paired states will have probabilities bounded by $c + c'$.  We can
intuitively think of $c'$ as a \emph{cost} that we need to \emph{pay} in order
to pair up the samples $(v_1, v_2)$ from two sampling instructions.

To make this discussion  more concrete, suppose we sample from the two Laplace
distributions in \cref{fig:ex} (bottom) in the two runs. This figure depicts
the probability density functions (\abr{PDF}) of two different \emph{Laplace
  distributions} over the integers $\mathds{Z}$:%
  \footnote{%
  For visualization, we use the continuous Laplace
  distribution; formally, we assume the discrete Laplace distribution.}
  $\mu_0$ (red; solid) is
centered at $0$ and  $\mu_1$ (blue; dashed) is centered at $1$.
Then, the relation
\[
\Lambda_= = \{(x,x,c) \mid x \in \mathds{Z}\}
\]
is a possible approximate coupling for the two distributions, denoted
$\mu_0\aspace{\Lambda_=}\mu_1$, where the cost $c$ depends on the
width of the Laplace distributions. The coupling $\Lambda_=$ pairs every sample
$x \in \mathds{Z}$ from $\mu_0$ with the same sample $x$ from $\mu_1$. These
samples have different probabilities---since $\mu_0$ and
$\mu_1$ are not equal---so the coupling records that the two probabilities
are within an $\exp(c)$ multiple of one another.\footnote{%
  When $c$ is small, this is roughly a $(1 + c)$ factor.}

In general, two distributions have multiple possible approximate couplings.
Another valid coupling for $\mu_0$ and $\mu_1$ is
\[
  \Lambda_{+1} = \{(x,x+1,0) \mid x \in \mathds{Z}\} .
\]
This coupling pairs each sample $x$ from $\mu_0$ with the
sample $x+1$ from $\mu_1$. Since $\mu_1$ is just $\mu_0$ shifted by $1$, we know
that the probabilities of sampling $x$ from $\mu_0$ and $x+1$ from $\mu_1$ are
in fact \emph{equal}. Hence, each pair has cost $0$ and we incur no cost from
selecting this coupling.

\paragraph{Concrete Example.}
Let us consider the concrete case of \textsf{rnm} with adjacent inputs $q_1 =
[0,1]$ and $q_2 = [1,0]$, and focus on one possible return value $j = 0$. To
prove the probability bound for output $j = 0$, we need to choose couplings for
the sampling instructions such that every pair of linked output states has cost
$c \leq \epsilon$, and if the first output is $0$, then so is the second output.
In other words, the coupled outputs should satisfy the relation $r_1 = 0
\implies r_2 = 0$.
%

In the first iteration, our analysis arrives at the sampling instruction $d \sim
\lap_{\eps/2}(q[0])$, representing a sample from the Laplace distribution with
mean $q[0]$ and \emph{scale} $2/\eps$.  Since $q_1[0]$ and $q_2[0]$ are $1$
apart, we can couple the values of $d_1$ and $d_2$ with the first coupling
\[
  \Lambda_= = \{(x,x,\eps/2) \mid x \in \mathds{Z}\} .
\]
By paying a cost $\eps/2$, we can assume that $d_1 = d_2$; effectively,
applying the coupling $\Lambda_=$ as replaces the probabilistic
assignments into a single assignment, where the two processes
non-deterministically choose a coupled pair of values from $\Lambda_=$.
Therefore, the
processes enter the same branch of the conditional and always set $r$ to the
same value---in particular, if $r_1 = 0$, then $r_2 = 0$, as desired.
Similarly, in the
second iteration, we can select the same coupling $\Lambda_=$ and pay $\eps/2$.
This brings our total cost to $\eps/2 + \eps/2 = \eps$ and ensures that both
processes return the same value $r_1 = r_2$---in particular, if $r_1 = 0$ then
$r_2 = 0$, again establishing the requirement for \edp for output $j = 0$.

This assignment of a coupling for each iteration is an example of a
\emph{coupling strategy}: namely, our simple strategy selects $\Lambda_=$ in
every iteration. This strategy also applies for more general inputs, but
possibly with a different total cost. For instance if we consider larger inputs, e.g.,
with $|q| = 3$, we will have to pay $\eps/2$ in three
iterations, and therefore we cannot conclude \edp.  Indeed, this strategy only
establishes $\frac{|q|\eps}{2}$-differential privacy in general, as we pay
$\eps/2$ in each of the $|q|$ loop iterations. Since larger parameters
correspond to looser differential privacy guarantees, this is a weaker property.
However, Report Noisy Max is in fact \edp for arrays of any length. To
prove this stronger guarantee, we need a more sophisticated coupling strategy.

We call coupling strategies that establish \edp \emph{winning couplings
  strategies}.
Intuitively, the verifier plays a game against the two processes:
The processes play by making non-deterministic choices,
while the verifier plays by selecting and paying for approximate couplings
of sampling instructions, aiming to stay under a ``budget'' $\eps$.

\subsection{Synthesizing winning coupling strategies}\label{ssec:details}


%
Let us consider a different coupling strategy for \textsf{rnm}.
Informally, we will focus on one possible output $j$ at a time, and we will
only pay non-zero cost in the iteration where $r_1$ may be set to $j$.  By
paying for just a single iteration, our cost will be independent of the number
of iterations $|q|$. Consider the following winning coupling strategy:
\begin{enumerate}
  \item In loop iterations where $i_1 \neq j$, select the following coupling
    (known as the \emph{null coupling}):
    \[
      \Lambda_\varnothing = \{(x_1,x_2,0) \mid x_1 - x_2 = q_1[i_1] - q_2[i_2], x_1,x_2\in \mathds{Z}\}
    \]
    This coupling incurs $0$ privacy cost
    while ensuring that the two samples
    remain the same distance apart as $q_1[i_1]$ and $q_2[i_2]$.
  \item In iteration $i_1 = j$, select the following coupling
    (known as the \emph{shift coupling}):
    \[
      \Lambda_{+1} = \{(x, x + 1, \eps) \mid x \in \mathds{Z}\}
    \]
\end{enumerate}

Let us explain the idea behind this strategy. In all iterations
where $i_1 \neq j$, the processes choose two linked values in
$\Lambda_\varnothing$ that differ by at
most $1$ (since  $|q_1[i_1] - q_2[i_2]| \leq 1$). So the two processes may take different branches of the conditional,
and therefore disagree on the values of $r_1$ and $r_2$.  When $i_1 = j$, we
select the shift coupling $\Lambda_{+1}$ to ensure that $d_2 = d_1 + 1$.
Therefore, if $r_1$ is set to $j$, then $r_2$ is set to $j$;
this is because \rone $i_1=i_2$
and  \rtwo $d_2$ will have to be greater than any of the largest
values ($\emph{best}_2$) encountered by the second execution,
forcing the second execution to enter the \emph{then}
branch and update $r_2$.
Thus, this
coupling strategy ensures that if $r_1 = j$ at the end of the execution, then
$r_2 = j$. Since a privacy cost of $\epsilon$ is only incurred for the
single iteration when $i_1 = j$, each pair of linked outputs has cost $\eps$. This
establishes \edp of \textsf{rnm}.

\paragraph{Constraint-Based Formulation of Coupling Proofs.}
%
We can view the coupling strategy as assigning a first-order relation
\[
  \strategy(\cvars,z_1,z_2,\distance)
\]
over $(z_1, z_2, \distance)$ to every sampling instruction in the program.  The
vector of variables $\cvars$ contains two copies the program variables, along
with two logical variables representing the output $\ret$ we are analyzing and
the privacy parameter $\epsilon$.

Fixing a particular program state $\vec{c}$, we
interpret $\strategy(\vec{c},z_1,z_2,\distance)$ as a ternary relation over
$\mathds{Z} \times \mathds{Z} \times \mathds{R}$---a coupling of the two
(integer-valued) distributions in the associated sampling instruction, where
$\distance$ records the cost of this coupling. Effectively, the
relation $\strategy$ encodes a function that takes the current state of the two
processes $\cvars$ as an input and returns a coupling.

Finding a winning coupling strategy, then, boils down to finding an appropriate
interpretation of the relation $\strategy$.  To do so, we generate a system of
constraints $\clss$ whose solutions are winning coupling strategies. A solution
of $\clss$ encodes a proof of \edp. For instance, our implementation finds the
following solution of $\strategy$ for \textsf{rnm} (simplified for illustration):
\[
  \bigwedge
    \begin{array}{l}
      i_1 \neq \ret \implies z_1 - z_2 = q_1[i_1] - q_2[i_2] \land \distance = 0\\
      i_1 = \ret \implies z_2 = z_1 + 1 \land \distance = 2 \cdot (\eps/2)
    \end{array}
\]
This is a first-order Boolean formula whose satisfying assignments are elements of
$\strategy$; the two conjuncts encode the null and shift coupling, respectively.

\paragraph{Horn Modulo Couplings.}
More generally, we work with a novel combination of first-order-logic
constraints---in the form of \emph{constrained Horn clauses}---and probabilistic
\emph{coupling constraints}. We call our constraints \emph{Horn modulo couplings}
(\hmc) constraints.

Below, we illustrate the more interesting $\hmc$ constraints generated by our
system for \textsf{rnm}. Suppose that the unknown relation $\inv_\ell$ captures the
\emph{coupled invariant} of the two processes at line number $\ell$.
Just like an
invariant for a sequential program encodes the set of reachable states at a
program location, the coupled invariant describes the set of reachable states of
the two executions (along with privacy cost $\costvar$) assuming a particular
coupling of the sampling instructions.
\begin{align*}
  \cls_1 &: \inv_2(\cvars,\costvar) \land i_1 < |q_1| \land i_2 < |q_2| \longrightarrow \inv_3(\cvars,\omega)\\
  \cls_2 &:   \inv_7(\cvars,\costvar) \longrightarrow \costvar \leq \epsilon
  \land (r_1 = \ret \Rightarrow r_2 = \ret)\\
  \cls_3 &:   \inv_3(\cvars,\costvar) \land \strategy(\cvars,z_1,z_2,\distance)
    \land \costvar' = \costvar+\distance
    \longrightarrow \inv_4(\cvars[d_1\mapsto z_1, d_2 \mapsto z_2],\costvar')
\end{align*}
Informally, a Horn clause is an implication ($\to$)
describing the flow of states
between consecutive lines of the program,
and constraints on permissible states at each program location.
Clause $\cls_1$ encodes the loop entry condition:
if both processes satisfy the loop-entry condition
at line 2, then both states are propagated to line 3.
Clause $\cls_2$
encodes the conditions for \edp at the end of the program (line 7):  the
coupled invariant implies that
incurred privacy cost $\costvar$ is at most $\epsilon$, and if the first process
returns $\ret$, then so does the second process.
Clause $\cls_3$ encodes the effect of selecting the coupling from the strategy
at line $3$---$\costvar$ is incremented by the privacy cost $\distance$ and
$d_1,d_2$ are updated to new values $z_1,z_2$, non-deterministically chosen from
the coupling provided by the strategy.

Clauses 1--3 are standard constrained Horn clauses, but they are not enough: we
need to ensure that $\strategy$ encodes a coupling. So, our constraint system
also features probabilistic \emph{coupling constraints}:
\[
  \cls_4 :   \lap_{\eps/2}(q_1[i_1]) \aspace{\strategy(\cvars,-,-,-)} \lap_{\eps/2}(q_2[i_2])
\]
The constraint says that if we fix the first argument $\cvars$ of $\strategy$ to
be the current state, the resulting ternary relation is a coupling for the two
distributions $\lap_{\eps/2}(q_1[i_1])$ and $\lap_{\eps/2}(q_2[i_2])$.

\paragraph{Solving \hmc Constraints.}
To solve \hmc constraints, we transform the
coupling constraints into standard Horn clauses with unknown expressions
(uninterpreted functions).
To sketch the idea, observe that a strategy can be encoded by
a formula of the following form:
%
%
\[
  (\varphi_1 \Rightarrow \Psi_1) \land \cdots \land (\varphi_n \Rightarrow \Psi_n)
\]
This describes a \emph{case statement}: if the state of the processes satisfies
$\varphi_i$, then select coupling $\Psi_i$. To find formulas
$\varphi_i$ and $\Psi_i$, we transform solutions of $\strategy$ to
a formula of the above form, parameterized by unknown expressions.
By syntactically restricting $\Psi_i$ to known couplings, we
end up with a system of Horn clauses with unknown expressions in $\varphi_i$ and
$\Psi_i$ whose satisfying assignments always satisfy the coupling constraints.

Finally, solving a system of Horn clauses with unknown expressions is similar to
program synthesis for programs with holes, where we want to find a
completion that results in a correct program; here, we
want to synthesize a coupling strategy proving \edp.
To implement our system, we leverage automated techniques from program synthesis
and program verification.


\section{Differential Privacy and Variable Approximate Couplings} \label{sec:prelims}

In this section we
define our program model (\cref{ssec:model}) and $\eps$-differential privacy
(\cref{ssec:dp}). We then present \emph{variable approximate couplings} (\cref{ssec:vac}), which
generalize the approximate couplings of~\citet{BEHSS17,BGGHS16} and form the
theoretical foundation for our coupling strategies proof method.

\subsection{Program Model}\label{ssec:model}

\paragraph{Probability Distributions.}
To model probabilistic computation, we will use probability distributions and
sub-distributions. A function $\mu : \aset \to [0,1]$ defines a
\emph{sub-distribution} over a countable set $\aset$ if $\sum_{\aelem \in \aset}
\mu(\aelem) \leq 1$; when $\sum_{\aelem \in \aset} \mu(\aelem) = 1$, we call
$\mu$ a \emph{distribution}. We will often write $\mu(A)$ for a subset $A
\subseteq \aset$ to mean $\sum_{x \in A} \mu(x)$. We write $\sdist(\aset)$ and
$\dist(\aset)$ for the set of all sub-distributions and distributions over
$\aset$, respectively.

We will use a few standard constructions on sub-distributions.  First, the
\emph{support} of a sub-distribution $\mu$ is defined as  $\supp(\mu) = \{\aelem
\in \aset \mid \mu(\aelem) > 0\}$.  Second, for a sub-distribution on pairs $\mu
\in \sdist(\aset_1 \times \aset_2)$, the first and second \emph{marginals} of
$\mu$, denoted $\pi_1(\mu)$ and $\pi_2(\mu)$, are sub-distributions in
$\sdist(\aset_1)$ and $\sdist(\aset_2)$:
\begin{align*}
  \pi_1(\mu)(\aelem_1) \triangleq \sum_{\aelem_2 \in \aset_2}\mu(\aelem_1,\aelem_2)
  \hspace{1in}
  \pi_2(\mu)(\aelem_2) \triangleq \sum_{\aelem_1 \in \aset_1}\mu(\aelem_1,\aelem_2) .
\end{align*}
We will use $M: W \to \sdist(B)$ to denote a \emph{distribution family}, which
is a function mapping every parameter in some set of \emph{distribution
  parameters} $W$ to a sub-distribution in $\sdist(B)$.

\paragraph{Variables and Expressions.}
We will work with a simple computational model where programs are labeled
graphs. First, we fix a set $\vars$ of \emph{program variables}, the disjoint
union of \emph{input variables} $\varsi$ and \emph{local variables} $\varsl$.
Input variables model inputs to the program and are never modified---they can
also be viewed as logical variables. We assume that a special, real-valued
input variable
$\epsilon$ contains the target privacy level. The set of local variables
includes an \emph{output variable} $\retv \in \varsl$, representing the value
returned by the program. We assume that each variable $v$ has a type $\dom_v$,
e.g., the Booleans $\mathds{B}$, the natural numbers $\mathds{N}$, or the
integers $\mathds{Z}$.

We will consider several kinds of expressions, built out of variables. We will
fix a collection of primitive operations; for instance, arithmetic operations
$+$ and $\cdot$, Boolean operations $=, \land, \lor$,
etc.
We shall use $\expr$ to denote
expressions over variables $\vars$, $\bexpr$ to denote Boolean expressions over
$\vars$, and $\iexpr$ to denote input expressions over $\varsi$.
We consider a set of
distribution expressions $\dexpr$ modeling primitive, built-in distributions.
For our purposes, distribution expressions will model standard distributions
used in the differential privacy literature; we will detail these when we
introduce differential privacy below. Distribution expressions can be
parameterized by standard expressions.

Finally, we will need three kinds of basic program statements $\stmt$. An
\emph{assignment statement} $v \gets \expr$ stores the value of $\expr$ into the
variable $v$.  A \emph{sampling statement} $v \sim \dexpr$ takes a fresh sample
from the distribution $\dexpr$ and stores it into $v$. Finally, an \emph{assume
  statement} $\assume(\bexpr)$ does nothing if $\bexpr$ is true, otherwise it
filters out the execution (fails to terminate).

\paragraph{Programs.}
A program $\prog = (\locs, \edges)$ is a directed graph where nodes $\locs$
represent program locations, directed edges $\edges \subseteq \locs \times \locs$ connect pairs of locations, and
each edge $e \in \edges$ is labeled by a statement $\stmt_e$.
There is a distinguished \emph{entry location} $\lentry
\in \locs$, representing the first instruction of the program, and a \emph{return
  location} $\lexit \in \locs$. We assume all locations in $\locs$ can reach
$\lexit$ by following edges in $\edges$.
\begin{wrapfigure}[6]{r}{.33\linewidth}
  \vspace{-.1in}
\hspace{0in}{\includegraphics[scale=1.45]{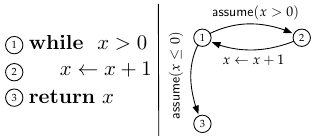}}
\end{wrapfigure}

Our program model is expressive enough to encode the usual conditionals and
loops from imperative languages---e.g., consider the program on the right along
with its graph representation.  All programs we consider model structured
programs, e.g., there is no control-flow non-determinism.

\paragraph{Expression Semantics.}
A program \emph{state} $s$
is a map assigning a value to each variable in $\vars$; let $\states$
denote the set of all possible states.
Given variable $v$, we use $s(v)$ to
denote the value of $v$ in state $s$.
Given constant $c$, we use $s[v\mapsto c]$
to denote the state $s$ but with variable $v$
mapped to $c$.
The semantics of an expression $\expr$
is a function $\sem{\expr} : \states \to D$
from a state to an element of some type $\dom$.
For instance, the expression $x + y$ in
state $s$ is interpreted as $\sem{x+y}(s) = \sem{x}(s) + \sem{y}(s)$.
A distribution expression
$\dexpr$ is interpreted as a distribution family
$\sem{\dexpr} : S \to \dist(\mathds{Z})$,
mapping a state in $S$ to a distribution over
integers; for concreteness we will only consider distributions over integers, but
extending to arbitrary discrete distributions is straightforward.
We will often write $s(-)$ to denote $\sem{-}(s)$, for simplicity.

\paragraph{Program Semantics.}
We can define semantics of a program $\prog$ as the aggregate of all its
traces. A \emph{trace} $\trace$ through $\prog$ is a finite sequence of
statements $\stmt_{e_1},\ldots, \stmt_{e_n}$ such that the associated edges
edges $e_1,\ldots,e_n \in \edges$ form a directed path through the graph.
A \emph{maximal trace} $\trace$ has edges corresponding to a path from $\lentry$
to $\lexit$.
We use $\traces(P)$ to denote the set of all maximal traces through $P$.
A trace $\trace$ represents a sequence of instructions to be
executed, so we define $\sem{\trace}: \states \to \sdist(\states)$ as a
distribution family
from states to sub-distributions over states. Assignments and sampling
statements are given the expected semantics, and an $\assume(\bexpr)$ statement
yields the all-zeros sub-distribution if the input state does not satisfy the guard
$\bexpr$. We can now define the semantics of a full program $\sem{P}: \states
\to \dist(\states)$ as
\[
  \sem{P}(s) \triangleq \sum_{\trace \in \traces(P)} \sem{\sigma}(s)
\]
where the sum adds up output sub-distributions from each trace. We will assume
$\prog$ is terminating; in particular, the sub-distributions
$\sem{\trace}(s)$ sum to a proper distribution.

\subsection{Differential Privacy}\label{ssec:dp}
Differential privacy is a quantitative, statistical notion of data privacy
proposed by \citet{DMNS06}; we reformulate their definition in our
program model.
We will fix an \emph{adjacency relation} on input states $\adj
\subseteq \states \times \states$ modeling which inputs should lead to similar
outputs.
%
Throughout, we implicitly assume \rone for all $(s_1,s_2)\in \adj$,
$s_1(\eps) = s_2(\eps) > 0$, and \rtwo for all $(s_1,s_2) \in \adj$
and every $c \in \mathds{R}^{>0}$, $(s_1[\eps \mapsto c], s_2[\eps \mapsto c])\in \adj$.
In other words, in adjacent states $\epsilon$ may take any positive value, as
long as it is equal in both states.

\begin{definition}[\citet{DMNS06}] \label{def:dp}
  A program $P$ is $\epsilon$-\emph{differentially private} with respect to an
  adjacency relation $\adj$ iff for every $(s_1,s_2) \in \adj$ and output $j \in
  \dom_{\retv}$,
  \[
    \mu_1(\{s \mid s(v_r) = j\})
    \leq \exp(c) \cdot \mu_2(\{s \mid s(v_r) = j\})
  \]
  where $\mu_1 = \prog(s_1)$, $\mu_2 = \prog(s_2)$, and $s_1(\eps) = s_2(\eps) = c$.
\end{definition}

\citet{DMNS06} originally defined differential privacy in terms of \emph{sets}
of outputs rather than single outputs. \cref{def:dp}
is equivalent to their notion of $(\epsilon, 0)$-differential privacy and is more
convenient for our purposes, as we will often prove privacy by
focusing on one output at at time.

We will consider two primitive distributions that are commonly used in
differentially private algorithms: the (discrete) \emph{Laplace} and
\emph{exponential} distributions. Both distributions
are parameterized by two numbers, describing the spread of the distribution and
its center.


\begin{definition}
  The \emph{Laplace distribution} family is a function with two parameters, $y
  \in \mathds{R}^{>0}$ and $z \in \mathds{Z}$.  We use $\lap_y(z)$ to denote the
  distribution over $\mathds{Z}$ where $\lap_y(z)(\nu) \propto \exp(- |\nu -
  z| \cdot y)$, for all $\nu \in \mathds{Z}$.  We call $1/y$ the \emph{scale} of
  the distribution and $z$ the \emph{mean} of the distribution.
\end{definition}

\begin{definition}
  The \emph{exponential distribution} family is a function with two parameters,
  $y \in \mathds{R}^{>0}$ and $z \in \mathds{Z}$.  We use $\olap_{y}(z)$ to
  denote the distribution over $\mathds{Z}$ where $\olap_y(z)(\nu) = 0$ for all
  $\nu < z$, and $\olap_{y}(z)(\nu) \propto \exp(- (\nu - z) \cdot y)$ for all
  $\nu \geq z$.
\end{definition}

We will use distribution expressions
$\lap_{\iexpr}(\expr)$  and $\olap_{\iexpr}(\expr)$
to represent the respective Laplace/exponential distribution
interpreted in a given state.
We assume that every pair of adjacent
states $(s_1,s_2) \in \adj$ satisfies $s_1(\iexpr) = s_2(\iexpr) > 0$
for every  $\iexpr$ appearing in distribution expressions.

\subsection{Variable Approximate Couplings}\label{ssec:vac}

We are now ready to define the concept of a
\emph{variable approximate coupling}, which we will
use to establish \edp of a program.
A variable approximate coupling
of two distributions $\mu_1$ and $\mu_2$
can be viewed as two distributions on pairs, each modeling one of the original
distributions.
Formally:


\begin{definition}[Variable approximate couplings]
  Let $\mu_1 \in \sdist(\aset_1)$ and $\mu_2 \in \sdist(\aset_2)$. Let $\Lambda
  \subseteq B_1 \times B_2 \times \mathds{R}^{\geq 0}$ be a relation. We write
  $dom(\Lambda) \subseteq B_1 \times B_2$ for the projection of $\Lambda$ to the
  first two components.  We say that $\Lambda$ is a \emph{variable approximate
    coupling} of $\mu_1$ and $\mu_2$---denoted $\mu_1 \aspace{\Lambda}
  \mu_2$---if there exist two \emph{witness} sub-distributions $\mu_L \in \sdist(\aset_1 \times
  \aset_2)$ and $\mu_R \in \sdist(\aset_1 \times \aset_2)$ such that:
  \begin{enumerate}
    \item $\pi_1(\mu_L) = \mu_1$ and $\pi_2(\mu_R) \leq \mu_2$,
    \item $\supp(\mu_L), \supp(\mu_R) \subseteq dom(\Lambda)$, and
    \item for all $(\aelem_1,\aelem_2,c) \in \Lambda$, we have $\mu_L(\aelem_1,
      \aelem_2) \leq \exp(c) \cdot \mu_R(\aelem_1, \aelem_2)$.
  \end{enumerate}
  We will call these the \emph{marginal} conditions, the \emph{support}
  conditions, and the \emph{distance} conditions respectively. We will often
  abbreviate ``variable approximate coupling'' as simply ``coupling''.
\end{definition}

To give some intuition, the first point states that $\mu_L$ models the first
distribution and $\mu_R$ is a lower bound on the second distribution---the
inequality $\pi_2(\mu_R) \leq \mu_2$ means that $\pi_2(\mu_R)(b_2) \leq
\mu_2(b_2)$ for every $b_2 \in B_2$. Informally, since the definition of
differential privacy (\cref{def:dp}) is asymmetric, it is enough to
show that the first distribution (which we will model by $\mu_L$) is less than a lower
bound of the second distribution (which we will model by $\mu_R$). The second
point states that every pair of elements with non-zero probability in $\mu_L$ or
$\mu_R$ must have at least one cost. Finally, the
third point states that $\mu_L$ is approximately upper bounded by $\mu_R$; the
approximation is determined by a cost $c$ at each pair.

Our definition is a richer version of $\star$-lifting, an approximate coupling
recently proposed by \citet{BEHSS17} for verifying differential privacy. The
main difference is our approximation level $c$ may vary over the pairs $(b_1,
b_2)$, hence we call our approximate coupling a \emph{variable} approximate
coupling. When all approximation levels are the same, our definition recovers
the existing definition of $\star$-lifting.\footnote{%
  More precisely, $(\epsilon, 0)$ $\star$-lifting.}
Variable approximate couplings can give more precise bounds when comparing
events in the first distribution to events in the second distribution.

Now, \edp holds if we can find a coupling of the output distributions from
adjacent inputs.
\begin{lemma}[\edp and couplings]\label{lemma:dpcoupling}
  A program $P$ is \edp
  with respect to adjacency relation $\adj$ if for every $(s_1,s_2) \in \adj$ and
  $j \in \dom_\retv$, there is a coupling $P(s_1)
  \aspace{\Lambda_j} P(s_2)$ with
  \[
    \Lambda_j
    = \{(s_1',s_2',c) \mid s_1'(\retv) = j \Rightarrow s_2'(\retv) = j\},
  \]
  where $s_1(\eps) = s_2(\eps) = c$.
\end{lemma}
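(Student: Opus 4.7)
The plan is to fix an arbitrary pair of adjacent inputs $(s_1,s_2) \in \adj$ and an arbitrary output value $j \in \dom_\retv$, and directly verify the inequality in \cref{def:dp} using the three clauses of the variable approximate coupling guaranteed by the hypothesis. Let $c = s_1(\eps) = s_2(\eps)$, write $\mu_1 = P(s_1)$ and $\mu_2 = P(s_2)$, abbreviate the event $E_j = \{s \mid s(\retv) = j\}$, and let $\mu_L, \mu_R$ be the witness sub-distributions for the coupling $\mu_1 \aspace{\Lambda_j} \mu_2$.

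First I would rewrite $\mu_1(E_j)$ using the left marginal condition: since $\pi_1(\mu_L) = \mu_1$, we have
\[
  \mu_1(E_j) \;=\; \sum_{s_1' \in E_j}\ \sum_{s_2' \in \states} \mu_L(s_1', s_2').
\]
Next I would use the support condition together with the specific shape of $\Lambda_j$: every pair $(s_1', s_2')$ in $\supp(\mu_L)$ lies in $dom(\Lambda_j)$, so it satisfies the implication $s_1'(\retv) = j \Rightarrow s_2'(\retv) = j$. Therefore, whenever $s_1' \in E_j$ and $\mu_L(s_1', s_2') > 0$, we must also have $s_2' \in E_j$. This lets me restrict the inner sum to $s_2' \in E_j$ without loss.

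Then I would apply the distance condition pointwise: for every $(s_1', s_2') \in \Lambda_j$ we have $\mu_L(s_1', s_2') \leq \exp(c) \cdot \mu_R(s_1', s_2')$. Combined with the support condition for $\mu_R$, this gives
\[
  \mu_1(E_j) \;\leq\; \exp(c) \sum_{s_1' \in E_j}\ \sum_{s_2' \in E_j} \mu_R(s_1', s_2')
  \;\leq\; \exp(c) \sum_{s_2' \in E_j}\ \sum_{s_1' \in \states} \mu_R(s_1', s_2')
  \;=\; \exp(c) \cdot \pi_2(\mu_R)(E_j).
\]
Finally I would close the chain with the right marginal condition $\pi_2(\mu_R) \leq \mu_2$, yielding $\mu_1(E_j) \leq \exp(c) \cdot \mu_2(E_j)$, which is exactly the inequality required by \cref{def:dp}. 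Since $(s_1,s_2)$ and $j$ were arbitrary, $P$ is \edp.

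The argument is essentially routine manipulation of sums once the three clauses of the coupling definition are aligned correctly, so I do not expect a serious obstacle. The one place that requires care is the transition from the first display to the second: one must notice that the asymmetry built into $\Lambda_j$ (an implication, not an equivalence) interacts exactly with the asymmetry in the marginal conditions ($\pi_1(\mu_L)$ is an equality but $\pi_2(\mu_R)$ is only an inequality), so the direction in which we apply the implication has to match the direction in which we weaken the marginal. Getting this orientation right is the whole content of the lemma and is why the coupling definition was stated asymmetrically in the first place.
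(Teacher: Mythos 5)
Your proof is correct and follows essentially the same four-step chain as the paper's own argument (first marginal, support, distance, second marginal); the only cosmetic difference is that you apply the support condition to $\mu_L$ before the distance condition rather than to $\mu_R$ afterwards, which is an equivalent rearrangement since the support condition constrains both witnesses. Your closing observation about how the asymmetry of the implication in $\Lambda_j$ lines up with the asymmetry in the marginal conditions is exactly the right intuition for why the definition is stated this way.
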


We close this section with some examples of specific couplings for the Laplace
and exponential distributions proposed by \citet{BGGHS16}.

\begin{definition}[Shift coupling] \label{def:shift}
  Let $y \in \mathds{R}^{> 0}$, $z_1, z_2 \in \mathds{Z}$, and $k \in
  \mathds{Z}$, and define the relations
  \[
    \Lambda_{+k} \triangleq \{ (n_1, n_2, |k + z_1 - z_2| \cdot y) \mid n_1 + k = n_2 \}
    \qquad \text{and} \qquad
    \Lambda'_{+k} \triangleq \{ (n_1, n_2, (k + z_1 - z_2) \cdot y) \mid n_1 + k = n_2 \}
  \]
  Then we have the following \emph{shift} coupling for the Laplace distribution:
  \[
    \lap_y(z_1)
    \aspace{\Lambda_{+k}}
    \lap_y(z_2) .
  \]
  If $k + z_1 - z_2 \geq 0$, we have the following shift coupling for the
  exponential distribution:
  \[
    \olap_y(z_1)
    \aspace{\Lambda'_{+k}}
    \olap_y(z_2) .
  \]
  Intuitively, these couplings relate each sample $n_1$ from the
  first distribution with the sample $n_2 = n_1 + k$ from the second
  distribution. The difference in probabilities for these two samples depends on
  the shift $k$ and the difference between the means $z_1, z_2$.
\end{definition}

If we set $k = z_2 - z_1$ above, the coupling cost is $0$ and we have the
following useful special case.

\begin{definition}[Null coupling]\label{def:null}
  Let $y \in \mathds{R}^{> 0}$, $z_1, z_2 \in \mathds{Z}$, and
  define the relation:
  \[
    \Lambda_\varnothing \triangleq \{ (n_1, n_2, 0) \mid n_1 - z_1 = n_2 - z_2 \} .
  \]
  Then we have the following \emph{null} couplings:
  \[
    \lap_y(z_1)
    \aspace{\Lambda_\varnothing}
    \lap_y(z_2)
    \quad\text{and}\quad
    \olap_y(z_1)
    \aspace{\Lambda_\varnothing}
    \olap_y(z_2) .
  \]
  Intuitively, these couplings relate pairs of samples $n_1, n_2$
  that are at the same distance from their respective means $z_1, z_2$---the
  approximation level is $0$ since linked samples have the same probabilities
  under their respective distributions.
\end{definition}

\section{Coupling Strategies}\label{sec:strategies}

In this section, we introduce \emph{coupling strategies}, our proof technique
for establishing \edp.

\subsection{Formalizing Coupling Strategies}
Roughly speaking, a coupling strategy picks a coupling for each sampling statement.

\begin{definition}[Coupling strategies]\label{def:cpost}
Let $\stmts$ be the set of all sampling statements in a program $\prog$; recall
that our primitive distributions are over the integers $\mathds{Z}$. A
\emph{coupling strategy} $\strp$ for $\prog$ is a map from $\stmts \times
\states \times \states$ to couplings in $2^{\mathds{Z} \times
\mathds{Z} \times \mathds{R}}$, such that for a statement $\stmt = v \sim \dexpr$
and states $(s_1,s_2)$, the relation $\strp(\stmt,s_1,s_2) \subseteq \mathds{Z} \times
\mathds{Z} \times \mathds{R}$
forms a coupling  $s_1(\dexpr) \aspace{\strp(\stmt,s_1,s_2)} s_2(\dexpr) .$
\end{definition}

\begin{example}
  Recall our simple coupling strategy for Report Noisy Max in
  \cref{ssec:coupling}.  For the statement $\stmt = v \sim \lap_{\eps/2}(q[i])$,
  and every pair of states $(s_1,s_2)$ where $s_1(\eps) = s_2(\eps) = c > 0$, the
  strategy  $\strp$ returned the  coupling $\strp(\stmt,s_1,s_2) =
  \{(x,x,c) \mid x \in \mathds{Z}\}$.
\end{example}

To describe the effect of a coupling strategy on two executions from neighboring
inputs, we define a \emph{coupled postcondition} operation. This operation
propagates a set of pairs of coupled states while tracking the privacy cost
from couplings selected along the execution path.

\begin{definition}[Coupled postcondition]
  Let $\strp$ be a coupling strategy for
  $\prog$. We define the \emph{coupled postcondition} as a function
  $\cpost_\strp$, mapping $Q \subseteq \states \times \states \times \mathds{R}$
  and a statement to a subset of $\states \times \states \times \mathds{R}$:
  \begin{align*}
    \cpost_\strp(Q, v \gets \expr) \triangleq{}&
    \{ (s_1[v \mapsto s_1(\expr)], s_2[v \mapsto s_2(\expr)], c) \mid (s_1, s_2, c) \in Q \} \\
    \cpost_\strp(Q, \assume(\bexpr)) \triangleq{}&
    \{ (s_1, s_2, c) \mid (s_1, s_2, c) \in Q \land s_1(\bexpr) \land s_2(\bexpr) \} \\
    \cpost_\strp(Q, v \sim \dexpr) \triangleq{}&
    \{ (s_1[v \mapsto a_1], s_2[v \mapsto a_2], c + c') \\
    &\mid (s_1, s_2, c) \in Q \land (a_1, a_2, c') \in
    \strp(v \sim \dexpr, s_1, s_2) \} .
  \end{align*}
  This operation can be lifted to operate on sequences of statements $\sigma$ in
  the standard way: when the trace $\trace = \stmt_1\trace'$ begins
  with $\stmt_1$, we define $\cpost_\strp(Q, \trace) \triangleq
  \cpost_\strp(\cpost_\strp(Q,\stmt_1),\trace')$;
  when $\sigma$ is the empty trace, $\cpost_\strp(Q, \trace) \triangleq Q$.
\end{definition}

Intuitively, on assignment statements $\cpost$
updates every pair of states $(s_1,s_2)$ using the semantics of
assignment;
on assume statements, $\cpost$
only propagates pairs of states satisfying $\bexpr$;
on sampling statements,
$\cpost$ updates every pair of states $(s_1,s_2)$
 by assigning the variables $(v_1,v_2)$
 with every possible pair from the coupling
chosen by $\tau$, updating
the incurred privacy costs.

\begin{example}\label{ex:cpost}
  Consider a simple program:
  \begin{algorithmic}
    \State $x \gets x + 10$
    \State $x \sim \lap_\eps(x)$
  \end{algorithmic}
  Let the adjacency relation
  $\Delta = \{(s_1,s_2) \mid s_1(x) = s_2(x) + 1 \text { and }s_1(\eps) = s_2(\eps) = c\}$
  and let $\strp$ be a coupling strategy.
  Since there is only a single variable $x$ and $\eps$ is fixed to $c$, we will represent the states of the two
  processes by $x_1$ and $x_2$, respectively.
  Then, the initial
  set of coupled states $Q_0$ is the set $\{(x_1,x_2,0)
  \mid x_1 = x_2 + 1\}$.  First, we compute
  $Q_1 = \cpost_\strp(Q_0, x \gets x + 10).$
  This results in a set $Q_1 = Q_0$, since adding 10 to
  both variables does not change the fact that $x_1 = x_2 + 1$.
  Next, we compute
  $Q_2 = \cpost_\strp(Q_1, x \sim \dexpr).$
  Suppose that the coupling strategy $\strp$ maps every pair of states $(s_1,
  s_2) \in Q_1$ to the coupling $\Lambda_= = \{(k,k,c) \mid k \in \mathds{Z}\}$.
  Then,
  $
  Q_2 = \{(x_1,x_2,c) \mid x_1 = x_2\}.
  $
\end{example}

While a coupling strategy restricts the pairs of executions we must consider,
the executions may still behave quite differently---for instance, they may
take different paths at conditional statements or run for different numbers of
loop iterations, etc. To further simplify the verification task we will focus
on \emph{synchronizing} strategies only,
%
which ensure that the guard in every $\assume$ instruction takes the same value
on coupled pairs, so both coupled processes follow the same control-flow path
through the program.

\begin{definition}[Synchronizing coupling strategies]
  Given a trace $\trace$, let $\trace_i$ denote the prefix
  $\stmt_1,\ldots,\stmt_i$ of $\trace$ and let $\trace_0$ denote the empty trace
  that simply returns the input state.
  A coupling strategy $\strp$ is \emph{synchronizing} for $\trace$ and
  $(s_1,s_2) \in \adj$ iff for every $\stmt_i = \assume(\bexpr)$ and $(s_1',
  s_2', -) \in \cpost_\strp(\{ (s_1, s_2, 0) \},\sigma_{i-1})$, we have
  $s_1'(\bexpr) = s_2'(\bexpr)$.
\end{definition}

Requiring synchronizing coupling strategies is a certainly a restriction, but
not a serious one for proving differential privacy. By treating conditional
statements as monolithic instructions, the only potential branching we need to
consider comes from loops. Differentially private algorithms use straightforward
iteration; all examples we are aware of can be encoded with for-loops and
analyzed synchronously. (\Cref{sec:evaluation} provides more details about how
we handle branching.)

\subsection{Proving Differential Privacy via Coupling Strategies}

Our main theoretical result shows that synchronizing coupling
strategies encode couplings.

\begin{lemma}[From strategies to couplings] \label{lemma:strat-couple}
Suppose $\strp$ is synchronizing for $\trace$ and $(s_1,s_2) \in \adj$.  Let
$\Psi = \cpost_\strp( \{ (s_1, s_2, 0) \}, \sigma )$, and let $f : \states
\times \states \to \mathds{R}$ be such that $c \leq f(s_1', s_2')$ for all
$(s_1', s_2', c) \in \Psi$. Then we have a coupling
$
  \sigma(s_1) \aspace{\Psi_f} \sigma(s_2),
$
where $\Psi_f \triangleq \{ (s_1', s_2', f(s_1', s_2')) \mid (s_1', s_2', -) \in \Psi \}$.
\end{lemma}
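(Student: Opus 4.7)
The plan is induction on the length of the trace $\sigma$, where the inductive invariant is the existence of witness sub-distributions $\mu_L, \mu_R \in \sdist(\states \times \states)$ for each prefix $\sigma_i$ satisfying: \rone $\pi_1(\mu_L) = \sem{\sigma_i}(s_1)$ and $\pi_2(\mu_R) \leq \sem{\sigma_i}(s_2)$; \rtwo both supports are contained in the projection of $\Psi_i = \cpost_\strp(\{(s_1,s_2,0)\},\sigma_i)$ to $\states \times \states$; and \rthree for every $(s_1',s_2',c) \in \Psi_i$, the pointwise distance bound $\mu_L(s_1',s_2') \leq \exp(c) \cdot \mu_R(s_1',s_2')$ holds. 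Once this invariant is established for the full trace, the hypothesis $c \leq f(s_1',s_2')$ for every $(s_1',s_2',c) \in \Psi$ immediately upgrades the distance bound to $\mu_L(s_1',s_2') \leq \exp(f(s_1',s_2')) \cdot \mu_R(s_1',s_2')$, exhibiting $\mu_L, \mu_R$ as witnesses of $\sem{\sigma}(s_1) \aspace{\Psi_f} \sem{\sigma}(s_2)$.

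The base case is the empty trace with $\mu_L = \mu_R = \dunit(s_1,s_2)$ and $\Psi_0 = \{(s_1,s_2,0)\}$; all three invariants hold trivially. For the inductive step, I would split on the kind of the next statement. For an assignment $v \gets e$, pushing forward the inductive witnesses along the deterministic map $(t_1,t_2) \mapsto (t_1[v \mapsto t_1(e)], t_2[v \mapsto t_2(e)])$ preserves every invariant, because $\cpost_\strp$ applies the very same map to the state component. For an $\assume(b)$ statement, I would restrict both witnesses to the pairs on which the guard holds; the synchronizing hypothesis guarantees $s_1'(b) = s_2'(b)$ for every coupled pair in $\Psi_i$, so the restricted marginals coincide with the semantics of $\assume(b)$ on each side, and the support and distance conditions pass through verbatim.

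The main obstacle is the sampling case $v \sim d$. For each $(s_1',s_2') \in \supp(\mu_L) \cup \supp(\mu_R)$, \cref{def:cpost} supplies inner witnesses $\nu_L^{s_1',s_2'}, \nu_R^{s_1',s_2'}$ for $s_1'(d) \aspace{\strp(v \sim d, s_1', s_2')} s_2'(d)$. I would then form the new witnesses by a $\dbind$-style construction, setting $\mu_L'(t_1,t_2) = \sum \mu_L(s_1',s_2') \cdot \nu_L^{s_1',s_2'}(a_1,a_2)$ where the sum ranges over all $(s_1',s_2',a_1,a_2)$ with $t_1 = s_1'[v \mapsto a_1]$ and $t_2 = s_2'[v \mapsto a_2]$, and analogously for $\mu_R'$. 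The marginal equality $\pi_1(\mu_L') = \sem{\sigma_{i+1}}(s_1)$ follows from the inner marginal equality composed with the inductive one, and $\pi_2(\mu_R') \leq \sem{\sigma_{i+1}}(s_2)$ follows by composing the two sub-distribution inequalities. The support condition follows directly from the definition of $\cpost_\strp$ on sampling statements. The new pointwise bound at cost $c+c'$, matching exactly the cost recorded by $\cpost_\strp$, is obtained by multiplying $\mu_L(s_1',s_2') \leq \exp(c) \cdot \mu_R(s_1',s_2')$ with $\nu_L^{s_1',s_2'}(a_1,a_2) \leq \exp(c') \cdot \nu_R^{s_1',s_2'}(a_1,a_2)$; the delicate bookkeeping here is ensuring that the terms summed in $\mu_L'$ and $\mu_R'$ correspond pair-by-pair so that the inequality lifts from summands to sums.
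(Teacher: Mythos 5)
Your overall skeleton — induction on trace length, $\dbind$-style composition of witnesses, case split on the final statement — is the same as the paper's. The gap is in the inductive invariant you propose to carry.

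Your invariant \rthree{} requires $\mu_L(s_1',s_2') \leq \exp(c)\cdot\mu_R(s_1',s_2')$ for \emph{every} $(s_1',s_2',c) \in \Psi_i$. Since $\Psi_i$ can contain the same state pair with several different costs (e.g.\ two distinct intermediate pairs $(s,s')$ and $(u,u')$ carrying different accumulated costs can be merged into the same final pair $(t_1,t_2)$ by a non-injective assignment, or by overlapping coupling choices at a sampling), this forces the bound with the \emph{smallest} cost. That bound is not maintained by the $\dbind$ construction. Concretely: if path A contributes $\mu_L^A \leq \exp(0)\,\mu_R^A$ and path B contributes $\mu_L^B \leq \exp(2)\,\mu_R^B$, summing only yields $\mu_L^A+\mu_L^B \leq \exp(2)(\mu_R^A+\mu_R^B)$; the cost-$0$ bound you would need at $(t_1,t_2)$ does not follow, and in fact need not be true of any choice of witnesses. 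This is exactly the ``delicate bookkeeping'' you flag at the end — the per-term inequalities lift to the sum only after the exponents are made \emph{equal} across all terms contributing to a given final pair, and your invariant does not arrange for that.

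The paper's proof avoids this by never working with the raw multi-cost set $\Psi_i$ at intermediate steps. Instead it applies the induction hypothesis \emph{as the lemma itself} to the prefix $\sigma'$, with a cost function $\overline{f}$ constructed by pulling $f$ back through the final statement: $\overline{f}(s_i,s_i') = f(\text{assignment of }s_i,s_i')$ for assignments (and the analogous $\inf$ construction $\overline{f}(s_i,s_i') = \inf\{f(s_i[x\mapsto a],s_i'[x\mapsto a']) - c \mid (a,a',c) \in \Lambda[s_i,s_i']\}$ for samplings, paired with $\overline{g}$ defined as the $\sup$ of the coupling costs). This choice makes $\overline{f} + \overline{g}$ collapse to $f(s,s')$ — a quantity depending only on the final pair — so that every term in the $\dbind$ sum carries the \emph{same} factor $\exp(f(s,s'))$, which then factors out cleanly. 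Your proof needs this device: state the inductive hypothesis as the lemma for the prefix (parameterized by a cost function you get to choose), and pull $f$ back through the last statement to define that function, rather than trying to satisfy the pointwise bound for every cost in $\Psi_i$.
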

\begin{proof}
  (\emph{Sketch}) By induction on the length of the trace $\sigma$.
\end{proof}

There may be multiple sampling choices that yield the same coupled outputs
$(s_1', s_2')$; the coupling must assign a cost larger than all associated
costs in the coupled post-condition.
By \cref{lemma:dpcoupling}, if we can find a coupling for each possible output
$j \in \dom_\retv$ such that $s_1(\retv) = j \implies s_2(\retv) = j$ with cost
at most $\epsilon$, then we establish $\epsilon$-differential privacy.
We formalize this family of coupling strategies as a \emph{winning coupling
  strategy}.

\begin{theorem}[Winning coupling strategies] \label{thm:winning}
Fix program $\prog$ and adjacency relation $\adj$.
Suppose that we have a family of coupling strategies $\{ \strp_j
\}_{j \in \dom_\retv}$ such that for every trace $\trace \in \traces(P)$ and $j
\in \dom_\retv$, $\strp_j$ is synchronizing for $\trace$ and
\[
\cpost_{\strp_j} (\adj \times \{0\},\trace) \subseteq
\{(s_1',s_2',c) \mid c \leq s_1(\eps) \land s_1'(\retv) = j \Rightarrow s_2'(\retv) = j\} .
\]
Then $\prog$ is $\eps$-differentially private.
(We could  change $s_1(\eps)$ to $s_2(\eps)$ above,
since $s_1(\eps) = s_2(\eps)$ in $\Delta$.)
\end{theorem}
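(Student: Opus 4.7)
The plan is to combine \cref{lemma:strat-couple} with \cref{lemma:dpcoupling}. For each possible output $j \in \dom_\retv$ and each pair of adjacent inputs, I will extract a coupling of the two program output distributions whose cost never exceeds $s_1(\eps)$ and whose support enforces $s_1'(\retv) = j \Rightarrow s_2'(\retv) = j$. The privacy bound then follows directly from \cref{lemma:dpcoupling}.

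First I would fix $j \in \dom_\retv$ and $(s_1,s_2) \in \adj$, and set $c = s_1(\eps) = s_2(\eps)$. For each maximal trace $\trace \in \traces(P)$ I work with the single-input coupled postcondition $\Psi_\trace \triangleq \cpost_{\strp_j}(\{(s_1,s_2,0)\}, \trace)$, which is contained in $\cpost_{\strp_j}(\adj \times \{0\}, \trace)$ and hence, by hypothesis, consists only of triples whose cost is $\leq c$. The hypothesis also guarantees that $\strp_j$ is synchronizing along $\trace$ from $(s_1,s_2)$, so \cref{lemma:strat-couple} applies with the constant bound $f(s_1',s_2') \equiv c$, yielding a per-trace coupling
\[
  \sem{\trace}(s_1) \aspace{\Psi_{\trace,c}} \sem{\trace}(s_2),
\]
where every pair $(s_1',s_2')$ in the support has uniform cost $c$ and, again by hypothesis, satisfies $s_1'(\retv) = j \Rightarrow s_2'(\retv) = j$.

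Next I would aggregate the per-trace couplings into one program-level coupling. Recalling $\sem{P}(s_i) = \sum_\trace \sem{\trace}(s_i)$, I let $\mu_{L,\trace}$ and $\mu_{R,\trace}$ be the witness sub-distributions for the trace $\trace$ given by \cref{lemma:strat-couple}, and define $\mu_L \triangleq \sum_\trace \mu_{L,\trace}$ and $\mu_R \triangleq \sum_\trace \mu_{R,\trace}$. Summand-wise, the first marginal of $\mu_L$ equals $\sem{P}(s_1)$ and the second marginal of $\mu_R$ is pointwise bounded by $\sem{P}(s_2)$; the support inclusion and the pointwise $\exp(c)$-bound also carry over summand-wise, since each pair has uniform cost $c$ in every trace in which it appears. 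This produces a coupling $\sem{P}(s_1) \aspace{\Lambda_j} \sem{P}(s_2)$ with $\Lambda_j = \{(s_1',s_2',c) \mid s_1'(\retv)=j \Rightarrow s_2'(\retv)=j\}$. Applying \cref{lemma:dpcoupling} over all $j$ concludes $\eps$-differential privacy.

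The main obstacle I anticipate is the aggregation step: I need to verify that summing per-trace witnesses gives honest sub-distributions on $\states \times \states$ with the correct marginals. This uses termination of $P$ (so the trace sums converge to proper distributions) and the fact that the trace-based semantics partitions output mass by control-flow path; the pointwise distance bound is then immediate because the same constant $c$ was used in every per-trace coupling. Once the marginal bookkeeping is checked, the remaining verifications of the support and distance conditions reduce to routine termwise inequalities.
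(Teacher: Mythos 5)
Your proof is correct and follows essentially the same route as the paper's: fix adjacent inputs and an output $j$, apply \cref{lemma:strat-couple} per trace (with the constant cost function $f \equiv c$) to obtain a per-trace coupling, and aggregate over traces via $\sem{P} = \sum_{\sigma}\sem{\sigma}$. The only difference is cosmetic: you sum the per-trace \emph{witness} sub-distributions into a single program-level coupling and then invoke \cref{lemma:dpcoupling} once, whereas the paper replays the argument of \cref{lemma:dpcoupling} at the trace level and sums the resulting probability inequalities; both are valid, and your bookkeeping for the aggregated witnesses (marginals, support, and the pointwise $\exp(c)$ bound) checks out.
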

%
%
\begin{proof}
  Let $(s_1, s_2) \in \adj$ be two input states
  such that $s_1(\eps) = s_2(\eps) = c$. For every $j \in \dom_{\retv}$
  and any sequence $\trace \in \traces(\prog)$, \cref{lemma:strat-couple}
  gives a coupling of the output distributions:
  $
    \sigma(s_1)
    \aspace{ \Lambda }
    \sigma(s_2)
  $
  where $s_1'(\retv) = j \implies s_2'(\retv) = j$ and $c' \leq c$ for
  every $(s_1', s_2', c') \in \Lambda$. By a similar argument as in
  \cref{lemma:dpcoupling}, we have
  \[
    \sigma(s_1)(\{ s_1' \mid s_1'(\retv) = j \})
    \leq \exp(c) \cdot \sigma(s_2)(\{ s_2' \mid s_2'(\retv) = j \}) .
  \]
  Since $\sem{\prog} = \sum_{\sigma \in \Sigma(\prog)} \sem{\sigma}$, summing
  over all traces $\sigma \in \Sigma(\prog)$
  shows $\eps$-differential privacy:
  \[
    \prog(s_1)(\{ s_1' \mid s_1'(\retv) = j \})
    \leq \exp(c) \cdot \prog(s_2)(\{ s_2' \mid s_2'(\retv) = j \}) .
    \qedhere
  \]
\end{proof}

\section{Horn Clauses Modulo Couplings}\label{sec:hcc}

So far, we have seen how to prove differential privacy by finding a winning
coupling strategy. In this section we present a constraint-based
formulation of winning coupling strategies, paving the way to
automation.


\subsection{Enriching Horn clauses with Coupling Constraints}

Constrained Horn clauses
are a standard tool for logically encoding
verification proof rules~\citep{bjorner2015horn,grebenshchikov2012}.
For example, a set of  Horn clauses can describe a loop invariant, a
rely-guarantee proof, an Owicki-Gries proof, a ranking function, etc. Since our proofs---winning
coupling strategies---describe probabilistic couplings, our Horn clauses will
involve a new kind of probabilistic constraint.

\paragraph{Horn Clauses.}
We assume \emph{formulas} are interpreted in some first-order
theory (e.g., linear integer arithmetic) with a set $\rels$ of uninterpreted
\emph{relation} symbols. A \emph{constrained Horn clause} $\cls$, or Horn clause for
short, is a first-order logic formula of the form
\[
\rel_1(\vec{v}_1) \land \rel_2(\vec{v}_2) \land \ldots \land \rel_{n-1}(\vec{v}_{n-1})
\land \varphi \longrightarrow \head_\cls
\]
where:
\begin{itemize}
  \item each relation $\rel_i \in \rels$ is of arity  equal to the length of the
    vector of variables $\vec{v}_i$;
  \item $\varphi$ is an interpreted formula over the first-order theory (e.g., $x > 0$);
  \item the left-hand side of the implication ($\longrightarrow$) is called the
    \emph{body} of $\cls$; and
  \item $H_\cls$, the \emph{head} of $\cls$, is either a relation application
    $\rel_n(\vec{v}_n)$ or an interpreted formula $\varphi'$.
\end{itemize}
We will allow interpreted formulas to contain disjunctions.
For clarity of presentation, we will use $\Rightarrow$ to denote implications in an interpreted formula, and $\to$ to denote the implication in a Horn clause.  All free variables
are assumed to be universally quantified, e.g., $x+y > 0
\longrightarrow \rel(x,y)$ means $\forall x,y \ldotp x + y > 0
\longrightarrow \rel(x,y)$.
%
%
For conciseness we often write
$\rel(\cvars,x_1,\ldots,x_m)$ to denote the relation application
$\rel(v_1,\ldots,v_n,x_1,\ldots,x_m)$, where $\cvars$ is the vector of variables
$v_1,\ldots,v_n$.

\paragraph{Semantics.}
We will write $\clss$ for a set of clauses $\{\cls_1,\ldots,\cls_n\}$.
Let $G$ be a graph over relation symbols such that
there is an edge $(r_1,r_2)$ iff $r_1$ appears in the body of
some clause $\cls_i$ and $r_2$ appears in its head.
We say that $\clss$ is \emph{recursive} iff $G$ has a cycle.
The set $\clss$ is \emph{satisfiable} if
there exists an interpretation $\rho$ of relation symbols as relations in the
theory such that every clause
$\cls \in \clss$ is valid.
We say that $\interp$ satisfies $\clss$ (denoted $\interp\models\clss$)
iff for all
$\cls \in \clss$, $\interp \cls$ is valid (i.e., equivalent to $\emph{true}$),
where $\interp \cls$ is $\cls$ with every relation application $\rel(\vec{v})$
replaced by $\interp(\rel(\vec{v}))$.

The next example uses Horn clauses to prove a simple Hoare triple for a
standard, deterministic program.  For a detailed exposition of Horn clauses in
verification, we refer to the excellent survey by~\citet{bjorner2015horn}.

\begin{figure}[t!]
  \centering
  \begin{minipage}{.5\textwidth}
\includegraphics[scale=1.4]{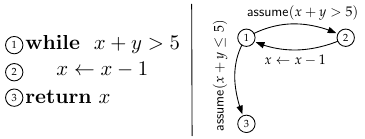}
\end{minipage}
\begin{minipage}{.49\textwidth}
\smaller
\begin{align*}
  \cls_1 &: x = y  \longrightarrow \rel_1(x,y) & \text{precondition}\\
  \cls_2 &: \rel_1(x,y) \land x + y > 5 \longrightarrow \rel_2(x,y)& \text{loop entry}\\
  \cls_3 &: \rel_2(x,y) \land x'=x-1 \longrightarrow \rel_1(x',y) &\text{loop body}\\
  \cls_4 &: \rel_1(x,y) \land x + y \leq 5 \longrightarrow \rel_3(x,y) &\text{loop exit}\\
  \cls_5 &: \rel_3(x,y) \longrightarrow x \leq y &\text{postcondition}
\end{align*}
\end{minipage}
\caption{Example of using Horn clauses for proving a Hoare triple}\label{fig:hoare}
\end{figure}

\begin{example}
Consider the simple program $\prog$ and its graph representation
in \cref{fig:hoare}.
Suppose we want to prove the Hoare triple
$\{x = y\} ~\prog~ \{x \leq y\}$.
  We generate the   Horn clauses
  shown in \cref{fig:hoare},
  where relations $\rel_i$ denote the Hoare-style annotation
  at location $i$ of $\prog$.
  Each clause encodes either the pre/postcondition
  or the semantics of one of the program statements.

  Notice that the constraint system is recursive.
  Intuitively, $\rel_i$ captures all reachable states
  at location $i$; relation $\rel_1$
  encodes an inductive loop invariant that holds at the loop head.
  We can give a straightforward reading of each clause. For instance, $\cls_3$ states that
  if $(x,y)$ is reachable at location $2$, then
  $(x-1,y)$ must be reachable at location $1$;
  $\cls_5$ stipulates that all states
  reachable at location 5 must be such that $x \leq y$,
  i.e., satisfy the postcondition.

  One possible satisfying assignment $\interp$ to these constraints
  is $\interp(\rel_i) = \{(a,b) \mid a \leq b\}$ for $i \in [1,3]$.
  Applying $\interp$ to
  $\rel_1(x,y)$ results in $x \leq y$, the loop invariant.
  We can interpret $\cls_1$ and $\cls_3$ under $\interp$:
  \begin{align*}
    \interp\cls_1 : x = y  \longrightarrow x \leq y &&
    \interp\cls_3 : x \leq y \land x'=x-1 \longrightarrow x' \leq y
  \end{align*}
  Observe that all $\interp\cls_i$ are valid,
  establishing validity of the original Hoare triple.
\end{example}

\paragraph{Horn Clauses Modulo Coupling Constraints.}
We now enrich Horn clauses with a new form of constraint that describes
couplings between pairs of distributions; we call the resulting
constraints \emph{Horn modulo couplings} (\hmc).

\begin{definition}[Coupling constraints]
  Suppose we have a relation $\rel$ of arity $n + 3$ for some $n \geq 0$, and
  suppose we have two distribution families $M_1$ and $M_2$.
  A coupling constraint is of the form:
  \[
    M(\vec{v}_1) \aspace{\rel(\vec{v}_3,-,-,-)} M(\vec{v}_2)
  \]
  An interpretation $\interp$ satisfies such a constraint iff for every
  consistent assignment $\vec{c}_1, \vec{c}_2, \vec{c}_3$ to the variables
  $\vec{v}_1$, $\vec{v}_2$, $\vec{v}_3$, we have
  $
    M(\vec{c}_1)
    \aspace{\interp(\rel(\vec{c}_3,-,-,-))}
    M(\vec{c}_2),
  $
  where $\interp(\rel(\vec{c}_3,-,-,-))$ is defined
  as $\{(k_1,k_2,\theta) \mid (\vec{c}_3,k_1,k_2,\theta) \in \interp(r)\}$.
  In other words, $\interp(\rel(\vec{c}_3,-,-,-))$
  is the ternary relation resulting from fixing the
  first $n$ components of $\interp(\rel)$ to $\vec{c}_3$.
  Note that the vectors $\{\vec{v}_i\}_i$
  may share variables.
\end{definition}

\begin{example}

Consider the simple coupling constraint
$
   \lap_c(x_1) \aspace{\rel(x_1,x_2,-,-,-)} \lap_c(x_2),
$
where $c$ is a positive constant.  A simple satisfying assignment is the null
coupling from \cref{def:null}:
\[
\interp(r) = \{(a_1,a_2,b_1,b_2,0) \mid b_1 - b_2 = a_1 - a_2\}
\]
Effectively, $\interp(r)$ says: \emph{for any values $(a_1,a_2)$ of $(x_1,x_2)$,
  the relation  $\Lambda_\varnothing = \{(b_1,b_2,0) \mid b_1-b_2 = a_1-a_2\}$ is a coupling
  such that $\lap_c(a_1) \aspace{\Lambda_\varnothing} \lap_c(a_2)$}.
\end{example}

\begin{example}

For a simpler example, consider the constraint
$
   \lap_c(x) \aspace{\rel(-,-,-)} \lap_c(x),
$
where $c$ is a positive constant.
A  satisfying assignment to this constraint
is:
$
\interp(r) = \{(b_1,b_2,0) \mid b_1=b_2\},$
since the two distributions are the same
for any value of variable $x$.
Another possible satisfying assignment
is $\interp(r) = \{(b_1,b_2,2c) \mid b_1+2=b_2\}$,
by the shift coupling (\cref{def:shift}).
\end{example}

\begin{figure}[t!]
  \centering
\smaller
\begin{prooftree}
~
\justifies
 \enc{\adj}\land \costvar =0 \to \inv_{\lentry}(\cvars,0)  \in \clss
\using \cinit
\end{prooftree}
\hspace{.5in}
\begin{prooftree}
\justifies
\inv_{\lexit}(\cvars,\costvar)
\to
\costvar \leq \eps \land (\retva = \ret \Rightarrow \retvb = \ret)
\in \clss
\using \cdp
\end{prooftree}

\vspace{.3in}

\begin{prooftree}
e = (i,j) \in \locs
\quad \quad
\stmt_e = v \gets \expr
\justifies
\inv_i(\cvars,\costvar) \land v_1' = \enc{\expr\rel_1}
\land v_2' = \enc{\expr\rel_2}
\to \inv_j(\cvars[v'],\costvar)  \in \clss
\using \cassign
\end{prooftree}

\vspace{.3in}
\begin{prooftree}
e = (i,j) \in \locs
\quad \quad
\stmt_e = \assume(\bexpr)
\justifies
\inv_i(\cvars,\costvar) \land \enc{\bexpr\rel_1} \land \enc{\bexpr\rel_2}
\to \inv_j(\cvars,\costvar)  \in \clss
\using \casmconst
\end{prooftree}
\hspace{.3in}
\begin{prooftree}
e = (i,j) \in \locs
\quad \quad
\stmt_e = \assume(\bexpr)
\justifies
\inv_i(\cvars,\costvar)
\to \enc{\bexpr\rel_1} \equiv \enc{\bexpr\rel_2}  \in \clss
\using \casmsync
\end{prooftree}

\vspace{.3in}

\begin{prooftree}
e = (i,j) \in \locs
\quad \quad
\stmt_e = v \sim \dexpr
\justifies
\inv_i(\cvars,\costvar) \land
\strategy_e(\cvars,v_1',v_2',\distance)
\to \inv_j(\cvars[v'],\costvar + \distance)  \in \clss
\using \cstrategy
\end{prooftree}
\hspace{.15in}
\begin{prooftree}
e = (i,j) \in \locs
\quad \quad
\stmt_e = v \sim \dexpr
\justifies
\enc{\dexpr_1} \ \aspace{\strategy_e(\cvars,-,-,-)}\
\enc{\dexpr_2}
 \in \clss
\using \ccouple
\end{prooftree}

\vspace{.3in}

We use $\cvars[v']$ to denote the vector $\cvars$ but with
variables $v_1$ and $v_2$ replaced by their primed versions,
$v_1'$ and $v_2'$.

\caption{Generating Horn modulo coupling constraints}
\label{fig:hmc}
\end{figure}

\subsection{Generating Horn clauses from Programs}
For a program $\prog$, we generate clauses $\clss$ specifying
a winning coupling strategy $\{ \strp_j \}_{j \in
  \dom_\retv}$; as we saw in \cref{thm:winning}, a winning coupling
strategy implies $\epsilon$-differential privacy.
A winning coupling strategy ensures that the coupled postcondition
$\cpost_{\strp_j}(\adj \times \{0\}, \trace)$ satisfies certain conditions for
every trace $\trace \in \traces(P)$. We thus generate a set of constraints
whose solutions capture all sets $\cpost_{\strp_j}(\adj \times \{0\}, \trace)$.
Since the set $\traces(P)$ is potentially infinite (due to loops), we
generate a recursive system of constraints.
Rather than find a coupling
strategy separately for each possible output value $j$, we will parameterize our
constraints by a logical variable $\ret$ representing a possible output.


\paragraph{Invariant and Strategy Relations.}
Our generated constraints $\clss$ mention two unknown relations.
\begin{itemize}
  \item $\inv_i(\cvars,\costvar)$ encodes the \emph{coupled invariant} at
    location $i \in \locs$ in the program.  This relation captures the set
    of coupled states $\cpost_{\strp_j}(\adj \times \{0\}, \trace)$ for traces
    $\trace$ that begin at location $\lentry$ and end at location $i$. The
    first-order variables $\cvars$ model two copies of program variables for the two
    executions of $\prog$, tagged with subscript $1$ or $2$, respectively, along
    with the logical variable $\ret$ representing a possible return value in
    $\dom_\retv$. The variable $\costvar$ models the accumulated
    cost for the particular coupled states and program location.
  \item $\strategy_e(\cvars,v_1',v_2',\distance)$ encodes the coupling strategy
    for sampling statement $\stmt_e$. If the values of $\cvars$ model two
    program states, $\strategy_e$ encodes a coupling between the distributions
    in $\stmt_e$.
\end{itemize}

\paragraph{Constraint Generation.}
Given a program $\prog$ and adjacency relation $\adj$, the rules in
\cref{fig:hmc} define a set $\clss$ of constraints.  The rule $\ccouple$
generates a coupling constraint for a sampling statement; all other rules
generate standard constrained Horn clauses.

Before walking through the rules, we first set some notation.  We use $\expr_1$
(resp.  $\expr_2$) to denote $\expr$ with all its variables tagged with subscript $1$
(resp.  $2$).  As is standard, we assume there is a one-to-one correspondence
between expressions in our language and our first-order theory, e.g., if
$\bexpr$ is $x > 0$ and $x$ is of type $\mathds{Z}$, then we treat $\bexpr_1$
as the constraint $x_1 >0$ in the theory of linear integer arithmetic. We assume
that the adjacency relation $\adj$ is a formula in our first-order theory.
Finally, technically there are two distinct input variables $\epsilon_1$ and
$\epsilon_2$ representing the target privacy level. Since these variables are
assumed to be equal in any two adjacent states, we will simply use a
single variable $\epsilon$ in the constraints.

Now, we take a closer look at the constraint generation rules. The first two
rules describe the initial and final states: the rule $\cinit$ specifies that
the invariant at $\lentry$ contains all adjacent states and $\costvar$ is $0$.
The rule $\cdp$ states that the invariant at $\lexit$ satisfies the \edp
conditions in \cref{thm:winning} for every return value $\ret$ and every $\eps
> 0$.

The next three rules
describe the coupled postcondition for deterministic statements.
The rule $\cassign$ encodes the effect of executing $v \gets \expr$ in the two
executions of the program. Primed variables, e.g., $v_1'$, to denote the
modified (new) value of $v_1$ after assignment.  The rule $\casmconst$ encodes
effects of assume statements, and $\casmsync$ ensures both processes are
synchronized at assume statements.

The last two rules encode sampling statements.
The rule $\cstrategy$ generates a clause that uses the coupling
encoded by $\strategy_e$ to constrain the values of $v_1'$ and $v_2'$
and increments the privacy cost $\costvar$ by $\distance$.
The rule $\ccouple$ generates a coupling constraint
specifying that $\strategy_e$ encodes a coupling of the
distributions in the two executions.
We interpret the distribution expressions $\dexpr_1,\dexpr_2$ as
distribution families, parameterized by the state.

\begin{example}
For illustration, let us walk through the constraints generated for the simple
program from \cref{ex:cpost}, reproduced below.
Assume the adjacency relation is  $|x_1-x_2| \leq 1 \land \eps > 0$, and let the vector
$\vec{v}$ contain the variables $\{x_1,x_2,\eps,\ret\}$.  We write
$\vec{v}[x']$ for $\vec{v}[x_1\mapsto x_1', x_2\mapsto x_2']$ (as described at
the bottom of \cref{fig:hmc}).  Then, the following constraints
are generated by the indicated rules from \cref{fig:hmc}.
\begin{figure}[h!]
  \vspace{-.2in}
  \begin{minipage}{.3\textwidth}
    \small
    \begin{algorithmic}[1]
      \State $x \gets x + 10$
      \State $x \sim \lap_\eps(x)$
      \State \Return $x$
    \end{algorithmic}
\end{minipage}
\begin{minipage}{.65\textwidth}
  \small
  \begin{align*}
      |x_1 - x_2| \leq 1 \land \eps > 0 \land \costvar=0 \longrightarrow \inv_1(\vec{v},\costvar) && \cinit
     \\
     \inv_1(\vec{v},\costvar) \land x_1' = x_1 + 10 \land x_2' = x_2 + 10 \longrightarrow \inv_2(\vec{v}[x'],\costvar) && \cassign
     \\
     \inv_2(\vec{v}, \costvar) \land \strategy(\vec{v}[x'],x_1',x_2',\distance)
     \longrightarrow
     \inv_3(\vec{v}[x'], \costvar + \distance) && \cstrategy
     \\
     \inv_3(\vec{v},\costvar) \longrightarrow \costvar \leq \eps \land (x_1 = \ret \Rightarrow x_2 = \ret) && \cdp
     \\
     \lap_{\eps}(x_1) \aspace{\strategy(\vec{v},-,-,-)} \lap_{\eps}(x_2)
     && \ccouple
    \end{align*}
\end{minipage}
\end{figure}

\vspace{-.1in}
\noindent
The relation $\strategy$ describes the coupling for the single sampling statement.
\end{example}

\paragraph{Soundness.}
To connect our constraint system to winning coupling strategies, the main
soundness lemma states that a satisfying assignment $\interp$ of $\clss$ encodes
a winning coupling strategy.

\begin{lemma} \label{lem:constr-wcc}
  Let $\interp \models \clss$, where $\clss$ is generated for a program $\prog$
  and adjacency relation $\adj$. Define a family of coupling strategies
  $\{\strp_j\}_{j \in \dom_\retv}$ for $\prog$ by
  \[
    \strp_j(\stmt_e, s_1,s_2)
    = \{(a,a',c) \mid (\vec{q}_j,a,a',c) \in \rho(\strategy_e)\}
  \]
  for every sampling statement $\stmt_e$ in $\prog$ and every pair of states
  $(s_1,s_2)$, where $\vec{q}_j$ replaces each tagged program variable in
  $\vec{v}$ with the value in $s_1$ and $s_2$ respectively, and sets the output
  variable $\ret$ to $j$. Then, $\{\strp_j\}_{j}$ is a winning
  coupling strategy.
\end{lemma}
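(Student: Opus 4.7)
The plan is to verify the two requirements of a winning coupling strategy from \cref{thm:winning}: that each $\strp_j$ is synchronizing for every maximal trace, and that the coupled postcondition of $\adj \times \{0\}$ is contained in the target set at $\lexit$. Before either, I first check that each $\strp_j$ is a well-defined coupling strategy per \cref{def:cpost}. This is immediate from the $\ccouple$ clauses: since $\interp \models \clss$, for every sampling statement $\stmt_e = v \sim \dexpr$ and every instantiation of $\cvars$ to a tuple $\vec{q}_j$ encoding $(s_1,s_2,j)$, the fiber $\{(a,a',c) \mid (\vec{q}_j,a,a',c) \in \interp(\strategy_e)\}$ is a variable approximate coupling of $s_1(\dexpr)$ and $s_2(\dexpr)$; that is exactly $\strp_j(\stmt_e,s_1,s_2)$.

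The central step is an induction establishing that $\interp(\inv_i)$ over-approximates the coupled postcondition. Fix $j \in \dom_\retv$. For every prefix $\trace' = \stmt_{e_1},\dots,\stmt_{e_k}$ of a maximal trace ending at location $i$, every pair $(s_1,s_2) \in \adj$, and every $(s_1',s_2',c) \in \cpost_{\strp_j}(\{(s_1,s_2,0)\},\trace')$, I claim the tuple $\vec{q}_j'$ encoding $(s_1',s_2',j)$ satisfies $\interp(\inv_i(\vec{q}_j',c))$. The base case $k = 0$ at $\lentry$ uses $\cinit$ directly, since the initial pair lies in $\adj$ with cost $0$. For the inductive step I case-split on the type of $\stmt_{e_k}$: assignments are handled by $\cassign$, reading the primed-variable update from the semantics of $\cpost$; assume statements are handled by $\casmconst$ (the guard evaluates true on both sides because only such pairs survive in $\cpost$); and sampling statements are handled by $\cstrategy$ together with the definition of $\strp_j$, which guarantees that any newly added $(a_1,a_2,c')$ satisfies $(\vec{q}_j,a_1,a_2,c') \in \interp(\strategy_e)$, so the cost increment $\distance = c'$ and the variable updates $v_1' = a_1$, $v_2' = a_2$ line up with the conclusion $\inv_j(\cvars[v'],\costvar+\distance)$ of the clause.

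Synchronization then follows from the $\casmsync$ clauses: at each assume location $i$, any element of the coupled postcondition satisfies $\interp(\inv_i)$ by the induction, and $\casmsync$ forces $\interp(\inv_i)$ to imply $\bexpr_1 \equiv \bexpr_2$; hence the guard evaluates identically on both sides of every coupled pair, making $\strp_j$ synchronizing for every maximal trace. Finally, for any $\trace \in \traces(\prog)$, applying the $\cdp$ clause to an element $(s_1',s_2',c) \in \cpost_{\strp_j}(\adj \times \{0\},\trace)$ at $\lexit$ yields both $c \leq \eps$ and $s_1'(\retv) = j \Rightarrow s_2'(\retv) = j$, which is precisely the containment demanded by \cref{thm:winning}. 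Combined with synchronization, this shows $\{\strp_j\}_j$ is a winning coupling strategy.

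The main obstacle is the careful bookkeeping of how program states correspond to the logical tuples $\cvars$ used in the Horn clauses. In particular, the primed variables $v_1',v_2'$ model post-values after assignment or sampling while the unprimed variables model pre-values, and the matching with the two-sided updates performed by $\cpost$ must be made precise via a state-encoding function. A second subtlety is that the Horn clauses are universally quantified over $\ret$, whereas the coupling strategy family is indexed by a specific output $j$; I will instantiate $\ret$ to $j$ uniformly throughout the induction so that the tuples appearing in $\interp(\inv_i)$ and $\interp(\strategy_e)$ share the same $\ret$-component, keeping the correspondence between $\interp$ and $\strp_j$ tight.
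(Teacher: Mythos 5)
Your proposal is correct and takes essentially the same approach as the paper's: establish well-definedness of $\strp_j$ from the $\ccouple$ clauses, prove by induction on trace length that $\interp(\inv_i)$ over-approximates the coupled postcondition (with a case split on $\cassign$, $\casmconst$, $\cstrategy$, anchored at $\cinit$), then read off synchronization from $\casmsync$ and the winning condition from $\cdp$. The paper packages the induction as a separate lemma (``soundness of coupled invariants'') and the remainder as the proof of Theorem~\ref{thm:hmc}, but the structure and content are the same as yours.
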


Now soundness is immediate
by \cref{lem:constr-wcc,thm:winning}.

\begin{theorem}[Soundness of constraints]\label{thm:hmc}
  Let $\clss$ be a system of constraints generated for program $\prog$ and
  adjacency relation $\adj$, as per \cref{fig:hmc}.  If $\clss$ is
  satisfiable, then $\prog$ is $\epsilon$-differentially private.
\end{theorem}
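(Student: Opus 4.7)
The plan is to derive this theorem as a direct corollary of the already-stated \cref{thm:winning} (winning coupling strategies imply \edp) together with \cref{lem:constr-wcc} (a satisfying assignment to $\clss$ yields a winning coupling family). So the real work is in justifying \cref{lem:constr-wcc}; once that is in hand, the theorem follows by immediate composition. Concretely, given $\interp \models \clss$, I would extract the strategy family $\{\strp_j\}_{j\in\dom_\retv}$ as defined in the lemma and then verify the three properties demanded by \cref{thm:winning}: that each $\strp_j$ is a well-formed coupling strategy, that it is synchronizing along every trace, and that its coupled postcondition at $\lexit$ lands inside the target set.

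The first property is local: for each sampling edge $e$ with statement $v \sim \dexpr$, the clause generated by $\ccouple$ says exactly that $\interp(\strategy_e(\vec{c},-,-,-))$ is a coupling of $s_1(\dexpr)$ and $s_2(\dexpr)$ whenever $\vec{c}$ encodes the state pair $(s_1,s_2)$; by construction of $\strp_j$ from $\interp(\strategy_e)$, this is precisely the condition in \cref{def:cpost}. The main obstacle is the second and third properties, which require an inductive invariance argument: for every trace prefix $\trace_i = \stmt_1,\ldots,\stmt_i$ from $\lentry$ to some location $\ell_i$, and every pair $(s_1,s_2)\in\adj$ with $s_1(\eps)=s_2(\eps)=c$,
\[
\cpost_{\strp_j}(\{(s_1,s_2,0)\},\trace_i) \;\subseteq\; \{(s_1',s_2',c') \mid (\vec{q}_j',c')\in \interp(\inv_{\ell_i})\},
\]
where $\vec{q}_j'$ encodes $(s_1',s_2')$ with $\ret$ fixed to $j$. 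The base case $i=0$ follows from $\cinit$. For the inductive step, I would case-split on $\stmt_i$: the $\cassign$ and $\casmconst$ clauses handle deterministic updates and guard filtering respectively, matching the corresponding cases of $\cpost_\strp$ exactly; the $\cstrategy$ clause handles sampling statements, where every element of $\strp_j(\stmt_e,s_1,s_2)$ corresponds to a satisfying triple of $\interp(\strategy_e)$, and the clause forces the updated state plus incremented $\costvar+\distance$ back into $\interp(\inv_j)$. The $\casmsync$ clause immediately gives synchronization: if $(s_1',s_2',-)\in\cpost_{\strp_j}(\cdots,\trace_{i-1})$ and $\stmt_i=\assume(\bexpr)$, then $(\vec{q}_j',c')\in\interp(\inv_{\ell_{i-1}})$, whence $\enc{\bexpr\rel_1}\equiv\enc{\bexpr\rel_2}$ holds, so $s_1'(\bexpr)=s_2'(\bexpr)$.

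The third property then drops out at $\ell_i = \lexit$: specializing the invariance claim to maximal traces and combining with the $\cdp$ clause, every $(s_1',s_2',c')$ in $\cpost_{\strp_j}(\adj\times\{0\},\trace)$ satisfies $c'\le \eps$ and $s_1'(\retv)=j \Rightarrow s_2'(\retv)=j$. This matches the hypothesis of \cref{thm:winning} verbatim for the family $\{\strp_j\}_{j}$, so $\prog$ is $\eps$-differentially private. The delicate bookkeeping step---and the place I expect most of the proof energy to go---is lining up the variable tagging convention used in the Horn encoding (the subscripts $1,2$ on $\vec{v}$, primed update variables, and the fixed component $\ret = j$) with the pairs of concrete states that appear in $\cpost_{\strp_j}$; this is mostly notational, but it is easy to get wrong, and it is where the claimed correspondence between $\interp(\strategy_e)$ and the strategy $\strp_j$ is actually used.
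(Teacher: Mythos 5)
Your proposal matches the paper's proof exactly: the paper derives \cref{thm:hmc} as an immediate corollary of \cref{lem:constr-wcc} and \cref{thm:winning}, and your sketch of \cref{lem:constr-wcc} via an inductive invariance claim over trace prefixes (base case from $\cinit$, case split on $\cassign$/$\casmconst$/$\cstrategy$, synchronization from $\casmsync$, and the final condition from $\cdp$) is the same argument the paper carries out in its appendix as \cref{lemma:hmc}. No gaps; the decomposition, the key lemma, and the case analysis all coincide with the paper's.
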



\section{Parameterized Coupling Families}\label{sec:couplings}
In this section, we
demonstrate how to transform the probabilistic coupling constraints into more
standard logical
constraints. We introduce parameterized families of couplings
(\cref{ssec:couplings}) and then show how to use them to eliminate coupling
constraints (\cref{ssec:transform}).

\subsection{Coupling families} \label{ssec:couplings}

Since coupling strategies select couplings for distribution expressions instead
of purely mathematical distributions, it will be useful to consider couplings
between two families of distributions $M_1$, $M_2$ parametrized by variables,
rather than just between two concrete distributions.

\begin{definition}[Coupling families]
  Let $M_1, M_2$ be two distribution families with distribution parameters in
  some set $W$. Then, a \emph{coupling family} $\Lambda[w_1, w_2]$ for $M_1$ and
  $M_2$ assigns a coupling
  \[
    M_1[w_1] \aspace{\Lambda[w_1, w_2]} M_2[w_2]
  \]
  for all $(w_1, w_2) \in \pre_\Lambda$,
  where $\pre_\Lambda \subseteq W \times W$ is a \emph{precondition}
  on the parameter combinations.

  We use square brackets to emphasize the distribution parameters.
\end{definition}


Besides distribution parameters---which model the state of the
distributions---couplings can also depend on logical parameters independent of
the state. We call this second class of parameters \emph{coupling parameters}.
\cref{tbl:couplings} lists the selection of couplings we discuss in this
section along with their preconditions and coupling parameters; the distribution
parameters are indicated by the square brackets.

\paragraph{Null and shift coupling families.}
  In \cref{sec:prelims}, we saw two examples of couplings for the Laplace and
  Exponential distributions: the shift coupling (\cref{def:shift}) and the
  null coupling (\cref{def:null}). These couplings are examples of
  coupling families in that they specify a coupling for a family of
  distributions. For instance, the Laplace distribution has the mean value $z$
  as a distribution parameter; we will write $\lap_\epsilon[z]$ instead of
  $\lap_\epsilon(z)$ when we want to emphasize this dependence.\footnote{%
    Properly speaking $\eps$ should be considered a parameter as well. To
    reduce notation, we will suppress this dependence and treat $\epsilon$ as a
    constant for this section.}
 The null and shift coupling families couple the distribution
 families $\lap_\eps[z_1],
 \lap_\eps[z_2]$ for any two $z_1, z_2 \in \mathds{Z}$.
 For example, for a fixed $z_1, z_2$, we have the shift coupling
 $\lap_\eps(z_1) \aspace{\Lambda_{+k}} \lap_\eps(z_1)$,
 where $\Lambda_{+k} = \{(n_1,n_2, |k+z_1-z_2|\eps) \mid n_1 + k = n_2\}$.
 Notice that $k \in \mathds{Z}$ is a coupling parameter; depending on what we
 set it to, we get different couplings.

 In the case of the exponential distribution, the shift coupling family couples
 the families $\olap_\eps[z_1]$  and $\olap_\eps[z_2]$ under the precondition
 that $z_2 - z_1 \leq k$.

 \begin{table}[t]
   \scriptsize
   \begin{tabular}{cc>{\centering}m{3.4cm}cc}
     \toprule
     Name & Distribution family & Precondition & Coupling family & Coupling params. \\
     \midrule
     Null (Lap) & $\lap_\epsilon[z_1], \lap_\epsilon[z_2]$
     & -- & $\Lambda_\varnothing[z_1, z_2]$ & -- \\
     Shift (Lap) & $\lap_\epsilon[z_1], \lap_\epsilon[z_2]$
     & -- & $\Lambda_{+k}[z_1, z_2]$ & $k \in \mathds{Z}$ \\
     Choice (Lap) & $\lap_\epsilon[z_1], \lap_\epsilon[z_2]$ & $\mathit{NO}(\cpred, \Lambda_{+k}[z_1,z_2], \Lambda_\varnothing[z_1,z_2])$
     & $(\cpred \mathrel{?} \Lambda_{+k}[z_1,z_2]: \Lambda_\varnothing[z_1,z_2])$ & $k \in \mathds{Z}, \cpred \subseteq \mathds{Z}$  \\
     Null (Exp) & $\olap_\epsilon[z_1], \olap_\epsilon[z_2]$
     & -- & $\Lambda_\varnothing[z_1, z_2]$ & -- \\
     Shift (Exp) & $\olap_\epsilon[z_1], \olap_\epsilon[z_2]$
     & $z_2 - z_1 \leq k$ & $\Lambda'_{+k}[z_1, z_2]$ & $k \in \mathds{Z}$ \\
     \bottomrule
   \end{tabular}
   \caption{Coupling families (selection)}
   \label{tbl:couplings}
   \vspace{-.15in}
 \end{table}

\paragraph{Choice coupling.}
We will also use a novel coupling construction, the \emph{choice coupling},
which allow us to combine two couplings, using
a predicate on the first sample
space to decide which coupling to apply. If the two couplings
satisfy a \emph{non-overlapping condition}, the result is again a coupling. This
construction is inspired by ideas from \citet{zhang2016autopriv}, who
demonstrate how the pairing between samples (\emph{randomness alignment}) can be
selected depending on the result of the first sample.
These richer couplings can simplify privacy proofs (and hence solutions to
invariants in our constraints), in some cases letting us construct a single
proof that works for all possible output values instead of building a different
proof for each output value.


\begin{lemma}[Choice coupling] \label{lem:combo-couple}
  Let $\mu_1 \in \sdist(\aset)$, $\mu_2 \in \sdist(\aset)$ be two
  sub-distributions, and let $\cpred$ be a predicate on $\aset$. Suppose that there
  are two couplings
  $
    \mu_1 \aspace{\Lambda} \mu_2
  $
  and
  $
    \mu_1 \aspace{\Lambda'} \mu_2
  $,
  and suppose that $\Lambda, \Lambda'$ satisfy the following
  \emph{non-overlapping} condition: for every pair $a_1 \in \cpred, a_1' \notin
  \cpred$ and $a_2 \in \aset$, $(a_1, a_2, -) \in \Lambda$ and $(a_1', a_2, -)
  \in \Lambda'$ do not both hold.  We abbreviate the non-overlapping condition
  as $\mathit{NO}(\cpred,\Lambda,\Lambda')$.  In other words, there is no
  element $a_2$ that is related under $\Lambda$ to an element in $\cpred$ and
  related under $\Lambda'$ to an element not in $\cpred$. Then the following
  relation
  \[
    (\cpred \mathrel{?} \Lambda : \Lambda') \triangleq
    \{ (a_1, a_2, c) \mid
    (a_1 \in \cpred \implies \Lambda(a_1, a_2, c))
    \land
    (a_1 \notin \cpred \implies \Lambda'(a_1, a_2, c)) \}
  \]
  is a coupling
  $
    \mu_1 \aspace{(\cpred \mathrel{?} \Lambda : \Lambda')} \mu_2 .
  $
\end{lemma}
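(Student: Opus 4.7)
The plan is to construct explicit witness sub-distributions for the combined coupling by splicing together the witnesses of $\Lambda$ and $\Lambda'$ along the predicate $\cpred$, and then verify the three defining conditions of a variable approximate coupling.

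Concretely, let $(\mu_L, \mu_R)$ be the witnesses realizing $\mu_1 \aspace{\Lambda} \mu_2$ and let $(\mu_L', \mu_R')$ be the witnesses realizing $\mu_1 \aspace{\Lambda'} \mu_2$. Define the new witnesses by case analysis on the first coordinate:
\[
  \mu_L^\star(a_1, a_2) \triangleq
  \begin{cases} \mu_L(a_1, a_2) & a_1 \in \cpred \\ \mu_L'(a_1, a_2) & a_1 \notin \cpred \end{cases}
  \qquad
  \mu_R^\star(a_1, a_2) \triangleq
  \begin{cases} \mu_R(a_1, a_2) & a_1 \in \cpred \\ \mu_R'(a_1, a_2) & a_1 \notin \cpred \end{cases}
\]
The first marginal condition $\pi_1(\mu_L^\star) = \mu_1$ is immediate, since $\pi_1(\mu_L) = \pi_1(\mu_L') = \mu_1$ and the two pieces of $\mu_L^\star$ involve disjoint values of $a_1$. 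The support condition and the distance condition likewise split cleanly on whether $a_1 \in \cpred$: if $a_1 \in \cpred$, then points in the support of $\mu_L^\star, \mu_R^\star$ lie in $\mathrm{dom}(\Lambda)$ and hence in $\mathrm{dom}(\cpred \mathrel{?} \Lambda : \Lambda')$, and the approximation bound follows from the corresponding bound for $\Lambda$; the case $a_1 \notin \cpred$ is symmetric.

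The main obstacle is the second marginal condition $\pi_2(\mu_R^\star) \leq \mu_2$, and this is where the non-overlapping hypothesis $\mathit{NO}(\cpred, \Lambda, \Lambda')$ is essential. Fix $a_2 \in \aset$; we must bound $\sum_{a_1 \in \cpred} \mu_R(a_1, a_2) + \sum_{a_1 \notin \cpred} \mu_R'(a_1, a_2)$ by $\mu_2(a_2)$. By the support conditions for $\Lambda$ and $\Lambda'$, the first sum is nonzero only if some $(a_1, a_2, -) \in \Lambda$ has $a_1 \in \cpred$, and the second sum is nonzero only if some $(a_1', a_2, -) \in \Lambda'$ has $a_1' \notin \cpred$. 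The non-overlapping hypothesis forbids both situations from occurring for the same $a_2$, so at most one of the two sums is nonzero. Hence for each $a_2$, the total equals either $\pi_2(\mu_R)(a_2) \leq \mu_2(a_2)$ or $\pi_2(\mu_R')(a_2) \leq \mu_2(a_2)$, as required.

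Once all three conditions are verified, $(\mu_L^\star, \mu_R^\star)$ witnesses the coupling $\mu_1 \aspace{(\cpred \mathrel{?} \Lambda : \Lambda')} \mu_2$, completing the proof. I expect the case split for the second marginal to be the only subtle step; the rest is bookkeeping on the piecewise definitions.
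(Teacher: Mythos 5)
Your proposal is correct and follows essentially the same approach as the paper: build the witness pair by case-splitting on $\cpred$ and check the three defining conditions, with the non-overlapping hypothesis used precisely for the second marginal. You spell out the second-marginal argument in more detail than the paper does (though note the two sums you write are bounded by, not equal to, $\pi_2(\mu_R)(a_2)$ or $\pi_2(\mu_R')(a_2)$), but the structure and the key idea are the same.
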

\begin{proof}
  Let $(\mu_L, \mu_R)$ be witnesses for the coupling $\Lambda$, and let
  $(\mu_L', \mu_R')$ be witnesses for the coupling $\Lambda'$. Then we can
  define witnesses $\nu_L, \nu_R$ for the choice coupling $(\cpred \mathrel{?}
  \Lambda : \Lambda')$:
  \[
    \nu_L(a_1, a_2) \triangleq \begin{cases}
      \mu_L(a_1, a_2) &: a_1 \in \cpred \\
      \mu_L'(a_1, a_2) &: a_1 \notin \cpred
    \end{cases}
    \quad \text{and} \quad
    \nu_R(a_1, a_2) \triangleq \begin{cases}
      \mu_R(a_1, a_2) &: a_1 \in \cpred \\
      \mu_R'(a_1, a_2) &: a_1 \notin \cpred .
    \end{cases}
  \]
  The support and distance conditions follow from the support and distance
  conditions for $(\mu_L, \mu_R)$ and $(\mu_L', \mu_R')$. The first marginal
  condition for $\nu_L$ follows from the first marginal conditions for $\mu_L$
  and $\mu_L'$, while the second marginal condition for $\nu_R$ follows from the
  second marginal conditions for $\mu_R$ and $\mu_R'$ combined with the
  non-overlapping condition on $\cpred$, $\Lambda$, and $\Lambda'$.
\end{proof}

We can generalize a choice coupling to a coupling family.
Suppose we have two coupling families
$\Lambda[.], \Lambda'[.]$ for some distribution families $M_1[.]$, $M_2[.]$,
then we can generate a coupling family $(\cpred, \Lambda[.], \Lambda'[.])$
whose coupling parameters are the predicate $\cpred$ and the parameters
of the two families.
(Note that $\cpred$ could also have parameters.)
In the example below, and in \cref{tbl:couplings}, we give one possible
choice coupling for the Laplace distribution.
An analogous construction
applies for the exponential distribution

\begin{example}
 Consider the following two
  coupling families for the Laplace distribution family: the shift coupling $\Lambda_{+k}[z_1,z_2]$ and the null coupling
  $\Lambda_{\varnothing}[z_1,z_2]$.
  %
  Then, we have the coupling family
  \[
    \lap_\epsilon[z_1]
    \aspace{(\cpred \mathrel{?} \Lambda_{+k}[z_1,z_2] : \Lambda_\varnothing[z_1,z_2])}
    \lap_\epsilon[z_2] .
  \]
  under the precondition $\mathit{NO}(\cpred, \Lambda_{+k}[z_1,z_2], \Lambda_{\varnothing}[z_1,z_2])$.
  The parameters of this coupling family are the predicate $\cpred \subseteq \mathds{Z}$ and
  the parameter $k \in \mathds{Z}$ of $\Lambda_{+k}[z_1,z_2]$.
  For one possible instantiation, let $\cpred$ be the predicate $ \{ x \in \mathds{Z}
  \mid x \geq 0 \}$ and let $k=1$.  Then if $|z_1 - z_2| \leq 1$,  the
  non-overlapping condition holds and $(\cpred \mathrel{?} \Lambda_{+1}[z_1,z_2] :
  \Lambda_\varnothing[z_1,z_2])$ is a coupling.
\end{example}


\subsection{Transforming Horn clauses}\label{ssec:transform}
Now, we can transform \hmc constraints and eliminate coupling constraints by
\emph{restricting} solutions of coupling constraints to use the previous
coupling families. We model coupling parameters and the logic of the coupling
strategy by Horn clauses with \emph{uninterpreted functions}.

\paragraph{Horn clauses with uninterpreted functions.}
Recall our Horn clauses may mention uninterpreted relation symbols $\rels$. We
now assume an additional set $\fs$ of \emph{uninterpreted
  function} symbols, which can appear in interpreted formulas $\varphi$ in
the body/head of a clause $\cls$. A function symbol $\f \in \fs$ of
\emph{arity} $n$ can be applied to a vector of variables of length $n$; for
example, consider the clause
\[
\rel_1(x) \land \f(x,y) = z \longrightarrow \rel_2(z)
\]
In addition to mapping each relation symbol $\rel \in \rels$ to a relation, an
interpretation $\interp$ now also maps each function symbol $\f \in \fs$ to a
function definable in the theory---intuitively, an expression.
We say that $\interp \models \clss$ iff $\interp \cls$ is
valid for all $\cls \in \clss$, where $\interp \cls$ also replaces every
function application $\f(\cvars)$ by $\interp(\f(\cvars))$.

\begin{example}
To give an intuition for the semantics,
consider the two simple clauses:
\begin{align*}
\cls_1 : x = y - 5 \longrightarrow \rel(x,y) &&
\cls_2 : \rel(x,y) \land \f(x) = z \longrightarrow z > y
\end{align*}
A possible satisfying assignment $\interp$
 maps $\rel(x,y)$ to the
formula $x = y - 5$ (ensuring that the first
clause is valid)
and $\f(x)$ to the expression $x + 6$ (ensuring that the
second clause is valid).
\end{example}

Roughly, we can view the problem of solving a set of Horn clauses
$\{C_1,\ldots,C_n\}$ as solving a formula of the form $\exists f_1,\ldots,f_m
\ldotp \exists \rel_1,\ldots,\rel_l \ldotp \bigwedge_i C_i$: Find an
interpretation of $\f_i$ and $\rel_i$ such that $\bigwedge_i C_i$ is valid.  In
our setting, $\f_i$ will correspond to \emph{pieces} of the coupling strategy
and $\rel_i$ will be the invariants showing that the coupling strategy
establishes differential privacy.

\begin{figure}[t!]
  \begin{mdframed}
  \smaller
  \centering
  \begin{align*}
    \trans(\Lambda_\varnothing[z_1,z_2]) &\triangleq d_1 - z_1 = d_2 - z_2 \land \theta = 0\\
    \trans(\Lambda_{+k}[z_1,z_2]) &\triangleq d_1 + f_k() = d_2 \land \theta = |z_1 - z_2 + f_k()|\cdot \eps\\
    \trans(\cpred \mathrel{?} \Lambda_{+k}[z_1,z_2] : \Lambda_\varnothing[z_1,z_2]) &\triangleq
    (f_b(d_1) \Rightarrow \trans(\Lambda_{+k}[z_1,z_2])) \land
    (\neg f_b(d_1) \Rightarrow \trans(\Lambda_\varnothing[z_1,z_2]))\\
    \trans(\Lambda_{+k}'[z_1,z_2]) &\triangleq d_1 + f_k() = d_2 \land \theta = (z_1 - z_2 + f_k())\cdot \eps\\
    \trans(\pre_{\Lambda_{+k}'[z_1,z_2]}) &\triangleq z_2 - z_1 \leq f_k()\\
    \trans(\pre_{(\cpred \mathrel{?} \Lambda_{+k}[z_1,z_2] : \Lambda_\varnothing[z_1,z_2])})
    &\triangleq
    \left(
     f_b(d_1) \land \neg f_b(d_1') \land \trans(\Lambda_{+k}[z_1,z_2])
    \land \trans(\Lambda_\varnothing[z_1,z_2])'\right) \Rightarrow  d_2 \neq d_2'
  \end{align*}
\end{mdframed}
  \caption{Encodings of coupling families.
  We use $\varphi'$ to denote $\varphi$ with every
  variable $x$ replaced by $x'$.} \label{fig:enc}
\end{figure}

\paragraph{Transforming coupling constraints.}
To transform coupling constraints into Horn clauses with uninterpreted
functions, we encode each coupling family $\Lambda[w_1,w_2]$ as a first-order
formula, $\trans(\Lambda[w_1,w_2])$, along with its precondition.  The encodings
for the families we consider from \cref{tbl:couplings} are shown in
\cref{fig:enc}, where $f_k()$ is a fresh nullary (constant) uninterpreted
function that is shared between the encoding of a family and its precondition.
Similarly, $f_b(\cdot)$ is a fresh uninterpreted function ranging over Booleans,
representing a predicate.

With the encodings of coupling families,
we are ready to define the transformation.
\begin{definition}[Coupling constraint transformer]\label{def:trans}
  Suppose we have a coupling constraint
  \[
  M_1(\cvars_1) \aspace{\rel(\cvars,-,-,-)} M_2[\cvars_2]
  \]
  Suppose we have a set of possible coupling families $\Lambda^1[\cvars_1,\cvars_2],
  \ldots,\Lambda^n[\cvars_1,\cvars_2]$ for $M_1[\cvars_1], M_2[\cvars_2]$.
  We transform the coupling constraint into the following set of Horn clauses,
  where we use $A \leftrightarrow B$ to denote the
  pair of Horn clauses $A \rightarrow B$ and $B \rightarrow A$,
  and we use $f$ to denote a fresh uninterpreted function with range
   $\{1,\ldots,n\}$:
  \begin{align*}
    \left(\bigwedge_{i=1}^n f(\cvars) = i \implies \trans(\Lambda^i[\cvars_1,\cvars_2])\right) \longleftrightarrow \rel(\cvars,d_1,d_2,\theta) \hspace{.6in}\\
    f(\cvars) = 1 \longrightarrow \trans(\pre_{\Lambda^1}[\cvars_1,\cvars_2])
    \qquad\cdots\qquad
    f(\cvars) = n \longrightarrow \trans(\pre_{\Lambda^n}[\cvars_1,\cvars_2])
  \end{align*}
  Without loss of generality, we assume that variables $d_1,d_2,\theta$ do not
  appear in $\cvars, \cvars_1, \cvars_2$.

  Intuitively, the first clause selects which of the $n$ couplings we pick,
  depending on the values $\cvars$.  The remaining clauses ensure that whenever
  we pick a coupling $\Lambda^i$, we satisfy its precondition.
\end{definition}

\begin{example}
Recall our Report Noisy Max example from \cref{sec:example}.
There, we encoded the single sampling statement using the
following coupling constraint:
\[
  \cls_4 :   \lap_{\eps/2}(q_1[i_1]) \aspace{\strategy(\cvars,-,-,-)} \lap_{\eps/2}(q_2[i_2])
\]
Suppose we take the null and shift coupling families for Laplace
as our possible couplings. Our transformation produces:
\[
\bigwedge\begin{array}{l}
f(\cvars) = 1 \implies \left( d_1' - d_2' = q_1[i_1] - q_2[i_2] \land \distance = 0 \right)\\
f(\cvars) = 2 \implies \left(d'_1 + f_k() = d'_2 \land \distance = |q_1[i_1]-q_2[i_2] + f_k()|\cdot (\eps/2) \right)
\end{array}
\longleftrightarrow \strategy(\cvars,d_1',d_2',\theta)
\]
We omit clauses enforcing the trivial precondition ($\true$) for these coupling
families.

The first conjunct on the left encodes application of the null
coupling; the second conjunct encodes the shift coupling.
The unknowns are the function $f$ and the constant function $f_k$.
As we sketched in \cref{sec:example}, our tool
discovers the interpretation $\emph{ite}(i_1 \neq \ret, 1, 2)$ for $f(\cvars)$
and the constant $1$ for $f_k()$, where $\emph{ite}(b,a,a')$ denotes
the expression that is $a$ if $b$ is true and $a'$ otherwise.
\end{example}

The following theorem formalizes soundness of our transformation.
\begin{theorem}[Soundness of transformation]\label{thm:trans}
  Let $\clss$ be a set of Horn clauses with coupling constraints.
  Let $\clss'$ be the set $\clss$ with coupling constraints replaced
  by Horn clauses, as in \cref{def:trans}.
  Then, if $\clss'$ is satisfiable, $\clss$ is satisfiable.
\end{theorem}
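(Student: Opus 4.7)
The plan is to construct a satisfying interpretation $\interp$ of $\clss$ directly from a given $\interp' \models \clss'$. For every relation symbol $\rel \in \rels$ appearing in $\clss$ (including those that appear as the ``strategy'' relation in a coupling constraint), I would simply set $\interp(\rel) := \interp'(\rel)$. Note that $\clss'$ is obtained from $\clss$ by replacing each coupling constraint with a collection of ordinary Horn clauses over the same relation symbols, augmented by fresh uninterpreted function symbols (the selector $f$, plus the coupling parameters $f_k, f_b, \ldots$ introduced in \cref{fig:enc}). Consequently every Horn clause of $\clss$ is literally a clause of $\clss'$, so $\interp \models \cls$ for all such $\cls$ is immediate. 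The entire work is to verify that each coupling constraint in $\clss$ is satisfied under $\interp$.

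Fix a coupling constraint $M_1[\cvars_1] \aspace{\rel(\cvars,-,-,-)} M_2[\cvars_2]$ and any consistent assignment $\vec{c}, \vec{c}_1, \vec{c}_2$ to $\cvars,\cvars_1,\cvars_2$. Writing $i^* := \interp'(f)(\vec{c}) \in \{1,\ldots,n\}$, the biconditional Horn clause in \cref{def:trans} forces
\[
  \interp(\rel(\vec{c},d_1,d_2,\theta)) \;\equiv\; \interp'\bigl(\trans(\Lambda^{i^*}[\cvars_1,\cvars_2])\bigr)\bigl[\vec{c}_1,\vec{c}_2,d_1,d_2,\theta\bigr],
\]
i.e.\ the ternary relation $\interp(\rel(\vec{c},-,-,-))$ is exactly the instantiation of the family $\Lambda^{i^*}$ at $(\vec{c}_1,\vec{c}_2)$ with the auxiliary function symbols $f_k, f_b, \ldots$ concretized by $\interp'$. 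The precondition Horn clauses of \cref{def:trans} additionally guarantee that $\interp'$ witnesses $\pre_{\Lambda^{i^*}}[\vec{c}_1,\vec{c}_2]$, so the coupling family is applicable at this point.

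The main technical step, and the only real obstacle, is a per-family lemma of the form: for every coupling family $\Lambda^i$ in the supported catalogue (\cref{tbl:couplings}), any concrete instantiation of its uninterpreted parameters yielding $\trans(\Lambda^i)$ and $\trans(\pre_{\Lambda^i})$ satisfied, actually defines an approximate coupling of $M_1[\vec{c}_1]$ and $M_2[\vec{c}_2]$ in the sense of \cref{ssec:vac}. For the null and shift families this is immediate from \cref{def:null,def:shift}, once one observes that $\trans$ simply rewrites their defining relations by introducing a nullary $f_k()$ in place of the parameter $k$. For the choice family the extra obligation is that the non-overlapping condition of \cref{lem:combo-couple} holds; this is exactly the content of the encoded precondition $\trans(\pre_{(\cpred\mathrel{?}\Lambda_{+k}:\Lambda_\varnothing)})$, which spells out $\mathit{NO}(\cpred,\Lambda_{+k},\Lambda_\varnothing)$ as a first-order formula over two copies of the sample variables. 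Once this library lemma is in hand, combining it with the selection argument of the previous paragraph gives, for every $(\vec{c}_1,\vec{c}_2)$, an approximate coupling $M_1[\vec{c}_1]\aspace{\interp(\rel(\vec{c},-,-,-))} M_2[\vec{c}_2]$, which is precisely the semantic requirement for the coupling constraint. Ranging over all coupling constraints of $\clss$ completes the verification that $\interp \models \clss$.
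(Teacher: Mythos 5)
Your proposal is correct and follows essentially the same approach as the paper: both reduce the theorem to the single core obligation that the encodings in Fig.~6 (together with the encoded preconditions) faithfully realize the coupling families of Table~1, with the non-overlapping condition of Lemma~5.2 discharged by the encoded precondition of the choice family. The only cosmetic difference is that you argue by direct construction of $\interp$ from $\interp'$ whereas the paper argues by contradiction (assume some coupling constraint fails under $\interp'$ and refute each case), and the paper carries out the per-family case analysis in detail where you defer it to a stated ``library lemma.''
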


\section{Implementation and Evaluation}\label{sec:evaluation}

We now describe the algorithms we implement for solving Horn clauses, detail our implementation, and evaluate our technique on well-known
algorithms from the differential privacy literature.

\begin{wrapfigure}{c}{.35\textwidth}
  \smaller
  \centering
  \begin{algorithmic}[1]
    \Function{solve}{$\clss$}
    \State $n \gets 1$
    \While {$\emph{true}$}
    \State $\interp^{F} \gets \textsc{synth}(\exists F \ldotp \bigwedge \clss^n)$
      \State $\emph{res}  \gets \textsc{verify}(\interp^F \clss)$
      \If {\emph{res} is \abr{SAT}}
        \State \Return \abr{SAT}
      \EndIf
      \State $n \gets n + 1$
    \EndWhile
    \EndFunction
  \end{algorithmic}
  \caption{Algorithm for solving Horn clauses with uninterpreted
  functions}\label{alg:solve}
\end{wrapfigure}

\subsection{Solving Horn clauses with Uninterpreted Functions}\label{ssec:solving}
Solving Horn clauses with uninterpreted functions is
equivalent to a program synthesis/verification problem,
where we have an \emph{incomplete} program with holes
and we want to fill the holes to make
the program satisfy some specification.
In our implementation we employ a
\emph{counterexample-guided inductive synthesis} (\abr{CEGIS})~\citep{DBLP:conf/asplos/Solar-LezamaTBSS06} technique to
propose a solution for the uninterpreted functions $\fs$, and a fixed-point computation with
\emph{predicate abstraction}~\citep{graf1997construction} to verify if the Horn clauses are satisfiable for
the candidate interpretation of $\fs$.  Since these techniques are relatively
standard in the program synthesis literature, we describe our solver at a high
level in this section.

\paragraph{Synthesize--Verify Loop.}
Algorithm \textsc{solve} (\cref{alg:solve})
employs a synthesize--verify loop as follows:
\begin{description}
  \item[Coupling Strategy Synthesis.]
  In every iteration of the algorithm,
  the set of clauses $\clss$ are \emph{unrolled}
  into a set of clauses $\clss^n$, a standard technique in Horn-clause solving
  roughly corresponding to unrolling program loops up to a finite bound---see, for example, the algorithm of~\citet{mcmillan2013solving}.
  We assume that there is a single clause $Q \in \clss$
  whose head is not a relation application---we call it
  the \emph{query} clause.
  The unrolling starts with clause $Q$,
  and creates a fresh copy of every clause
  whose head is a relation that appears in the body of $Q$,
  and so on, recursively, up to depth $n$.
  For an example, consider the following three clauses:
$    r(x) \longrightarrow x > 0$ and
    $r(x) \land f(x) = x' \longrightarrow r(x')$.
  If we unroll these clauses up to $n = 3$, we get the following clauses:
    \rone $r^3(x) \longrightarrow x > 0$,
    \rtwo $r^2(x) \land f(x) = x' \longrightarrow r^3(x')$,
    and \rthree
    $r^1(x) \land f(x) = x' \longrightarrow r^2(x')$,
    where $r^i$ are fresh relation symbols (copies of $r$).

  Once we have non-recursive clauses $\clss^n$,
  we can construct a formula of the form
  $$\exists f_1, \ldots, f_n \ldotp \exists r_1, \ldots, r_m \ldotp \bigwedge_{C \in \clss^n} \cls$$
  where
   $f_i$ and $r_i$ are the function and relation symbols appearing in the unrolled clauses.
  Since $\clss^n$ are non-recursive,
  we can rewrite them to remove the relation
  symbols.
  This is a standard encoding in Horn clause solving
   that is analogous
  to encoding a loop-free program as a formula---%
  as in VC generation and bounded model checking---%
  and we do not detail it here.
  We thus transform $\bigwedge_{C \in \clss^n} \cls$ into
  a constraint of the form
  $\exists f_1, \ldots, f_n \ldotp \forall \vec{x} \ldotp \varphi$
  which can be solved using program synthesis algorithms
  (recall that variables are implicitly universally quantified).
%
  Specifically, for \textsc{synth}, we employ
  a version of the symbolic synthesis algorithm due to~\citet{gulwani2011synthesis}.
  The synthesis phase generates an interpretation
  $\interp^F$ of the function symbols,
  a winning coupling strategy that proves \edp for adjacent inputs
  that terminate in $n$ iterations.

  \item[Coupling Strategy Verification.]
  This second phase verifies whether the synthesized
  strategy is indeed a winning coupling strategy for all pairs of adjacent
  inputs by checking if
  $\interp^F \clss$ is satisfiable.
  That
  is, we plug in the synthesized values of $F$ into $\clss$,
  resulting in a set of Horn clauses with no uninterpreted
  functions---only invariant relation symbols $\inv_i$---which can be solved
  with a fixed-point computation.
  The result of \textsc{verify} could be \abr{SAT}, \abr{UNSAT},
  or \abr{UNKNOWN}, due to the undecidability of the problem.
  If the result is \abr{SAT}, then we know that
  $\interp^F$ encodes a winning coupling strategy and we are done.
  Otherwise, we try unrolling further.

  While there are numerous Horn-clause solvers
  available, we have found that they are unable
  to handle the clauses we encountered.
  We thus implemented a custom solver that uses \emph{predicate
  abstraction}.
  We detail our implementation and rationalize this decision below.
\end{description}

\subsection{Implementation Details}

We have implemented our approach
by extending the FairSquare
probabilistic verifier~\citep{albarghouthi17}.
Our implementation takes a program $\prog$ in a simple
probabilistic language with integers and arrays over integers.
An adjacency relation is provided as a first-order
formula over input program variables.
The program $\prog$ is then translated into a
set of Horn clauses over the combined theories
of linear real/integer arithmetic, arrays,
and uninterpreted functions.

\paragraph{Instantiations of the Synthesizer (\textsc{synth}).}
To find an interpretation of functions $\fs$,
our synthesis algorithm
requires a grammar of expressions to search through.
Constant functions can
take any integer value.
In the case of Boolean and $n$-valued functions,
we instantiate the synthesizer
with a grammar over Boolean operations,
e.g., $\wild_1 > \wild_2$ or $\wild_1 \neq \wild_2$, where $\wild_1$ and
$\wild_2$ can be replaced by variables or numerical
expressions over variables.
In all of our benchmarks, we find that searching
for expressions with \abrs{AST}
of depth 2 is sufficient to finding a winning coupling strategy.

\newcommand{\preds}{\emph{Preds}}
\paragraph{Instantiations of Predicate Abstraction (\textsc{verify}).}
To compute invariants $\inv_i$, we employ
predicate abstraction~\citep{graf1997construction}.
In our initial experiments, we attempted to delegate
this process to existing Horn clause solvers.
Unfortunately, they either diverged
or could not handle the theories we use.\footnote{%
  Duality~\citep{mcmillan2013solving}, which uses \emph{tree interpolants},
  diverged even on the simplest example.  This is due to the interpolation engine
  not being optimized for relational verification; for instance, in all examples
  we require invariants containing multiple predicates of the form $v_1 = v_2$ or
  $|v_1 - v_2| \leq 1$, where $v_1$ and $v_2$ are copies of the same variable.
  SeaHorn~\citep{gurfinkel2015seahorn}, \abr{HSF}~\citep{grebenshchikov2012hsf},
  and Eldarica~\citep{hojjat2012verification} have limited or no support for
arrays, and could not handle our benchmarks.}
We thus built a Horn clause solver
on top of the Z3 \abr{SMT} solver
and instantiated it with a large set of predicates
$\preds$ over a rich set of templates
over pairs of variables.
The templates are of the form:
 $\wild_1 > \wild_2$, $\wild_1 = \wild_2$,
 $\wild_1 + k = \wild_2$,
$|\wild_1 - \wild_2| \leq k$.
By instantiating  $\wild$ with program variables
and $k$ with constants, we generate a large set of predicates.
We also use the predicates appearing in \texttt{assume} statements.
To track the upper bound on $\costvar$,
we use a class of predicates
 of the form $\costvar \leq \sum_{j=1}^n
\wild_j \cdot \iexpr_j$
that model every increment to the privacy cost,
where $\iexpr_j$ is the scale parameter in the $j$th
sampling statement in the program
($\lap_{\iexpr_j}(-)$ or $\olap_{\iexpr_j}(-)$),
and $\wild_j$ is instantiated as an expression
of the cost of the chosen couplings for the $j$th statement,
potentially multiplied by a positive program variable if it
occurs in a loop.
For instance, if we have the shift coupling,
we instantiate $\wild_j$ with $|k + z_1 - z_2|$;
if we use the null coupling, $\wild_j$ is 0.
We use all possible combinations of $\wild_j$
and eliminate any non-linear predicates (see more below).

Since our invariants typically require
disjunctions we employ boolean predicate
abstraction, an expensive procedure requiring exponentially many
\abr{SMT} calls in the size of the predicates in the worst case.
To do so efficiently,
we use the AllSMT algorithm of~\citet{lahiri2006smt}, as implemented
 in the MathSAT5 \abr{SMT} solver~\citep{mathsat}.

\paragraph{Handling Non-linearity.}
Observe that in the definition of shift couplings (\cref{def:shift}),
the parameter $\eps$ is multiplied by an expression,
resulting in a non-linear constraint (a theory
that is not well-supported by \abr{SMT} solvers, due
to its complexity).
To eliminate non-linear constraints, we
make the key observation that it suffices
to set $\eps$ to 1: Since all constraints
involving $\eps$ are summations of coefficients of $\eps$,
we only need to track the coefficients of $\eps$.
%
%

\paragraph{Handling Conditionals.}
Winning coupling strategies (\cref{thm:winning}) assume synchronization at each conditional statement.
In our implementation, we only impose synchronization at loop
heads by encoding conditional code blocks (with no sampling statements) as monolithic instructions,
as in \emph{large-block encoding}~\citep{beyer2009software}.
This enables handling examples like Report Noisy Max (\cref{sec:example}),
where the two processes may not enter the same branch of the conditional under
the strategy.

\subsection{Differentially Private Algorithms.}

For evaluation, we automatically synthesized privacy proofs for a range of
algorithms from the differential privacy literature. The examples require a
variety of different kinds of privacy proofs. Simpler examples follow by
composition, while more complex examples require more sophisticated arguments.
\cref{fig:algs} shows the code for three of the more interesting
algorithms, which we discuss below; \cref{tbl:bench} presents the full
collection of examples.

\begin{figure}[t!]
  \smaller
  \centering
  \begin{minipage}[t]{.35\textwidth}
    \begin{algorithmic}
      \Function{\sc SmartSum}{$q$}
      \State $\mathit{next}, n, i, sum \gets 0$
      \State $r \gets []$
      \While{$i < |q|$}
      \State $\mathit{sum} \gets \mathit{sum} + q[i]$
      \If{$(i + 1) \mathbin{\%} M = 0$}
      \State $n \sim \lap_\eps(n + \mathit{sum})$
      \State $\mathit{sum}, \mathit{next} \gets 0, n$
      \Else
      \State $\mathit{next} \sim \lap_\eps(\mathit{next} + q[i])$
      \EndIf
      \State $r \gets \mathit{next} :: r$
      \State $i \gets i + 1$
      \EndWhile
      \State \Return $r$
      \EndFunction
    \end{algorithmic}
  \end{minipage}%
  \vrule
  \begin{minipage}[t]{.3\textwidth}
    \begin{algorithmic}
      \Function{\sc ExpMech}{$d,\mathit{qscore}$}
      \State $i \gets 1$
      \State $\mathit{bq} \gets 0$
      \While {$i \leq R$}
      \State $s \gets \mathit{qscore}(i, d)$
      \State $\mathit{cq} \gets \olap_{\eps/2}(s)$
      \If {$\mathit{cq} > \mathit{bq} \lor i = 1$}
      \State $\mathit{max} \gets i$
      \State $\mathit{bq} \gets \mathit{cq}$
      \EndIf
      \State $i \gets i + 1$
      \EndWhile
      \State \Return $\mathit{max}$
      \EndFunction
    \end{algorithmic}
  \end{minipage}
  \vrule
  \begin{minipage}[t]{.3\textwidth}
    \begin{algorithmic}
      \Function{\sc NumericSparseN}{$\mathit{qs},T,N$}
      \State $r \gets []$
      \State $i, \mathit{ct} \gets 0$
      \State $t \sim \lap_{\eps/3}(T)$
      \While {$i < |\mathit{qs}| \land \mathit{ct} < N$}
      \State $n \sim \lap_{\eps/6N}(\mathit{qs}[i])$
      \If {$n > t$}
      \State $\mathit{ans} \sim \lap_{\eps/3N}(\mathit{qs}[i])$
      \State $r \gets (i, \mathit{ans}) :: r$
      \State $\mathit{ct} \gets \mathit{ct} + 1$
      \EndIf
      \State $i \gets i + 1$
      \EndWhile
      \State \Return $r$
      \EndFunction
    \end{algorithmic}
  \end{minipage}%
  \caption{Three representative benchmarks (in practice,
  lists are encoded as arrays)}
  \label{fig:algs}
  \vspace{-.2in}
\end{figure}


\paragraph{Two-Level Counter (\textsc{SmartSum}).}
The \textsc{SmartSum} algorithm~\citep{CSS10,DNPR10} was designed to continually
publish aggregate statistics while maintaining privacy---for example,
continually releasing the total number of visitors to a website.
Suppose that we have a list of inputs $q$,
where $q[i]$ denotes the total number of visitors in hour $i$.
\textsc{SmartSum} releases a list of all running sums:
$q[0]$, $q[0] + q[1], \ldots, \sum_{j=0}^i q[j], \ldots$.
Rather than adding separate noise to each $q[i]$ or adding separate noise to
each running sum---which would provide weak privacy guarantees or inaccurate
results---\textsc{SmartSum} chunks the inputs into blocks of size $M$ and adds
noise to the sum of each block. Then, each noisy running sum can be computed by
summing several noised blocks and additional noise for the remaining inputs. By
choosing the block size carefully, this approach releases all running sums with
less noise than more na\"ive approaches, while ensuring \edp.

\paragraph{Discrete Exponential Mechanism (\textsc{ExpMech}).}
The exponential mechanism~\citep{MT07}
is used when \rone the output of an algorithm
is non-numeric or \rtwo different outputs
have different \emph{utility}, and
adding numeric noise to the output would produce an unusable answer,
e.g., in the case of an auction where
we want to protect privacy of bidders~\citep{dwork2014algorithmic}.
\textsc{ExpMech} takes a database and a \emph{quality
  score} (utility function) as input, where the quality score maps the database and each element of
the range to a numeric score. Then, the algorithm releases the element
with approximately the highest score. We verified a version of this algorithm
that adds noise drawn from the exponential mechanism to each quality score, and then
releases the element with the highest noisy score. This implementation is also
called the \emph{one-sided Noisy Arg-Max}
\citep{dwork2014algorithmic}.


\paragraph{Sparse Vector Mechanism (\textsc{NumericSparseN}).}
The Sparse Vector mechanism (also called \textsc{NumericSparse} in the textbook by \citet{dwork2014algorithmic})
is used in scenarios where we would like to answer
a large number of numeric queries while only paying for
queries with large answers.
To achieve \edp, \textsc{NumericSparseN} releases
the noised values of the first $N$ queries that
are above some known threshold $T$,
while only reporting that other queries
are below the threshold, without disclosing their values.

Specifically, \textsc{NumericSparseN}
takes a list of numeric queries
$\mathit{qs}$ and numeric threshold $T$. Each query is assumed to be
$1$-sensitive---its
answer may differ by at most $1$ on adjacent databases. The threshold
$T$ is assumed to be public knowledge, so we can model it as being equal in
adjacent inputs. The Sparse Vector mechanism adds noise to the threshold and then computes
a noisy answer for each query. When the algorithm finds a query where the noisy
answer is larger than the noisy threshold, it records the index of the query. It
also adds fresh noise to the query answer, and records the (freshly) noised
answer as well. When the algorithm records $N$ queries or runs out of queries,
it returns the final list of indices and answers.

The privacy proof is interesting for two reasons. First, applying the
composition theorem of privacy does prove privacy, but with an overly
conservative level of $\epsilon$ that depends on the number of queries. A more
careful proof by coupling shows that the privacy level depends only on the
number of above threshold queries, potentially a large saving if most of the
queries are below threshold. Second, adding \emph{fresh} noise when estimating
the answer of above threshold queries is critical for privacy---it is not
private to reuse the noisy answer of the query from checking against the noisy
threshold. Previous versions of this algorithm suffered from this
flaw~\citep{lyu2016understanding}.

\begin{table}[t!]
  \caption{Differentially private algorithms used in our evaluation}
  \label{tbl:bench}
  \smaller
  \begin{tabular}{lp{0.75\textwidth-2\tabcolsep}}
    \toprule
    Algorithm & Description\\
    \midrule
    \textsc{PartialSum}
    & Compute the noisy sum of a list of queries. \\
    \textsc{PrefixSum}
    & Compute the noisy sum for every prefix of a list of queries. \\
    \textsc{SmartSum}
    & Advanced version of \textsc{PrefixSum} that chunks the list
    \citep{DNPR10,CSS10}. \\
    \textsc{ReportNoisyMax}
    & Find the element with the highest quality score
    \citep{dwork2014algorithmic}. \\
    \textsc{ExpMech}
    & Variant of \textsc{ReportNoisyMax} using the exponential distribution
    \citep{dwork2014algorithmic,MT07}. \\
    \textsc{AboveThreshold}
    & Find the index of the first query above threshold
    \citep{dwork2014algorithmic}. \\
    \textsc{AboveThresholdN}
    & Find the indices of the first $N$ queries with answer above threshold
    \citep{dwork2014algorithmic,lyu2016understanding}. \\
    \textsc{NumericSparse}
    & Return the index and answer of the first query above threshold
    \citep{dwork2014algorithmic}. \\
    \textsc{NumericSparseN}
    & Return the indices and answers of the first $N$ queries above threshold
    \citep{dwork2014algorithmic,lyu2016understanding}. \\
    \bottomrule
  \end{tabular}
\end{table}

\subsection{Experimental Results}

\cref{tbl:results} summarizes the results of  applying our
implementation to the differentially private algorithms described in
\cref{tbl:bench}.  For each algorithm, we established \edp; in case of
\textsc{SmartSum}, we show that it is $2\eps$-\dpriv, as established
by~\citet{CSS10}.\footnote{%
  Technically, this requires changing the upper bound on $\costvar$ from $\eps$
  to $2\eps$ in the clause generated by rule $\cdp$ in \cref{fig:hmc}.}
In all our examples, we transform
coupling constraints (using \cref{def:trans})
by instantiating $\strategy_e$ (for every sampling sampling
statement $\stmt_e$) with two choices, the null
and shift coupling families of $\lap$ or $\olap$---i.e.,
the function $f(\vec{v})$ in \cref{def:trans}
has range $\{1,2\}$.
While this is sufficient to find a proof
for all examples, in the case of the
Sparse Vector mechanism (\textsc{NumericSparse*}),
we found that the proof requires an invariant
that lies outside our decidable theory.
Using the choice coupling family for those algorithms, our tool can automatically
discover a much simpler invariant.

\paragraph{Results Overview.}
For each algorithm, \cref{tbl:results} shows
the privacy bound and the
number of sampling statements (\#Samples) appearing
in the algorithm to give a rough idea of how many couplings must be found.
The verification statistics show
the number of predicates (\#Preds)
generated from our predicate templates, and
the time spent in \textsc{verify}.
The synthesis statistics show the size (in terms of \abr{ast} nodes)
of the largest formula passed to \textsc{synth}
and number of variables
appearing in it;
the number of \abr{CEGIS} iterations of the synthesis
algorithm~\citep{gulwani2011synthesis};
and the time spent in \textsc{synth}.

In all cases, our implementation was able to automatically
establish differential privacy in a matter of minutes.
However, there is significant variation in the time needed
for different algorithms.
Consider, for instance, \textsc{NumericSparseN}.
While it is a more general version of \textsc{AboveThresholdN},
it requires less verification time.
This is because the invariant required by \textsc{NumericSparseN}
is conjunctive, allowing \textsc{verify} to quickly
discover an invariant.
The invariant required by \textsc{AboveThresholdN}
contains multiple disjuncts, making it spend more time in boolean abstraction.
(Note that an upper bound on the number of possible disjunctive invariants
using $n$ predicates is $2^{2^n}$.)
Synthesis time, however, is higher for \textsc{NumericSparseN}.
This is due to the use of the choice coupling family, which
has two coupling parameters and requires discovering a
non-overlapping predicate $P$.

\paragraph{Detailed Discussion.}
For a more detailed view, let us consider
the \textsc{ReportNoisyMax} algorithm. In \cref{sec:example},
we described in detail the winning coupling strategy our tool discovers.
To verify that the coupling strategy is indeed winning,
the example requires the following coupled invariant:
\begin{align*}
  \mathcal{I} &\triangleq (i_1 = i_2)
            \land (i_1 \leq \ret \implies \mathcal{I}_{\leq} \land \costvar \leq 0)
            \land (i_1 > \ret \implies \mathcal{I}_> \land \costvar \leq \eps),
\end{align*}
where
\begin{align*}
  \mathcal{I}_\leq &\triangleq ((r_1 = r_2 = \ret = 0) \lor (r_1 < \ret \land r_2 < \ret))
            \land |\emph{best}_1 - \emph{best}_2| \leq 1 \\
  \mathcal{I}_> &\triangleq r_1 = \ret \Longrightarrow (r_2 = \ret \land \emph{best}_1 + 1 = \emph{best}_2)
  .
\end{align*}
At a high-level, the invariant considers two cases: \rone when the loop
counter $i_1$ is less than or equal to the output $\ret$, and \rtwo when the
loop counter $i_1$ has moved past the output $\ret$. In the latter
case, the invariant ensures that if $r_1=\ret$, then $r_2=\ret$, and, crucially,
$\emph{best}_1 + 1 = \emph{best}_2$.  Notice also how the invariant establishes
that the privacy cost is upper-bounded by $\eps$.  Our tool discovers a similar
invariant for \textsc{ExpMech}, with the strategy also establishing the
precondition of the shift coupling.

Let us now consider a more complex example,
the \textsc{NumericSparseVectorN}, which utilizes
the choice coupling.
Notice that there are 3 sampling statements in this example.
\begin{itemize}
\item For the first sampling statement, we use the shift
coupling to ensure that $t_1 + 1 = t_2$; since
$T_1 = T_2$, we incur a privacy cost of $\eps/3$.
\item For the second sampling statement, we use the choice coupling:
When $n_1 > t_1$, we use the shift coupling
to ensure that $n_1 + 1  = n_2$;
otherwise, we take the null coupling.
Since $\emph{qs}_1[i_1]$ and $\emph{qs}_2[i_2]$
may differ by one, as per the adjacency relation,
we will incur a worst-case price of $\eps/3$ for this statement,
since we will incur $2*\eps/6N$ cost for each of the $N$ iterations
where $n_1 > t_1$.
\item Finally, for the third sampling statement,
use the shift coupling to ensure that $\emph{ans}_1 = \emph{ans}_2$,
incurring a cost of $\eps/3$ across the $N$ iterations where the
conditional is entered.
\end{itemize}
As such, this strategy establishes
that the \textsc{NumericSparseVecN} is $\eps = \eps/3 + \eps/3 + \eps/3$
differentially private.
It is easy to see that this is a winning coupling strategy,
since whenever the first process enters the conditional
and updates $r$, the second process also enters the conditional
and updates $r$ with the same value---as the strategy
ensures that $\emph{ans}_1 = \emph{ans}_2$ and $i_1 = i_2$.

\paragraph{Summary.}
Our results demonstrate that our proof technique based on winning
coupling strategies is
\rone amenable to automation through Horn-clause solving, and
\rtwo is applicable to non-trivial algorithms proposed in the differential
privacy literature.
To the best of our knowledge, our technique is the first to automatically
establish privacy for all of the algorithms in \cref{tbl:results}.

\begin{table}[t!]
  \caption{Experimental results (OS X 10.11; 4GHz Intel Core i7; 16GB RAM)}
  \label{tbl:results}
  \smaller
  \begin{tabular}{lllllllll}
    \toprule
    \multicolumn{3}{c}{} & \multicolumn{2}{c}{Verification stats.} & \multicolumn{4}{c}{Synthesis stats.} \\
    \cmidrule(lr){4-5}
    \cmidrule(lr){6-9}
    Algorithm &  Bound & \#Samples & \#\preds & Time (s)  & \#vars & Formula size & \#\abr{CEGIS} iters. & Time (s) \\
    \midrule
    \textsc{PartialSum}    &  $\eps$   & 1 & 30 & 12 & 183 & 566 & 9 & 0.4 \\
    \textsc{PrefixSum}    &  $\eps$   & 1 & 40 & 14 & 452 & 1246 & 8 & 1.1 \\
    \textsc{SmartSum}    &  $2\eps$   & 2 & 44 & 255 & 764 & 2230 & 1766 & 579.2 \\
    \textsc{ReportNoisyMax}    &  $\eps$   & 1 & 36 & 22 & 327 & 1058 & 35 & 1.5 \\
    \textsc{ExpMech}    &  $\eps$   & 1 & 36 & 22 & 392 & 1200 & 152 & 5.0 \\
    \textsc{AboveThreshold}    &  $\eps$   & 2 & 37 & 27 & 437 & 1245 & 230 & 7.3 \\
    \textsc{AboveThresholdN}    &  $\eps$   & 3 & 59 & 580 & 692 & 1914 & 628 & 31.3 \\
    \textsc{NumericSparse}    &  $\eps$   & 2 & 62 & 4 & 480 & 1446 & 65 & 3.2 \\
    \textsc{NumericSparseN}    &  $\eps$   & 3 & 68 & 5 & 663 & 1958 & 6353 & 1378.9 \\
    \bottomrule
  \end{tabular}
\end{table}


\section{Related Work}\label{sec:related}

\paragraph{Formal Verification of Differential Privacy.}
%
Researchers have explored a broad array of techniques for static verification of
differential privacy, including linear and dependent type
systems~\citep{ReedPierce10,GHHNP13,BGGHRS15,AGGH14,ACGHK16,zhang2016autopriv},
relational program
logics~\citep{BKOZ13-toplas,BartheO13,OlmedoThesis,BGGHS16,BGGHS16c,Sato16,BEHSS17,JHThesis},
partial evaluation~\citep{adafuzz}, and more. \citet{Murawski:2016:2893582}
provide a recent survey.

Our work is inspired by the LightDP
system~\citep{zhang2016autopriv}, which combines a relational type system
for an imperative language, a novel type-inference algorithm, and a product
program construction. Types in the relational type system encode randomness
alignments. Numeric types have the form $\mathtt{num}_{\text{d}}$, where
$\text{d}$ is a \emph{distance expression} that can mention program variables.
Roughly speaking, the distance $\text{d}$ describes how to map samples $x$ from
the first execution to samples $x + |\text{d}|$ in the second execution.
\citet{zhang2016autopriv} focus on randomness alignments that are injective maps
from samples to samples. Such maps also give rise to approximate couplings; for
instance, if $g : \mathds{Z} \to \mathds{Z}$ is injective, then there exists a
variable approximate coupling of two Laplace distributions $\Lambda = \{ (z,
g(z), c(z)) \}$ for some costs $c(z)$. However, approximate couplings are more
general than randomness alignments, as couplings do not require an injective map
on samples.

\citet{zhang2016autopriv} also show how distance expressions (and hence
randomness alignments) in LightDP can be combined: they can be added and
subtracted ($\text{d}_1 \oplus \text{d}_2$), multiplied and divided ($\text{d}_1
\otimes \text{d}_2$), and formed into conditional expressions ($\text{d}_1 \odot
\text{d}_2 \mathrel{?} \text{d}_3 : \text{d}_4$, where $\odot$ is a binary
comparison). This last conditional construction is highly useful for proving
privacy of certain examples, and is modeled by the choice coupling in our system
(\cref{sec:couplings}).  Conceptually, our work gives a better understanding of
randomness alignment and approximate couplings---in some sense, randomness
alignments can be seen as a particular case of (variable) approximate couplings.
LightDP also uses MaxSMT to find an optimal privacy guarantee.
Our approach is property-directed: we try to prove a set upper-bound.
We could enrich the generated constraints with an objective
function minimizing the privacy cost $\costvar$, and solve them
using an optimal synthesis technique~\citep{bornholt2016optimizing}.

We also draw on ideas from \citet{BGGHKS14}, who,
like~\citet{zhang2016autopriv}, use a product construction for proving
differential privacy.
Each call to a sampling instruction is replaced by a
call to a non-deterministic function, incrementing the cost in a ghost variable
$v_\epsilon$. However, their system cannot prove privacy beyond composition, and
the cost is required to be deterministic. In particular, many of the advanced
examples that we consider, like Report Noisy Max and the Sparse Vector
mechanisms, are not verifiable. Our work can be seen as extending their system
along three axes:
\rone supporting richer couplings for the sampling instructions to handle more
complex examples,
\rtwo reasoning about randomized privacy costs, and
\rthree automatically constructing the proof.

\paragraph{Automated Verification and Synthesis.}
We reduced the problem of verifying \edp to solving Horn clauses with
uninterpreted functions. \citet{beyene2013solving} introduced existentially
quantified Horn clauses, which were used for infinite-state \abr{CTL} verification and
solving $\omega$-regular games~\citep{Beyene14}. There, variables in the head of
a clause can be existentially quantified. To solve the clauses, they
\emph{Skolemize} the existential variables and discover a solution to a Skolem
relation. Using this technique, we could have adopted an alternative encoding by
using existentially quantified variables to stand for unknown couplings in
strategies. However, for correctness, we would need to restrict solutions for Skolem relations to
be couplings---e.g., by enforcing a template for the Skolem relations.

Solving Horn clauses with uninterpreted functions is closely connected to
\emph{sketching-based} program synthesis
problems~\citep{DBLP:conf/asplos/Solar-LezamaTBSS06}, where we want to find substitutions for
holes in a program to ensure that the whole program satisfies a target property.
One could reformulate our problem of solving Horn clauses with uninterpreted functions
as solving a program synthesis problem, and, e.g., use the template-based
technique of~\citet{Srivastava10} to find an inductive invariant and a coupling
strategy. In our implementation, we used a synthesize-and-verify loop, a
methodology that has appeared in different guises, for example, in the Sketch
synthesizer~\citep{DBLP:conf/asplos/Solar-LezamaTBSS06} and the \abr{E-HSF} Horn-clause
solver~\citep{beyene2013solving}.

\paragraph{Hyperproperties.}
There is a rich body of work on verifying relational/hyper- properties of
programs, e.g., in
translation-validation~\citep{pnueli1998translation,necula2000translation},
security
problems~\citep{clarkson2010hyperproperties,terauchi2005secure}, and
Java code~\citep{Sousa16}. There has been far less work on automated reasoning
for probabilistic relational properties. Techniques for reasoning about
quantitative information flow~\citep{kopf2010approximation} typically reason
about probabilities, since the property depends on a distribution on inputs.
However, since there are no universally quantified variables in the property, the
verification problem can be handled by \emph{model-counting}: counting the
number of executions that satisfy certain properties. The \edp property
quantifies over all sets of adjacent inputs and all outputs, making automated
verification significantly more challenging.


\section{Conclusions and Future Work}\label{sec:conclusion}
We have presented a novel proof technique for $\eps$-differential privacy,
by finding a strategy that uses variable
approximate couplings to pair two adjacent executions of a program.
We formulated the set of \emph{winning} strategies in a novel
constraint system using Horn clauses and coupling constraints.
By carefully restricting solutions of coupling constraints to
encodings of coupling families, we can automatically
prove correctness of complex algorithms from the differential privacy literature.

The next natural step would be to automate proofs of $(\epsilon,
\delta)$-differential privacy. Approximate couplings have been used in the past
to verify this more subtle version of privacy, but there are both conceptual and
practical challenges in automatically finding these proofs. On the
theoretical end, while we proved $\epsilon$-DP by finding a coupling strategy
with cost $\epsilon$ for each output, if we find a coupling strategy with cost
$(\epsilon, \delta)$ for each output, then $\delta$ parameters will sum up over
all outputs to give the privacy level for all outputs---this can lead to a very
weak guarantee, or a useless guarantee if there are infinitely many outputs. On
the more practical side, in many cases the $\delta$ parameter in an $(\epsilon,
\delta)$-private algorithm arises from an \emph{accuracy bound} stating that a
certain key sample has a small, $\delta$ probability of being too large. These
bounds often have a complex form, involving division and logarithm operations.
Due to these and other obstacles, to date there is no automatic system for
proving $(\epsilon, \delta)$-privacy. Nevertheless, extending our techniques to
find such proofs would be an intriguing direction for future work.

There are other possible targets of our approach beyond differential privacy.
For example, we plan to adapt our technique to synthesize independence and
uniformity for properties of probabilistic programs~\citep{barthe2017proving}.
Additionally, we plan to combine our techniques with probabilistic resource
bounds analyses to synthesize couplings that provide upper bounds on distances
between probabilistic processes, e.g., to show rapid mixing of Markov
chains~\citep{barthe2017coupling}.

\begin{acks}                            
  We thank Gilles Barthe, Marco Gaboardi, Zachary Kinkaid, Danfeng Zhang, and
  the anonymous reviewers for stimulating discussions and useful comments on
  earlier drafts of this work.
  This work is partially supported by the \grantsponsor{GS100000144}{National Science Foundation CNS}{http://dx.doi.org/10.13039/100000144}
  under Grant No.~\grantnum{GS100000144}{1513694},
  the \grantsponsor{GS100000143}{National Science Foundation CCF}{http://dx.doi.org/10.13039/100000143}
  under Grant Nos.~\grantnum{GS100000143}{1566015}, \grantnum{GS100000143}{1704117}, \grantnum{GS100000143}{1652140},
  the \grantsponsor{GS100010663}{European Research Council}{http://dx.doi.org/10.13039/100010663}
  under Grant No.~\grantnum{GS100010663}{679127},
  and the \grantsponsor{GS100000893}{Simons Foundation}{http://dx.doi.org/10.13039/100000893}
  under Grant No.~\grantnum{GS100000893}{360368} to Justin Hsu.
\end{acks}

\bibliography{header,refs}

\iftoggle{long}{\appendix\section{Omitted proofs} \label{app:proofs}

\note{some theorem statements have moved -- perhaps we should just
refer to theorem numbers instead of restating}

\subsection{Proofs for variable approximate couplings}
The following proves \cref{lemma:dpcoupling},
which connects variable approximate couplings with \edp.

\begin{proof}
  This is a standard lemma about approximate couplings; see, e.g.
  \citet[Proposition 6]{BGGHS16}. We provide a self-contained proof here. Let
  $j$ be any element of $\dom_\retv$ and let $(s_1, s_2) \in \adj$ be two
  adjacent inputs. We have two witness distributions $\mu_L, \mu_R$ to the
  approximate coupling $\sem{P}(s_1) \aspace{\Lambda_j} \sem{P}(s_2)$. Therefore, we can
  bound
  \begin{align}
    \sem{P}(s_1)( \{ s_1' \mid s_1'(\retv) = j \} )
    &= \mu_L( \{ (s_1', s_2') \mid s_1'(\retv) = j \} )
    \tag{First marginal} \\
    &\leq \exp(\epsilon) \cdot \mu_R( \{ (s_1', s_2') \mid s_1'(\retv) = j \} )
    \tag{Distance} \\
    &\leq \exp(\epsilon) \cdot \mu_R( \{ (s_1', s_2') \mid s_2'(\retv) = j \} )
    \tag{Support} \\
    &\leq \exp(\epsilon) \cdot \sem{P}(s_2)( \{ s_2' \mid s_2'(\retv) = j \} ) .
    \tag{Second marginal}
  \end{align}
  We use $\eps$ for $s_1(\eps)$.


\end{proof}

\subsection{Proofs for coupling strategies}
\cref{lemma:strat-couple} is the key technical result connecting coupling
strategies to couplings. To build the witness, we will need a few standard
distribution operations.

\begin{definition}
  Let $\aset, \aset'$ be discrete sets.
  \begin{itemize}
    \item The \emph{zero sub-distribution} $\dzero : \sdist(\aset)$ assigns
      weight $0$ to all elements.
    \item The \emph{unit} map $\dunit : \aset \to \sdist(\aset)$ is defined by
      $\dunit(a)(a') \triangleq \mathds{1}[a = a']$.
    \item The \emph{bind} map $\dbind : \sdist(\aset) \times (\aset \to
      \sdist(\aset')) \to \sdist(\aset')$ is defined by:
      \[
        \dbind(\mu, f)(a') \triangleq \sum_{a \in \aset} \mu(a) \cdot f(a)(a') .
      \]
  \end{itemize}
\end{definition}

We will use the following lemma about bind and projections.

\begin{lemma} \label{l:comp-marg-supp}
  Suppose we have functions $f : \aset \to \sdist(\aset')$ and $g : \aset' \to
  \sdist(\aset'')$.  Let $\Phi \subseteq \aset \times \aset, \Phi' \subseteq
  \aset' \times \aset', \Phi'' \subseteq \aset'' \times \aset''$ be binary
  relations. Let $i = 1$ or $2$. Suppose that we have maps $F : \aset \times
  \aset \to \sdist(\aset' \times \aset')$ and $G : \aset' \times \aset' \to
  \sdist(\aset'' \times \aset'')$ such that
  \begin{itemize}
    \item $\pi_i(F(a_1, a_2)) = f(\pi_i(a_1, a_2))$ and $\supp(F(a_1, a_2)) \subseteq \Phi'$
      for every $(a_1, a_2) \in \Phi$; and
  \end{itemize}
  \begin{itemize}
    \item $\pi_i(G(a_1', a_2')) = g(\pi_i(a_1', a_2'))$ and $\supp(G(a_1', a_2')) \subseteq \Phi''$
    for every $(a_1', a_2') \in \Phi'$.
  \end{itemize}
  Then for every $(a_1, a_2) \in \Phi$, we have
  \begin{itemize}
    \item $\pi_i(\dbind(F(a_1, a_2), G)) = \dbind(f(\pi_i(a_1, a_2)), g)$
      and $\supp(\dbind(F(a_1, a_2), G)) \subseteq \Phi''$ .
  \end{itemize}
\end{lemma}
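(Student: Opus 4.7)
The plan is to fix $(a_1, a_2) \in \Phi$ and, without loss of generality, work with $i = 1$ throughout (the case $i = 2$ is symmetric). Set $\mu \triangleq F(a_1, a_2)$, so that by hypothesis $\pi_1(\mu) = f(a_1)$ and $\supp(\mu) \subseteq \Phi'$. Everything then follows by unfolding the definitions of $\dbind$ and $\pi_1$, using the support condition on $\mu$ to license the application of the marginal hypothesis for $G$ (which is only known to hold on pairs in $\Phi'$).

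For the support claim, I would argue as follows. If $(a_1'', a_2'') \in \supp(\dbind(\mu, G))$, then by the definition of $\dbind$ there must exist a pair $(a_1', a_2')$ with $\mu(a_1', a_2') > 0$ and $G(a_1', a_2')(a_1'', a_2'') > 0$. The first inequality gives $(a_1', a_2') \in \supp(\mu) \subseteq \Phi'$, which lets us apply the hypothesis on $G$ to conclude $(a_1'', a_2'') \in \supp(G(a_1', a_2')) \subseteq \Phi''$, as desired.

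For the marginal claim, I would compute directly:
\begin{align*}
  \pi_1(\dbind(\mu, G))(a_1'')
  &= \sum_{a_2''} \sum_{(a_1', a_2')} \mu(a_1', a_2') \cdot G(a_1', a_2')(a_1'', a_2'') \\
  &= \sum_{(a_1', a_2')} \mu(a_1', a_2') \cdot \pi_1(G(a_1', a_2'))(a_1'') .
\end{align*}
At this point the summand vanishes whenever $\mu(a_1', a_2') = 0$, so by $\supp(\mu) \subseteq \Phi'$ the sum effectively ranges over $\Phi'$, on which the hypothesis $\pi_1(G(a_1', a_2')) = g(a_1')$ applies. Substituting and then collapsing the inner sum over $a_2'$ using $\pi_1(\mu) = f(a_1)$ yields $\sum_{a_1'} f(a_1)(a_1') \cdot g(a_1')(a_1'') = \dbind(f(a_1), g)(a_1'')$, completing the proof.

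The only real subtlety, and the one thing to handle carefully, is to remember that the hypotheses on $G$ are only valid on inputs in $\Phi'$, so the support condition on $\mu$ (inherited from the hypothesis on $F$) must be invoked before the marginal identity for $G$ can be used. Everything else is bookkeeping with Fubini-style rearrangements of nonnegative sums.
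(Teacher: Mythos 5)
Your proof is correct, and it is essentially the same as the paper's (the paper's entire proof reads ``By unfolding definitions''); you have simply spelled out the unfolding, correctly identifying that the support condition $\supp(F(a_1,a_2)) \subseteq \Phi'$ must be invoked before the marginal hypothesis on $G$ can be applied.
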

\begin{proof}
  By unfolding definitions.
\end{proof}

We can now prove \cref{lemma:strat-couple}.

\begin{lemma}[From strategies to couplings (\cref{lemma:strat-couple})]
Suppose $\strp$ is synchronizing for $\trace$ and $(s_0, s_0') \in \adj$.  Let
$\Psi = \cpost_\strp( \{ (s_0, s_0', 0) \}, \sigma )$, and let $f : \states
\times \states \to \mathds{R}$ be such that $c \leq f(s, s')$ for all $(s, s',
c) \in \Psi$.  Then we have a coupling
\[
  \sigma(s_0) \aspace{\Psi_f} \sigma(s_0')
\]
with $\Psi_f \triangleq \{ (s, s', f(s, s')) \mid (s, s', -) \in \Psi \}$.
\end{lemma}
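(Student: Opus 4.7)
The plan is to prove the lemma by induction on the length of the trace $\sigma$, constructing the witness sub-distributions $\mu_L, \mu_R$ for the coupling $\sigma(s_0) \aspace{\Psi_f} \sigma(s_0')$ in lockstep with the coupled postcondition. At each step, I will maintain the invariant that the current pair of witness sub-distributions is supported on pairs of states appearing in the current coupled postcondition, that its first marginal equals the distribution produced by the first execution and its second marginal is bounded above by the distribution produced by the second execution, and that the pointwise distance bound is controlled by the accumulated privacy cost tracked in the coupled postcondition.

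For the base case ($\sigma$ empty), the trace semantics gives $\dunit(s_0)$ and $\dunit(s_0')$, the coupled postcondition is exactly $\{(s_0, s_0', 0)\}$, and taking $\mu_L = \mu_R = \dunit(s_0, s_0')$ immediately satisfies all three coupling conditions, since $f(s_0, s_0') \geq 0$. For the inductive step, write $\sigma = \stmt_1 \sigma'$ and split on the form of $\stmt_1$. When $\stmt_1$ is an assignment $v \gets \expr$, both executions update deterministically, so I compose the inductive witness with the deterministic update $s \mapsto \dunit(s[v \mapsto s(\expr)])$; the conditions transport cleanly via \Cref{l:comp-marg-supp}. When $\stmt_1 = \assume(\bexpr)$, the synchronizing hypothesis guarantees that for every pair $(s_1', s_2', -)$ in the current coupled postcondition, either both guards hold (and both executions continue) or both fail (and both executions produce the zero sub-distribution); in either case the witnesses restrict consistently with the definition of $\cpost_\strp$.

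The main obstacle is the sampling case $\stmt_1 = v \sim \dexpr$. Here, by definition of a coupling strategy, for each pair $(s_1', s_2', -)$ in the coupled postcondition after $\sigma'_{\text{prefix}}$ we have a coupling $s_1'(\dexpr) \aspace{\strp(\stmt_1, s_1', s_2')} s_2'(\dexpr)$ with its own witnesses $\nu_L^{s_1', s_2'}, \nu_R^{s_1', s_2'}$. I will combine these local witnesses with the inductive witnesses via $\dbind$, taking
\[
  \mu_L^{\text{new}} = \dbind\bigl(\mu_L, (s_1', s_2') \mapsto \nu_L^{s_1', s_2'} \circ \text{update}_v\bigr),
\]
and analogously for $\mu_R^{\text{new}}$, where $\text{update}_v$ rewrites the sampled pair $(a_1, a_2)$ into $(s_1'[v \mapsto a_1], s_2'[v \mapsto a_2])$. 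The marginal and support conditions follow by another application of \Cref{l:comp-marg-supp}, matching exactly the definition of $\cpost_\strp$ on a sampling statement. The distance condition is the delicate part: for each outcome pair, the new cost is the sum of the accumulated cost $c$ from $\mu_L/\mu_R$ and the local sampling cost $c'$ from the strategy, and $f$ must dominate this sum. This is guaranteed by the hypothesis on $f$ together with the form of $\cpost_\strp$ on sampling statements (which records exactly the summed cost $c + c'$).

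Finally, after finishing the three cases, I can conclude: any outcome pair $(s_1', s_2')$ in the final $\sigma(s_0), \sigma(s_0')$ supports is linked to some triple $(s_1', s_2', c) \in \Psi$ with $c \leq f(s_1', s_2')$, so the three coupling conditions hold with the assigned cost $f(s_1', s_2')$. This yields $\sigma(s_0) \aspace{\Psi_f} \sigma(s_0')$ and completes the induction. The only subtlety worth emphasizing in the write-up is that when several sampling outcomes collapse to the same final pair $(s_1', s_2')$, the cost $f(s_1', s_2')$ must dominate \emph{all} such contributions simultaneously — which is exactly why the hypothesis on $f$ is stated as a uniform upper bound over $\Psi$ rather than a pointwise equality.
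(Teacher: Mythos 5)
Your proposal follows the paper's proof essentially verbatim: induction on trace length, building the witness pair $(\mu_L,\mu_R)$ in lockstep with $\cpost_\strp$, and composing the inductive witness with the last statement's kernel via $\dbind$ and \cref{l:comp-marg-supp}. One slip to fix: you write ``$\sigma = \stmt_1\sigma'$'' and case on $\stmt_1$, which reads as peeling the \emph{first} statement, but your $\dbind(\mu_L,\,(s_1',s_2')\mapsto\ldots)$ formula unambiguously appends the new statement \emph{after} the inductive witness, i.e.\ you are peeling the \emph{last} statement, as the paper does --- and indeed peeling the first statement would break the induction, since after a sampling step the set of coupled pairs is no longer a singleton and the resulting pairs need not lie in $\adj$. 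The one genuinely missing step is the cost function you pass to the inductive hypothesis: you note that $f$ must dominate each accumulated sum, but the IH as stated needs a cost bound for the \emph{prefix}, and the paper obtains one by defining $\overline{f}(s_i,s_i')$ as an infimum of $f$ on the post-states minus the local sampling cost (and a supremum $\overline{g}$ for the local cost), so that $\overline{f}+\overline{g}\leq f$ pointwise and the distance-condition chain closes; this pull-back construction is what your sketch labels ``guaranteed by the hypothesis on $f$'' without spelling out.
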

\begin{proof}
  We will construct witnesses $\mu_L, \mu_R$ for the desired coupling by
  induction on the length of $\sigma$. In the base case $\sigma$ is the empty
  command and the claim is straightforward: $\mu_L = \mu_R \triangleq
  \dunit(s_0, s_0')$ witness the approximate coupling with cost $f = 0$.

  In the inductive case, $\sigma = \sigma' ; \stmt$ for a command $\stmt$. Let
  $\Psi \triangleq \cpost_\strp( \{ (s_0, s_0', 0) \}, \sigma)$ be the coupled
  post-condition for the whole trace, and let $\Theta \triangleq \cpost_\strp(
  \{ (s_0, s_0', 0) \}, \sigma)$ be the coupled postcondition for the prefix. We
  will write $\domain{\Psi}$ and $\domain{\Theta}$ for the coupled
  postconditions projected to the first two components.

  If the final statement $\stmt$ is an assignment $x \gets \expr$, define a cost
  function $\overline{f} : \states \times \states \to \mathds{R}$ by
  \[
    \overline{f}(s_i, s_i')
    \triangleq f(s_i[x \mapsto s_i(\expr)], s_i'[x \mapsto s_i'(\expr)]) .
  \]
  By definition of $\Psi$, we must have $c \leq \overline{f}(s_i, s_i')$ for
  every $(s_i, s_i', c) \in \Theta$. Hence by induction, we have an approximate
  coupling
  \[
    \sem{\sigma'} s_0 \aspace{\Psi_{\overline{f}}} \sem{\sigma'} s_0'
  \]
  where $\Psi_{\overline{f}} \triangleq \{ (s, s', \overline{f}(s, s')) \mid
  (s, s') \in \domain{\Theta} \}$. So every pair of inputs $(s_0, s_0') \in
  \Delta$ gives a pair of witness distributions $\mu_L', \mu_R'$, in particular
  they satisfy the marginal and support conditions.

  Now take any two states $(s_i, s_i') \in \domain{\Theta}$ and define
  \[
    w_L(s_i, s_i') = w_R(s_i, s_i')
    \triangleq \dunit(s_i[x \mapsto s_i(e)], s_i'[x \mapsto s_i'(e)]) .
  \]
  \cref{l:comp-marg-supp} shows that $\mu_L \triangleq \dbind(\mu_L', w_L)$
  and $\mu_R \triangleq \dbind(\mu_R', w_R)$ have support in $\domain{\Psi}$ and
  have marginals $\pi_1(\mu_L) = \sem{\sigma; x \gets \expr} s_0$ and $\pi_2(\mu_R)
  = \sem{\sigma; x \gets \expr} s_0'$. It only remains to check the distance
  conditions. By the distance condition on $\mu_L', \mu_R'$, and definition of
  $\overline{f}$, we have
  \begin{align*}
    \mu_L(s, s')
    &= \mu_L'(\{ (t, t') \in \domain{\domain{\Theta}} \mid (s, s') = \sem{x \gets \expr} (t, t') \}) \\
    &\leq \sum_{ (t, t') \in \domain{\Theta} : (s, s') = \sem{x \gets \expr} (t, t') }
    \exp(\overline{f}(t, t')) \cdot \mu_R'(t, t') \\
    &= \exp(f(s, s')) \cdot \mu_R'(\{ (t, t') \in \domain{\Theta} \mid (s, s') = \sem{x \gets \expr} (t, t') \}) \\
    &\leq \exp(f(s, s')) \cdot \mu_R'(\{ (t, t') \in \domain{\Theta} \mid (s, s') = \sem{x \gets \expr} (t, t') \}) \\
    &= \exp(f(s, s')) \cdot \mu_R(s, s') .
  \end{align*}
  Hence $\mu_L, \mu_R$ are witnesses to the desired approximate coupling
  \[
    \sem{\sigma'; x \gets \expr} s_0 \aspace{\Psi_f} \sem{\sigma'; x \gets \expr} s_0' .
  \]

  If the final statement $\stmt$ is an assume $\assume(\bexpr)$,  we know that
  $\domain{\Theta} \subseteq \{ (s, s') \mid s(\bexpr) = s'(\bexpr) \}$ since $\strp$ is a
  synchronizing strategy.  Define a cost function $\overline{f}$ by
  \[
    \overline{f}(s_i, s_i') =
    \begin{cases}
      f(s_i, s_i') &: s_i(\bexpr) \land s_i'(\bexpr) \\
      0 &: \text{otherwise} .
    \end{cases}
  \]
  We have $c \leq \overline{f}(s_i, s_i')$ for every $(s_i, s_i', c) \in
  \Theta$, by definition. By induction, we have an approximate coupling
  \[
    \sem{\sigma'} s_0 \aspace{\Psi_{\overline{f}}} \sem{\sigma'} s_0' .
  \]
  So every pair of inputs $(s_0, s_0') \in \Delta$ gives a pair of witness
  distributions $\mu_L', \mu_R'$.  Now take any two states $(s_i, s_i') \in
  \domain{\Theta}$ and define
  \[
    w_L(s_i, s_i') =
    \begin{cases}
      \dunit(s_i, s_i') &: s_i(\bexpr) \\
      \dzero            &: \text{otherwise}
    \end{cases}
    \quad\text{and}\quad
    w_R(s_i, s_i') =
    \begin{cases}
      \dunit(s_i, s_i') &: s_i'(\bexpr) \\
      \dzero            &: \text{otherwise} .
    \end{cases}
  \]
  \cref{l:comp-marg-supp} shows that $\mu_L \triangleq \dbind(\mu_L', w_L)$
  and $\mu_R \triangleq \dbind(\mu_R', w_R)$ have support in $\domain{\Psi}$ and
  have marginals $\pi_1(\mu_L) = \sem{\sigma'; \assume(\bexpr)} s_0$ and
  $\pi_2(\mu_R) = \sem{\sigma'; \assume(\bexpr)} s_0'$. We just need to check
  the distance condition. By the distance condition on $\mu_L', \mu_R'$, we have
  \begin{align*}
    \mu_L(s, s')
    &= \mu_L'(\{ (t, t') \in \domain{\Theta} \mid t(\bexpr) \}) \\
    &= \mu_L'(\{ (t, t') \in \domain{\Theta} \mid t(\bexpr) \land t'(\bexpr) \}) \\
    &\leq \sum_{ (t, t') \in \domain{\Theta} : t(\bexpr) \land t'(\bexpr) }
    \exp(\overline{f}(t, t')) \cdot \mu_R'(t, t') \\
    &= \exp(f(s, s')) \cdot \mu_R'(\{ (t, t') \in \domain{\Theta} \mid t(\bexpr) \land t'(\bexpr) \}) \\
    &= \exp(f(s, s')) \cdot \mu_R'(\{ (t, t') \in \domain{\Theta} \mid t'(\bexpr) \}) \\
    &\leq \exp(f(s, s')) \cdot \mu_R'(\{ (t, t') \in \domain{\Theta} \mid t'(\bexpr) \}) \\
    &= \exp(f(s, s')) \cdot \mu_R(s, s') .
  \end{align*}
  Hence, $\mu_L, \mu_R$ are witnesses to the desired approximate coupling:
  \[
    \sem{\sigma'; \assume(\bexpr)} s_0 \aspace{\Psi_f} \sem{\sigma'; \assume(\bexpr)} s_0' .
  \]

  If the final statement $\stmt$ is a sampling $x \sim \dexpr$, suppose that
  $\strp$ selects a coupling $\Lambda \subseteq \mathds{Z} \times \mathds{Z}
  \times \mathds{R}$ for the distributions. So for every pair of states $(s,
  s')$ we have
  \[
    s(\dexpr) \aspace{\Lambda} s'(\dexpr) .
  \]
  Note that $\Lambda$ can depend on the states $(s, s')$; we will write
  $\Lambda[s, s']$ to emphasize this.  Define a cost function $\overline{f}$ by
  \[
    \overline{f}(s_i, s_i') \triangleq
    \inf \{ f(s_i[x \mapsto a], s_i[x \mapsto a']) - c
            \mid (a, a', c) \in \Lambda[s_i, s_i'] \}
  \]
  for every $(s_i, s_i') \in \domain{\Theta}$, and $0$ otherwise.
  To give a coupling for the sampling command, we first extend the coupling
  $\Lambda[s, s']$ to whole memories by defining
  \[
    \overline{\Lambda}[s, s'] \triangleq
    \{ (s[x \mapsto a], s'[x \mapsto a'], c) \mid (a, a', c) \in \Lambda \}
  \]
  We then define a cost function $\overline{g}$ representing the cost of the
  coupling: for every $(s_i, s_i') \in \domain{\Theta}$ define
  \[
    \overline{g}(s_i[x \mapsto a], s_i'[x \mapsto a'])
    \triangleq \sup \{ c \mid (a, a', c) \in \Lambda[s_i, s_i'] \}
  \]
  for every $(a, a', c) \in \Lambda[s_i, s_i']$, and $0$ otherwise. Note that
  $\overline{g}$ has a dependence on $(s_i, s_i')$; we will sometimes write
  $\overline{g}[s_i, s_i']$ for clarity. Note also that the supremum on the
  right is at most $f(s_i[x \mapsto a], s_i'[x \mapsto a']) < \infty$. By the
  semantics of the sampling command, we have a coupling
  \[
    \sem{x \sim \dexpr} s_i \aspace{\overline{\Lambda}_{\overline{g}}} \sem{x \sim \dexpr} s_i' .
  \]
  for every $(s_i, s_i') \in \domain{\Theta}$.

  Additionally, $c \leq \overline{f}(s_i, s_i')$ for every $(s_i, s_i', c) \in
  \Theta$ by definition. Hence by induction, we have an approximate coupling
  \[
    \sem{\sigma'} s_0 \aspace{\Theta_{\overline{f}}} \sem{\sigma'} s_0' .
  \]
  So every pair of inputs $(s_0, s_0') \in \Delta$ gives a pair of witness
  distributions $\mu_L', \mu_R'$, in particular they satisfy the marginal and
  support conditions.

  By the definition, $\Theta$ contains all pairs of coupled states reachable
  from $(s_0, s_0')$ after running $\sigma'$, and $\Psi$ contains all pairs of
  coupled states reachable after running $\sigma' ; x \sim \dexpr$. Hence by
  definition of $\cpost_\strp$, we must have $\domain{\overline{\Lambda}[s_i,
    s_i']} \subseteq \domain{\Psi}$ for every $(s_i, s_i') \in \Theta$ and so:
  \[
    \sem{x \sim \dexpr} s_i \aspace{\Psi_{\overline{g}}} \sem{x \sim \dexpr} s_i' .
  \]
  Let $w_L(s_i, s_i'), w_R(s_i, s_i')$ be the left and the right witnesses,
  respectively. \cref{l:comp-marg-supp} shows that $\mu_L \triangleq
  \dbind(\mu_L', w_L)$ and $\mu_R \triangleq \dbind(\mu_R', w_R)$ have support
  in $\domain{\Psi}$ and have the correct marginals $\pi_1(\mu_L) =
  \sem{\sigma'; x \sim \dexpr} s_0$ and $\pi_2(\mu_R) = \sem{\sigma'; x \sim
    \dexpr} s_0'$.

  It only remains to check the distance conditions, and it is enough to check
  for final states $(s, s') \in \Psi$---otherwise by the support condition,
  $\mu_L(s, s') = 0$ and the distance condition is clear. By the
  distance condition on $\mu_L', \mu_R'$, we have:
  \begin{align*}
    \mu_L(s, s')
    &= \sum_{(s_i, s_i') \in \domain{\Theta}}
    \mu_L'(s_i, s_i') \cdot w_L(s_i, s_i')(s, s') \\
    &\leq \sum_{(s_i, s_i') \in \domain{\Theta}}
    \exp(\overline{f}(s_i, s_i') + \overline{g}[s_i, s_i'](s, s'))
    \cdot \mu_R'(s_i, s_i') \cdot w_R(s_i, s_i')(s, s') \\
    &\leq \sum_{(s_i, s_i') \in \domain{\Theta}}
    \exp(f(s, s') - \overline{g}[s_i, s_i'](s, s') + \overline{g}[s_i, s_i'](s, s'))
    \cdot \mu_R'(s_i, s_i') \cdot w_R(s_i, s_i')(s, s') \\
    &\leq \exp(f(s, s')) \sum_{(s_i, s_i') \in \domain{\Theta}} \mu_R'(s_i, s_i') \cdot w_R(s_i, s_i')(s, s') \\
    &= \exp(f(s, s')) \cdot \mu_R(s, s') .
  \end{align*}
  The second inequality holds because we may assume that $(s, s')$ are of the
  form $(s_i[x \mapsto a], s_i'[x \mapsto a'])$ with $(a, a') \in \Lambda[s_i,
  s_i']$; if not, $w_R(s_i, s_i')(s, s') = 0$ by the support condition on
  witnesses. Hence, we have the claimed approximate coupling:
  \[
    \sem{\sigma'; x \sim \dexpr} s_0
    \aspace{\Psi_f}
    \sem{\sigma'; x \sim \dexpr} s_0' .
  \]
  This completes the inductive step, and the proof.
\end{proof}

\subsection{Proofs for HMC encoding}
First, we define how we interpret a solution $\interp \models \clss$
 as a strategy $\{\strp_j\}_j$.
 For every pair of states $(s_1,s_2) \in S \times S$,
 every sampling statement $\stmt_e$,
 and every $j \in \dom_\retv$,
 we define $\strp_j$ as:
 \[
 \strp_j(\stmt_e, s_1, s_2) = \interp(\strategy_e(\vec{c},-,-,-))
 \]
 where $\vec{c}$ assigns each variable $v_i$ in $\vec{v}$
 the value $s_i(v)$, and assigns $\ret$ to $j$.
 By semantics of coupling constraints,
 any solution $\interp \models \clss$ results in a coupling strategy
 using the above construction

\begin{lemma}[Soundness of coupled invariants]\label{lemma:hmc}
  Fix $P$ and $\adj$.
  Let $\clss$ be the generated Horn clauses.
  Let $\interp \models \clss$.
  Let $\{\strp_j\}_j$ be the
  coupling strategy in $\interp$, as defined above.
  For every location $i$ in $P$,
   every trace $\trace$ from $\lentry$ to $i$,
   every $j \in \dom_\retv$,
  and every $(s_1,s_2,d) \in \cpost_{\strp_j}(\adj \times \{0\}, \trace)$,
  we have that the following formula is valid
  \[
  \left(\costvar = d \land \ret = j \land \bigwedge_{v \in \vars} s_1(v) = v_1 \land s_2(v) = v_2\right)
  \Rightarrow \interp(\inv_i(\vec{v},\costvar))
  \]
\end{lemma}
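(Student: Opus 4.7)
The approach is to induct on the length of the trace $\trace$, using each generated Horn clause from \cref{fig:hmc} to propagate the invariant one statement at a time. For the base case $\trace$ is empty and $i = \lentry$, so every triple in $\cpost_{\strp_j}(\adj \times \{0\}, \trace) = \adj \times \{0\}$ has the form $(s_1, s_2, 0)$ with $(s_1, s_2) \in \adj$. The clause generated by $\cinit$---namely $\enc{\adj} \land \costvar = 0 \to \inv_{\lentry}(\cvars, 0)$---is validated by $\interp$, and substituting the concrete values from $s_1, s_2$ satisfies its antecedent, delivering $\interp(\inv_{\lentry}(\cvars, 0))$ as required.

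For the inductive step, suppose the claim holds for a trace $\trace$ ending at $i$ and let $\trace' = \trace \mathbin{;} \stmt_e$ with $e = (i, k)$. I case-split on $\stmt_e$. For an assignment $v \gets \expr$, any $(s_1', s_2', d) \in \cpost_{\strp_j}(\adj \times \{0\}, \trace')$ arises from some $(s_1, s_2, d)$ in the previous postcondition with $s_\alpha' = s_\alpha[v \mapsto s_\alpha(\expr)]$; the induction hypothesis combined with the $\cassign$ clause yields $\interp(\inv_k(\cvars[v'], \costvar))$ on the updated values. The $\assume(\bexpr)$ case is analogous: $\cpost_\strp$ preserves only states where both guards evaluate to true, which is exactly the premise added by the $\casmconst$ clause. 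For sampling $v \sim \dexpr$, a triple $(s_1[v \mapsto a_1], s_2[v \mapsto a_2], d + c')$ arises from $(s_1, s_2, d)$ in the previous postcondition together with $(a_1, a_2, c') \in \strp_j(\stmt_e, s_1, s_2)$. By the definition of $\strp_j$ as a state-indexed slice of $\interp(\strategy_e)$, this is precisely the statement that $\interp(\strategy_e(\vec{c}, a_1, a_2, c'))$ holds, where $\vec{c}$ encodes the values from $s_1, s_2$ and sets $\ret = j$. The induction hypothesis gives $\interp(\inv_i(\vec{c}, d))$, and universally instantiating the $\cstrategy$ clause at $(a_1, a_2, c')$ delivers $\interp(\inv_k(\vec{c}[v'], d + c'))$.

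The main obstacle is the sampling case: one must carefully thread the construction of $\strp_j$ as a state-indexed slice of $\interp(\strategy_e)$ through the definition of $\cpost_\strp$, then realign it with the universally-quantified $\cstrategy$ clause. It is worth flagging that the coupling constraint generated by $\ccouple$ plays no role in this argument---its job is to guarantee that $\strp_j$ is a \emph{valid} coupling strategy in the sense of \cref{def:cpost}, a fact used separately in \cref{lem:constr-wcc} to invoke \cref{thm:winning}. Similarly, the $\casmsync$ clause contributes nothing to the soundness of the invariants themselves; it is only needed when one additionally wants the strategy to be synchronizing, which feeds into the winning-strategy soundness argument rather than the invariant-propagation argument at hand.
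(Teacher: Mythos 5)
Your proof is correct and follows essentially the same inductive strategy as the paper: base case via $\cinit$, then a case split on the final statement using $\cassign$, $\casmconst$, and $\cstrategy$ respectively, with the sampling case threading the definition of $\strp_j$ as a state-indexed slice of $\interp(\strategy_e)$. Your organization is somewhat cleaner than the paper's (a standard one-step inductive extension rather than separately spelling out $n=0$, $n=1$, and then gesturing at the general step), and your closing remark correctly identifies that $\ccouple$ only ensures $\strp_j$ is a well-formed coupling strategy (a hypothesis of the lemma, established before it) and $\casmsync$ only feeds the synchronizing-strategy requirement used later in \cref{thm:hmc}, neither of which participates in the invariant-propagation induction itself.
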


\begin{proof}
We proceed by induction on length $n$ of $\trace$.

\textbf{Base case:}
For $n = 0$ (empty trace),
the constraint above holds
trivially by rule $\cinit$.

For $n = 1$,
we consider 3 different cases, one for each statement type.

Assume $\trace$ is an assignment statement $v \gets \expr$,
then we know that $(s_1, s_2, d) \in \adj \times \{0\}$
iff $(s_1[v \mapsto s_1(\expr)], s_2[v \mapsto s_2(\expr)],d) \in \cpost(\adj \times \{0\}, \trace)$.
By definition of $\cassign$,
we know that
the following is valid:
 \[\interp(\inv_\lentry(\vec{v},\costvar))
\land v_1' = \expr_1 \land v_2' = \expr_2 \longrightarrow
\interp(\inv_i(\vec{v}[v],\costvar))\]
By base case $n=0$,
we know that the following is valid:
\[
\left(\costvar = d \land \ret = j \land \bigwedge_{v \in \vars} s_1(v) = v_1 \land s_2(v) = v_2\right)
\Rightarrow \interp(\inv_\lentry(\vec{v},\costvar))
\]
Therefore,
\[
\left(\costvar = d \land \ret = j \land \bigwedge_{v \in \vars} s_1[v \mapsto s_1(\expr)](v) = v_1 \land s_2[v \mapsto s_2(\expr)](v) = v_2\right)
\Rightarrow \interp(\inv_i(\vec{v},\costvar))
\]

Now, assume $\trace$ is $\texttt{assume}(\bexpr)$.
By definition of $\cpost$, for every
$(s_1, s_2, d) \in \adj \times \{0\}$,
$(s_1, s_2, d) \in \cpost(\adj \times \{0\}, \texttt{assume}(\bexpr))$
iff $s_1(\bexpr) = s_2(\bexpr) = \true$.
By $n=0$ and the definition of rule $\casmconst$,
we have that the following is valid:
\[
\left(\costvar = d \land \ret = j \land \bigwedge_{v \in \vars} s_1(v) = v_1 \land s_2(v) = v_2\right) \land \bexpr_1 \land \bexpr_2
\Rightarrow \interp(\inv_i(\vec{v},\costvar))
\]
Therefore, if $(s_1,s_2,d)$ is such that $s_1(\bexpr) = s_2(\bexpr) = \true$,
then the left hand side of the above formula is satisfied.

Now, assume $\trace$ is $v \sim \dexpr$.
By definition of $\cpost$,
we know that for every $(s_1,s_2,d) \in \adj \times 0$,
and every $(c_1,c_2,d') \in \strp_{j}(v \sim \dexpr, s_1,s_2)$,
we have $(s_1[v \mapsto c_1] , s_2[v \mapsto c_2], d + d') \in \cpost(\adj \times 0, \trace)$.
By definition of $\strp_j$ using $\interp$ (as defined above)
and the definition of rule $\cstrategy$,
we have that the following holds:

\[
\left(\costvar = d + d' \land \ret = j \land \bigwedge_{v \in \vars} s_1[v \mapsto c_1](v) = v_1 \land s_2[v \mapsto c_2](v) = v_2\right)
\Rightarrow \interp(\inv_i(\vec{v},\costvar))
\]

\textbf{Inductive step:}
Assuming the lemma holds for $n > 1$,
then establishing that it holds for $n+1$
follows an analogous case split by statement as for the base case.
\end{proof}

We can now easily prove \cref{thm:hmc}.
\begin{proof}
  Fix $P$ and $\adj$.
  Let $\clss$ be the generated Horn clauses.
  Let $\interp \models \clss$.
  Let $\{\strp_j\}_j$ be the
  coupling strategy in $\interp(\strategy_e)$,
  for every $e$ where $\stmt_e$ is a sampling statement.
  For
   every maximal trace $\trace$,
   every $j \in \dom_\retv$,
  and every $(s_1,s_2,d) \in \cpost_{\strp_j}(\adj \times \{0\}, \trace)$,
  we know from \cref{lemma:hmc} that
  \[
  \left(\costvar = d \land \ret = j \land \bigwedge_{v \in \vars} s_1(v) = v_1 \land s_2(v) = v_2\right)
  \Rightarrow \interp(\inv_\lexit(\vec{v},\costvar))
  \]
  We also know by rule \cdp{} that
  $\interp(\inv_\lexit(\vec{v},\costvar)) \longrightarrow
  \costvar \leq \eps \land (\retva = \ret \Rightarrow \retvb = \ret)$
  is valid.
Therefore,
$d\leq \eps$ and, if $s_1(\retv) =j$, then $s_2(\retv) = j$.

This establishes \edp, assuming $\{\strp_j\}_j$
are synchronizing.
We now show that strategy is synchronizing.
Fix some statement $\texttt{assume}(\bexpr)$
on edge $(i,j)$.
By rule $\casmsync$, we have that
the following is valid:
$\interp(\inv_i(\vec{v},\costvar)) \to \bexpr_1 \equiv \bexpr_2$.
Therefore, following \cref{lemma:hmc}, for any trace $\trace$
from $\lentry$ to $i$,
every $j$ and every  $(s_1,s_2,d) \in \cpost_{\strp_j}(\adj \times \{0\}, \trace)$,
we have $s_1(\bexpr) = s_2(\bexpr)$.
\end{proof}

We now prove correctness of \cref{thm:trans},
which states correctness of transforming coupling
constraints into clauses.

\begin{proof}
  Fix a coupling constraint in $\clss$:
  \[
    M(\vec{v}_1) \aspace{\rel(\vec{v}_3,-,-,-)} M(\vec{v}_2)
  \]
  Let $\interp \models \clss'$.
  Suppose that $\interp$ does not satisfy the coupling constraint.
  Then, there is a value of $\vec{c}_{1,2,3}$ of $\cvars_{1,2,3}$
  such that
  $\interp{(\rel(\vec{c}_3,-,-,-))}$ is not a coupling
  for distributions $M(\vec{c}_1)$ and $M(\vec{c}_2)$.

  By \cref{def:trans}, we know that $\interp(\rel(\cvars,d_1,d_2,\theta))$
  is a formula of the form
\[
\left(\bigwedge_{i=1}^n f(\cvars) = i \implies \trans(\Lambda^i[\cvars_1,\cvars_2])\right)  \]
Then, for some $i \in [1,n]$,
the relation over $(d_1,d_2,\theta)$
denoted by the formula $\Lambda[\vec{c}_1,\vec{c}_2]$
is \rone not a coupling for $M(\vec{c}_1)$ and $M(\vec{c}_2)$,
or \rtwo $\pre_{\Lambda^i}[\vec{c}_1,\vec{c}_2]$
does not hold.
To show that neither case is true,
we prove correctness of $\trans$ in \cref{fig:enc}.
We consider the following cases
of $\Lambda^i$:
\begin{itemize}
  \item \textbf{Null $\Lambda_\varnothing[z_1,z_2]$:}
  For every instantiation $c_1,c_2$ of
  $z_1,z_2$, we have the formula $d_1 - c_1 = d_2 - c_2 \land \theta = 0$
  whose satisfying assignments are the ternary relation
  $\{(d_1, d_2, 0) \mid d_1,d_2 \in \mathds{Z}, d_1 -c_1 = d_2 - c_2\}$,
  recovering the definition of a null coupling.

  \item \textbf{Shift (Lap) $\Lambda_{+k}[z_1,z_2]$:}
  For every instantiation $c_1,c_2$ of
  $z_1,z_2$, we have the formula $d_1 + \interp(f_k) = d_2 \land \theta = |c_1-c_2 +\interp(f_k)|\eps$, where $\interp(f_k)$ is an integer.
  The satisfying assignments of this formula
  are
  \[
  \{(d_1, d_2, |c_1-c_2 +\interp(f_k)|\eps) \mid d_1 + \interp(f_k) = d_2, d_1,d_2 \in \mathds{Z}\}
  \]
  This recovers the definition of a shift coupling for $\lap$.

  \item \textbf{Shift (Exp) $\Lambda_{+k}'[z_1,z_2]$:}
  For every instantiation $c_1,c_2$ of
  $z_1,z_2$ such that $c_2 - c_1 \leq \interp(f_k)$,
  we have a similar argument as above for $\lap$.
  By the precondition constraint specified
  in \cref{def:trans},
  we know that if $f(\vec{v}) = i$, then
  $z_2 - z_1 \leq k$.

  \item \textbf{Choice $(\Pi \mathrel{?} \Lambda_{+k}[z_1,z_2]: \Lambda_\varnothing[z_1,z_2])$:}
  For every instantiation $c_1,c_2$ of $z_1,z_2$,
  we have the formula
  \[
    \bigwedge
    \left\{
      \begin{array}{ll}
        \interp(\Pi(d_1) &\Rightarrow  (d_1 - c_1 = d_2 - c_2 \land \theta = 0)
        \\
        \neg \interp(\Pi(d_1)) &\Rightarrow (d_1 + \interp(f_k) = d_2 \land \theta = |c_1-c_2 +\interp(f_k)|\eps)
      \end{array}
    \right.
    \]
    which according to our cases above,
    and \cref{lem:combo-couple},
    is a coupling assuming the non-overlapping condition.
    By the precondition constraint specified
    in \cref{def:trans},
    we know that if $f(\vec{v}) = i$, then
    for every $d_1,d_2,d_1',d_2'$,
    the following formula holds:

    \[
      \bigwedge
      \left\{
        \begin{array}{ll}
          \interp(\Pi(d_1)) &\Rightarrow d_1 + \interp(f_k) = d_2 \land \theta = |c_1-c_2 +\interp(f_k)|\eps
          \\
          \neg \interp(\Pi(d_1')) &\Rightarrow d_1' - c_1 = d_2' - c_2 \land \theta' = 0
        \end{array}
      \right\}
      \Longrightarrow
      d_2' \neq d_2
    \]
    The first conjunct encodes the shift coupling (as shown above)
    and the second conjunct the null coupling (also shown above).

  This implies the non-overlapping condition:
  for $a_1 \in \Pi$, $a_1' \not\in \Pi$,
   every $(a_1, a_2, -) \in \Lambda_{+k}[c_1,c_2]$ and
   every $(a_1', a_2', -) \in \Lambda_{+k}[c_1,c_2]$,
   we have $a_2 \neq a_2'$.
   \qedhere
\end{itemize}
\end{proof}

\section{Implementation details}

\subsection{Dealing with non-linearity}
In \cref{sec:evaluation}, we noted that our tool employs an optimization wherein
it fixes the parameter $\eps$ to 1, and that suffices
to show differential privacy for all $\eps > 0$.
We now formally show correctness of this optimization.

Our optimization hinges on three observations:
\rone In all algorithms in the literature,
$\eps$ only appears in the form $\iexpr \cdot\eps$
in sampling statements, e.g., $\lap_{\iexpr\cdot\eps}(\cdot)$
or $\olap_{\iexpr\cdot\eps}(\cdot)$.
Here, $\iexpr > 0$ is assumed to be a positive expression
over input program variables.
\rtwo In all couplings we utilize,
the cost is a multiple of $\iexpr\cdot\eps$.
\rthree The choice of a coupling family and its parameters
in our coupling strategies does not depend
on the value of $\eps$.
Therefore, for any coupled
execution from a pair of adjacent states,
and for any $\eps$,
the final cost is of the form $\sum_{i=1}^n c_i\eps$,
where there are $n$ sampling statements encountered in the coupled
execution and $c_i \in \mathds{R}^{\geq 0}$.
Since our goal is to show that $\sum_{i=1}^n c_i\eps \leq \eps$,
it suffices to check that $\sum_{i=1}^n c_i \leq 1$.

The following Lemma formalizes
the our observations,
under the assumptions stated above.

\begin{lemma}\label{lemma:nonlinear}
  Fix program $\prog$ and adjacency relation $\adj$.
  Let $\trace$ be a maximal trace through $\prog$
  and $\strp$ be coupling strategy for $P$.
  Let $Q_c = \cpost_\strp(\adj_c \times \{0\}, \trace)$,
  for $\adj_c$ is $\adj$ with
   $\eps$ fixed to  $c \in \mathds{R}^{\geq 0}$
   in all pairs of states.
  Then,
  \begin{enumerate}
  \item for all $c>0$,
   $(s_1,s_2,-) \in Q_c$
   iff $(s_1,s_2,-) \in Q_1$;
  \item for all $c > 0$,
  if $(s_1,s_2,d) \in Q_c$,
  then $(s_1,s_2,d/c) \in Q_1$; and
  \item for all $c> 0$, if $(s_1,s_2,d) \in Q_1$,
  then $(s_1,s_2,d\cdot c) \in Q_c$.
\end{enumerate}
\end{lemma}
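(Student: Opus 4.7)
The plan is to prove all three parts simultaneously by induction on the length $n$ of the trace $\trace$, leveraging the three observations stated just before the lemma: (i) $\eps$ appears only in the scale parameter of sampling statements; (ii) all coupling costs produced by the strategies we consider are multiples of the associated $\iexpr\cdot\eps$; and (iii) the choice of coupling family and its non-$\eps$ parameters is independent of $\eps$. Throughout, I will identify states in $Q_c$ and $Q_1$ via the relabeling map $\phi_c : \states \to \states$ that sends $\eps \mapsto c$ but leaves all other variables untouched; then parts (1)--(3) amount to saying that $\cpost_\strp$ commutes with $\phi_c$ on state components while scaling the cost component linearly in $c$.

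For the base case $n = 0$, we have $Q_c = \adj_c \times \{0\}$. By the standing assumption that $\adj$ is closed under consistent relabeling of $\eps$, the pairs in $\adj_c$ and $\adj_1$ coincide once projected away from $\eps$, and since all costs are $0$, parts (1)--(3) hold trivially.

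For the inductive step I will case-split on the last statement $\stmt$ of the trace. The assignment case $v \gets \expr$ and the assume case $\assume(\bexpr)$ are routine: since $\eps$ does not occur in $\expr$ or $\bexpr$ by observation (i), the updates and filtering applied by $\cpost_\strp$ act identically on corresponding states from $Q_c$ and $Q_1$, and the accumulated cost is untouched, so the inductive hypothesis propagates verbatim. The sampling case $v \sim \dexpr$ is where the content lies. Using observation (iii), for every pair of states $(s_1,s_2)$ reachable at this program point the strategy $\strp(\stmt, s_1, s_2)$ selects the same coupling family with the same non-$\eps$ parameters in both worlds, so the domain pairs $(a_1, a_2)$ of coupled samples match. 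Using observation (ii), the cost attached to each such pair is of the form $k \cdot \iexpr(s_1,s_2) \cdot \eps$ for some $k \in \mathds{R}^{\geq 0}$ and some $\iexpr$ not mentioning $\eps$; thus in $Q_c$ this increment is $k \cdot \iexpr \cdot c$ while in $Q_1$ it is $k \cdot \iexpr$, which is exactly the scaling required. Combining with the inductive hypothesis gives (2) and (3), with (1) following as a corollary.

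The main obstacle I anticipate is not conceptual but bookkeeping: the exact correspondence between $Q_c$ and $Q_1$ must be tracked carefully through the recursive definition of $\cpost_\strp$, since $\cpost_\strp$ in the sampling case takes a supremum over all samples the coupling admits and the strategy is allowed a priori to return different couplings for $\phi_c(s_1,s_2)$ and $(s_1,s_2)$. Ruling this out is precisely the role of observation (iii); making it formal will require a small auxiliary lemma stating that $\strp$ factors through $\phi_c$ on its state arguments, which is a modelling commitment captured by the three observations. Once that is in hand, the induction goes through smoothly.
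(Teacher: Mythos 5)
Your proposal is correct and follows essentially the same route as the paper: both arguments rest on the three observations about $\eps$ appearing only in scale parameters, coupling costs being multiples of $\iexpr \cdot \eps$, and coupling choices being $\eps$-independent. The paper's proof is a compressed sketch that argues directly on the form of the final accumulated cost $\sum_{i=1}^{n} a_i c$; you instead spell out the induction on trace length with a case-split per statement, which is the scaffolding the paper's ``by definition of $\cpost_\strp$'' step is implicitly invoking.

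One thing worth noting in your favor: the lemma as stated says $(s_1,s_2,-) \in Q_c$ iff $(s_1,s_2,-) \in Q_1$, which read literally cannot hold since $s_1(\eps) = c$ on the left and $s_1(\eps) = 1$ on the right. The paper's proof glosses over this (``all paired states in $Q_c$ and $Q_1$ are the same''), whereas your introduction of the relabeling map $\phi_c$ and the phrasing ``commutes with $\phi_c$ on state components while scaling the cost component'' is the more precise formulation and is the right reading of what the lemma intends. The auxiliary lemma you flag---that $\strp$ factors through $\phi_c$ on its state arguments---is exactly the formalization of observation (iii), and acknowledging it as a modelling commitment rather than something provable from the definitions alone is appropriate, since Definition 4.1 places no such constraint on arbitrary coupling strategies; the restriction is imposed by the implementation's choice of coupling families.
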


Point 1 states how states are paired does not change
by changing $\eps$.
Points 2 and 3 state that the final cost
for a paired execution is a multiple of $\eps$.

\begin{proof}
\emph{Point 1} immediately follows from the fact
that the choice of coupling in $\tau$ does not
depend on $\eps$, and that $\eps$ does not
appear other than in sampling statements.
This ensures that all paired states in $Q_c$ and $Q_1$
are the same.

\emph{Point 2}
Suppose that $(s_1,s_2,d) \in Q_c$.
By definition of $\cpost_\tau$,
shift and null couplings,
and our assumption that all sampling statements
are of the form $\lap_\iexpr\eps(-)$ or $\olap_{\iexpr\eps}(-)$,
then we know that $d$ is of the form
$\sum_{i=1}^n a_i c$,
where $a_i \in \mathds{Z}$.
By Point 1 and the assumption that $\eps$ does not affect
the choice of coupling,
we know that $(s_1,s_2,\sum_{i=1}^n a_i) \in Q_1$.

\emph{Point 3} follows analogously.
\end{proof}

The following theorem immediately follows from the above lemma
and \cref{thm:winning};
it shows that if we find a winning coupling strategy
for $\eps = 1$, then we prove \edp.

\begin{theorem}\label{thm:nonlinear}
  Fix program $\prog$ and adjacency relation $\adj$.
  Suppose that we have a family of synchronizing coupling strategies $\{ \strp_j
  \}_{j \in \dom_\retv}$ such that for every trace $\trace \in \traces(P)$, we
  have
  \[
  \cpost_{\strp_j} (\adj_1 \times \{0\},\trace) \subseteq
  \{(s_1',s_2',c) \mid c \leq 1 \land s_1'(\retv) = j \Rightarrow s_2'(\retv) = j\},
  \]
  where $\adj_1$ is as defined in \cref{lemma:nonlinear}.
  Then, $\prog$ is $\eps$-differentially private.
\end{theorem}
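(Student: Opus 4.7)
The plan is to reduce this to Theorem \ref{thm:winning} by using Lemma \ref{lemma:nonlinear} to transfer the hypothesis about $\adj_1$ to the corresponding statement about $\adj_c$ for every positive $c$. Since Theorem \ref{thm:winning} requires the coupled postcondition to be bounded by $s_1(\eps)$ on \emph{every} adjacent pair (not just pairs where $\eps = 1$), I need to show that establishing the bound at $\eps = 1$ is enough to establish it uniformly.

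First I would fix an arbitrary trace $\trace \in \traces(\prog)$, an arbitrary output value $j \in \dom_\retv$, and an arbitrary pair $(s_1, s_2) \in \adj$ with $s_1(\eps) = s_2(\eps) = c > 0$. Let $(s_1', s_2', d) \in \cpost_{\strp_j}(\adj_c \times \{0\}, \trace)$ be an arbitrary element of the coupled postcondition along this trace starting from $\adj_c$. By Point 2 of Lemma \ref{lemma:nonlinear}, we have $(s_1', s_2', d/c) \in \cpost_{\strp_j}(\adj_1 \times \{0\}, \trace)$. The hypothesis of the theorem applied to this element yields $d/c \leq 1$ and the implication $s_1'(\retv) = j \Rightarrow s_2'(\retv) = j$. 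Multiplying through gives $d \leq c = s_1(\eps)$, which is exactly the bound required by Theorem \ref{thm:winning}.

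Since this argument works for every trace and every output value, the family $\{\strp_j\}_{j \in \dom_\retv}$ satisfies the hypothesis of Theorem \ref{thm:winning}, and the strategies are synchronizing by assumption. Observe that the synchronization property also transfers: the set of state pairs reached at each location is the same under $\adj_1$ and $\adj_c$ by Point 1 of Lemma \ref{lemma:nonlinear}, and synchronization is a condition purely on the state pairs (independent of the attached cost). Applying Theorem \ref{thm:winning} concludes that $\prog$ is $\eps$-differentially private.

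The only subtle point is making sure that Lemma \ref{lemma:nonlinear} genuinely applies, which requires that the strategies $\strp_j$ do not depend on the value of $\eps$ in the state and that every coupling cost introduced scales linearly in $\eps$. These are exactly the structural assumptions already argued in the lemma's proof, so no additional obstacle arises; the main work is just threading the lemma's three points through the statement of Theorem \ref{thm:winning}.
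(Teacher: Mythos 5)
Your proposal is correct and follows the same route the paper intends. The paper does not actually write out this proof---it merely asserts that the theorem ``immediately follows'' from \cref{lemma:nonlinear} and \cref{thm:winning}---and your write-up is a faithful unpacking of that one-liner: use Point~2 of \cref{lemma:nonlinear} to rescale the cost from $\adj_c$ to $\adj_1$, invoke the hypothesis at $\eps=1$, multiply back by $c$, and use Point~1 to transfer synchronization and the return-value implication. The one step you could make slightly more precise is the phrase ``the same under $\adj_1$ and $\adj_c$'': the terminal states in $Q_c$ and $Q_1$ literally differ in their $\eps$ component, so the correspondence in Points~1--3 of \cref{lemma:nonlinear} should be read as identifying states up to the value of $\eps$ (which is a read-only input and does not affect $\retv$ or the truth of any boolean guard in our examples). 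You also correctly flag the hidden structural hypotheses on the strategies (independence from $\eps$, and cost linear in $\eps$) that \cref{lemma:nonlinear} silently assumes; the theorem statement inherits them.
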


\subsection{Algorithmic details}
We now present more details details on
the algorithm we used in our implementation
for solving Horn clauses with uninterpreted functions.
The functions \textsc{synth} and \textsc{verify} are treated
as black-box external solvers that we assume we have access to.
The primary piece we describe here is the unrolling
of clauses $\clss$ into a non-recursive set $\clss^n$,
and encoding them for the synthesizer $\textsc{synth}$.

\paragraph{Unrolling.}
The unrolling/unwinding process of clauses
$\clss$ to $\clss^n$ is a standard
procedure that follows, for example,
the algorithm of~\citet{mcmillan2013solving}.
We assume there is a single clause $Q \in \clss$
whose head is not a relation application---we call it
the \emph{query} clause.
The unrolling starts with clause $Q$,
and creates a fresh copy of every clause
whose head is a relation that appears in the body of $Q$,
and so on, recursively, up to depth $n$.
This is analogous to unrolling program loops $n$ times.

The procedure \textsc{unroll} is shown
in \cref{alg:unroll}.
We call it with $\textsc{unroll}(\clss,Q,n)$,
where $n > 0$.
We use $\cls^n$ to denote the clause $\cls$
but with head relation $r$ tagged with number $n+1$, i.e.,  $r^{n+1}$,
and with body relations tagged with number $n$.
These are new relation symbols resulting from the unrolling process.

\begin{algorithm}
  \begin{algorithmic}
    \Function{unroll}{$\clss, \cls, n$}
      \If {$n=0$} \Return $\emptyset$ \EndIf
      \State let $B = \{ r \mid r(\vec{v}) \text{ in body of } \cls\}$
      \State let $\mathcal{U} = \{\cls \mid \cls \in \clss \text{ with relation in head is in set } B\}$
      \State let $\clss' = \{\cls^n\} \cup \bigcup_{U\in \mathcal{U}} \textsc{unroll}(\clss,U,n-1)$
      \State \Return $C'$
    \EndFunction
  \end{algorithmic}
  \caption{Unrolling procedure}\label{alg:unroll}
\end{algorithm}

\begin{example}
  Consider the following clauses:
  \begin{align*}
    r(x) \longrightarrow x > 0\\
    r(x) \land f(x) = x' \longrightarrow r(x')
  \end{align*}
  If we unroll these clauses up to $n = 3$, we get the following:
  \begin{align*}
    r^3(x) \longrightarrow x > 0\\
    r^2(x) \land f(x) = x' \longrightarrow r^3(x')\\
    r^1(x) \land f(x) = x' \longrightarrow r^2(x')\\
  \end{align*}
  Notice that the clauses resulting from unrolling are non-recursive.
\end{example}

\paragraph{Synthesis.}
Once we have non-recursive clauses $\clss$,
we can construct a formula of the form
$$\exists f_1, \ldots, f_n \ldotp \exists r_1, \ldots, r_m \ldotp \bigwedge_{C \in \clss} \cls$$
where
 $f_i$ and $r_i$ are the function and relation symbols appearing in the unrolled clauses.

Since $\clss$ are non-recursive,
we can rewrite them to remove the relation
symbols.
This is a standard encoding in Horn clause solving
 that is analogous
to encoding a loop-free program as a formula,
as in VC generation and bounded model checking,
and we do not detail it here.
We thus transform $\bigwedge_{C \in \clss} \cls$
into a formula of the form $\forall \vec{x} \ldotp \varphi$.
(Recall that $\clss$ are universally quantified.)

At this point, we have a problem of the form
$$\exists f_1, \ldots, f_n \ldotp \forall \vec{x} \ldotp \varphi$$
which can be solved using program synthesis algorithms.

\paragraph{Soundness.}
The soundness of the algorithm \textsc{solve}, in \cref{ssec:solving},
follows trivially from the soundness of the \textsc{verify}
procedure.
In other words, no matter what \textsc{synth}
returns, assuming \textsc{verify} is sound,
\textsc{solve} will be sound.
This is formalized in the following theorem.

\begin{theorem}
  Let $\clss$ be a set of Horn clauses
  with potentially uninterpreted functions.
  Assume that, given a set of clauses $\clss'$
  with no uninterpreted functions, if $\textsc{verify}(\clss')$
  return \abr{SAT}, then $\clss'$ is satisfiable.
  Then, if $\textsc{solve}(\clss)$ returns \abr{SAT},
  $\clss$ are satisfiable.
\end{theorem}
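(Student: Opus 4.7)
The plan is to trace the only path by which \textsc{solve} can return \abr{SAT} and observe that at that point we already possess a full interpretation of both the uninterpreted functions and the uninterpreted relations that satisfies the original clauses $\clss$. Concretely, inspecting the pseudocode of \textsc{solve}, the sole \textbf{return} \abr{SAT} statement sits inside the loop body, immediately after a call $\emph{res} \gets \textsc{verify}(\interp^F\clss)$ has returned \abr{SAT}. So at return time there exists an interpretation $\interp^F$ of the uninterpreted function symbols $F = \{f_1,\dots,f_n\}$ appearing in $\clss$ (produced by the synthesizer) such that $\interp^F\clss$ contains no uninterpreted functions.

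The key step will be to apply the stated assumption about \textsc{verify} to the set $\clss' \triangleq \interp^F\clss$. By hypothesis, since $\textsc{verify}(\clss')$ returned \abr{SAT}, the set $\clss'$ is satisfiable. Let $\interp^R$ be a witnessing interpretation that maps each relation symbol $r_i$ occurring in $\clss'$ to a relation so that every $\cls \in \clss'$, after replacing each application $r_i(\vec{v})$ by $\interp^R(r_i(\vec{v}))$, is valid. Note that $\clss'$ and $\clss$ have exactly the same relation symbols, since our transformation from $\clss$ to $\clss'$ only substituted out function symbols.

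I would then combine the two interpretations into a single interpretation $\interp \triangleq \interp^F \cup \interp^R$ over the full signature (relations $\rels$ and functions $F$) appearing in $\clss$, and verify that $\interp \models \clss$. For any $\cls \in \clss$, the formula $\interp\cls$ is obtained by substituting each function application $f_i(\vec{v})$ by $\interp^F(f_i(\vec{v}))$ and each relation application $r_i(\vec{v})$ by $\interp^R(r_i(\vec{v}))$; but this is exactly $\interp^R(\interp^F \cls)$, which is valid because $\interp^F \cls \in \clss'$ and $\interp^R \models \clss'$. Hence $\interp \models \clss$, so $\clss$ is satisfiable.

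The argument is essentially bookkeeping, so there is no serious obstacle; the only subtlety to be careful about is that the unrolling step producing $\clss^n$ (used inside \textsc{synth}) plays no role in the soundness argument---it only affects completeness and the quality of the candidate $\interp^F$. The proof must therefore not appeal to any property of $\clss^n$, but only to what \textsc{verify} certifies about $\interp^F\clss$ itself, which is exactly how the algorithm is structured.
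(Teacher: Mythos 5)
Your proposal is correct and takes essentially the same approach as the paper, which does not spell out the argument but merely notes that soundness ``follows trivially from the soundness of the \textsc{verify} procedure'' regardless of what \textsc{synth} returns; you have simply made explicit the bookkeeping step of combining $\interp^F$ with the relation interpretation certified by \textsc{verify}.
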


%

}{}

\end{document}